 	\definecolor{carblue}{rgb}{0.2, 0.30, 5}
\newcommand{\dif}{\mathrm{d}}
\newcommand{\lr}[1]{\left( #1 \right)}
\newcommand{\lrbrace}[1]{\left\lbrace #1 \right\rbrace}
\newcommand{\lrbrkt}[1]{\left[ #1 \right]}
\newcommand{\tilg}{\widetilde{g}}
\newcommand{\Om}{\Omega}
\newcommand{\obs}{\mathcal{O}}
\newcommand{\nor}{u}
\newcommand{\tlt}{\mathring{t}}
\newcommand{\scri}{\mathscr{I}}
\newcommand{\con}{\kappa}
\newcommand{\Y}{\xi} 
\newcommand{\skwend}[1]{\mathrm{SkewEnd}(#1)} 
\newcommand{\m}{{p_a}}
\newcommand{\mink}[1]{{\mathbb{M}^{#1}}} 
\def\a{\mathfrak{a}}
\def\b{\mathfrak{b}}
\def\A{A}
\def\B{B}
\def\el{l}
\def\Ap{\mathcal{P}}
\def\barg{\overline{g}}
\def\Yv{\Y}
\def\Yf{\bm{\Y}}
\newcommand{\spn}[1]{\mathrm{span}\{ #1 \}}
\newtheorem{theorem}{Theorem}[section]
\newtheorem{proposition}{Proposition}[section]
\newtheorem{corollary}{Corollary}[theorem]
\newtheorem{lemma}{Lemma}[section]
\newtheorem{remark}{Remark}[section]
\newtheorem{definition}{Definition}[section]
\def\m{\alpha}
\def\ar{\mathrm{a}}
\def\br{\mathrm{b}}
 \title{Free data at spacelike $\scri$ and  characterization of Kerr-de Sitter in all dimensions.}
\author{
  Marc Mars and Carlos Pe\'on-Nieto\\
  Instituto de F\'{\i}sica  Fundamental y Matem\'aticas, Universidad de Salamanca \\
Plaza de la Merced s/n 37008, Salamanca, Spain}
\begin{document}

\maketitle
    
\begin{abstract} 
{
  We study the free data in the Fefferman-Graham expansion of asymptotically Einstein $(n+1)$-dimensional metrics with non-zero cosmological constant. {We analyze the relation between the electric part of the rescaled Weyl tensor at $\scri$, $D$, and the free data at $\scri$, namely a certain traceless and transverse part of the $n$-th order coefficient of the expansion $\mathring g_{(n)}$. In the case
  $\Lambda<0$ and Lorentzian signature, it was known \cite{Holl05} that
  conformal flatness at $\scri$ is sufficient for} $D$ and $\mathring g_{(n)}$ to agree up to a universal constant.   We recover {and extend this}  result to general signature and any sign of non-zero $\Lambda$. {We then explore whether conformal flatness of $\scri$ is also neceesary and link this to the validity
  of long-standing open} conjecture that no {non-trivial} purely magnetic $\Lambda$-vacuum spacetimes {exist}. In the case of $\scri$
non-conformally flat we determine a quantity constructed from an
auxiliary metric which can be used to retrieve $\mathring g_{(n)}$ from the
(now singular) electric part of the Weyl tensor.
We then concentrate in the $\Lambda>0$ case where 
the Cauchy problem at $\scri$ of the Einstein vacuum field equations is known to be well-posed when the data at $\scri$ are analytic or when
the spacetime has even dimension. 
We establish a necessary and sufficient condition for analytic data at $\scri$ to generate spacetimes with symmetries in all dimensions. These results are used to find a geometric characterization of the Kerr-de Sitter metrics in all dimensions in terms of its geometric data at null infinity.
}
    \end{abstract}

 \section{Introduction} 
  
        
  The works of R. Penrose in the 1960s  \cite{penrose63},\cite{penrose64}, \cite{penrose65} pioneered the use conformal techniques for the analysis of global properties of solutions of the Einstein equations. Eversince, conformal geometry has been widely used in general relativity. For certain conformal transformations $g = \Omega^2 \tilg$, where $\Om$ is a smooth positive function nowhere vanishing in $\widetilde M$, one can extend $g$ to a manifold with boundary $M = \widetilde M \cup \partial M $ so that the asymptotic properties of $\tilg$ become local properties at the submanifold $\scri := (\partial{M},g\mid_{\partial{M}})$, known as ``conformal infinity''.  
  

When written in terms of the conformal metric $g$, the Einstein equation of
  $\tilde{g}$ is singular at $\scri$. However, in a remarkable achievement  H. Friedrich was able (by means of introducing carefully chosen variables) to rewrite the equation in spacetime dimension four
  as a system of geometric PDE  that are regular at $\scri$
  (see the seminal works \cite{friedrich81bis}, \cite{friedrich81} and the reviews \cite{Friedrich2002}, \cite{Friedrich2014}). This system 
  allows to formulate an asymptotic initial value problem which was also proven by Friedrich \cite{Fried86initvalue} to be always well-posed if the cosmological constant is positive, a case to which we will pay special attention in this paper.

  The Cauchy problem at $\scri$ with positive cosmological constant $\Lambda$ is interesting for several reasons. First, because one can prescribe by hand the asymptotic behaviour of the spacetime. It is also noteworthy that associated to a conformal metric $g$ solving the conformal Friedrich equations, there is a solution to the Einstein equations $\tilg$ which is ``semiglobal'' (i.e. the ``physical'' spacetime $\tilg = \Om^{-2}g$ extends infinitely towards the
    future or past, depending on whether $\scri$ is a final or an initial state). 
  In addition, a remarkable simplification occurs in the constraint equations at $\scri$ as opposed to the standard constraint equations of the classical initial value problem. The data at $\scri$ consist of a conformal Riemannian three-manfiold $(\Sigma,[\gamma])$ which prescribes the (conformal) geometry of $\scri$, together with a conformal class of symmetric two-tensors $[D]$ with vanishing trace an divergence, i.e. a conformal class of transverse and traceless (TT) tensors.

 The Friedrich conformal field equations are specially taylored to dimension four and do not appear to extend to higher dimensions. The basic problem is that there do not appear to be
  enough evolution equations that remain regular at $\scri$ \cite{Friedrich2002}. Actually,
one of the fundamental objects in the conformal Friedrich equations is the rescaled Weyl tensor, which plays a central role in this paper. Our analysis shows that in dimension higher than four this object is
regular at $\scri$ only in few particular cases. Thus, there are reasons to believe that any attempt to find a regular Cauchy problem well-posed at $\scri$ based on this object will be unfruitful.

{In spite of this, one can actually find in the literature
related
  existence and uniqueness results in higher dimensions \cite{Anderson2005}
  based  on quite a different approach. This is the framework introduced by Fefferman and Graham, first in the paper \cite{FeffGrah85} and later extended into a monograph \cite{ambientmetric}. 
 An important part of the Fefferman and Graham formalism relates to asymptotically Einstein $(n+1)$-dimensional metrics, namely, conformally extendable metrics which satisfy the Einstein equations (to a certain order) at ($n$-dimensional) $\scri$. 
 This is carried through the study of their asymptotic formal series expansions,
 (see  also  \cite{Starob82} for an earlier asymptotic study in four spacetime dimensions and positive cosmological constant)
 usually called Fefferman-Graham (FG) expansion. It is remarkable the qualitative differences that appear between the $n$ odd and even cases.  For $n$ even, there is an obstruction to find a smooth metric satisfying the Einstein equations to infinite order at $\Sigma$. The expansion introduces
logarithmic terms which depend on a certain conformally invariant tensor $\obs$ determined by the boundary metric, known as obstruction tensor. No such obstruction occurs for $n$ odd. 
 Interestingly it is the obstruction tensor what allows Anderson \cite{Anderson2005} to find an asymptotic Cauchy problem for the Einstein equations in the $n$ odd case.  We remark that nor the original Anderson's proof in  \cite{Anderson2005}, neither Anderson and Chruściel's later clarification \cite{andersonchrusciel05} are fully complete and a revised complete version has been recently published in \cite{kaminski21}.  For $n+1$ even dimensional metrics $\tilg$, this tensor provides the differential equation $\obs = 0$, which for  Lorentzian conformally Einstein metrics, can be cast as a Cauchy problem at $\scri$.
 Anderson proves that solutions of this Cauchy problem exist and are uniquely determined for every pair of symmetric two-tensors $(\gamma, g_{(n)})$, $\gamma$ positive definite and $g_{(n)}$ traceless and transverse w.r.t. $\gamma$. A posteriori, $\gamma$ determines the geometry of $\scri$ and $g_{(n)}$ is $n$-th order coefficient of the asymptotic expansion of $\tilg$. This idea is not extendable to the $n$ even case, for no obstruction tensor can be built out of $\tilg$ when $n+1$ is odd. In this case, however, there is a result \cite{kichenassamy03} that proves the convergence of the FG expansion in the case where the data are analytic.
 }

 Conversely, an existence and uniqueness theorem can be used to characterize spacetimes by means of their Cauchy data. The situation is particularly interesting in the
 case of the asymptotic Cauchy problem for positive $\Lambda$, because of the simplicity of the data, which potentially allows one to achieve
 classification results for spacetimes whose explicit form need not to be known.

 The data for the existence and uniqueness theorems of dimension higher than four are the coefficients of the Fefferman-Graham expansion $(\gamma,g_{(n)})$. The original definition of the coefficient $g_{(n)}$ is not covariant, because the Fefferman-Graham expansion is constructed in a very particular set of coordinates, which in general is not easily obtainable. By using Hamiltonian methods, a fully covariant characterization of $g_{(n)}$ was presented in \cite{papasken04bis,papasken04}. The method relies on the fact that the canonical momenta can be obtained as functional derivatives of the on-shell gravity action, together with expansions in terms of eigenfunctions of the dilatation operator.  In the first part of the paper we follow a completely different route to characterize geometrically $g_{(n)}$ in the special case that $\scri$ is conformally flat.  In such case, we  provide a geometric relation between $g_{(n)}$ and the tensor $D$, the electric part of the rescalled Weyl tensor at $\scri$, for spacetimes admitting a conformally flat $\scri$. Specifically, $g_{(n)}$ which in all cases has trace and divergence  determined by $\gamma$, can be easily split in the conformally flat case into a term explicitly determined by $\gamma$ and a traceless and divergence-free term $\mathring g_{(n)}$ which, up to a constant, agrees with $D$.  This extends to the $\Lambda>0$ case a previous result by Hollands-Ishibashi-Marolf \cite{Holl05} in the $\Lambda<0$ case. Actually, this extension is straightfoward if one takes into account general results \cite{Sken02} relating the coefficients of the Fefferman-Graham expansion for opposite signs of $\Lambda$. Our contribution is therefore not new with regard the identification of $\mathring g_{(n)}$ and  $D$, but it goes beyond in several aspects. Firstly, we work in arbitrary conformal gauge and make no restriction on the signature nor the sign of (non-zero) $\Lambda$. Secondly, we derive the result as a simple corollary of a general identity for the Weyl tensors of two metrics related in  a certain way (see Lemma \ref{lemmaWeyls}). This general identity is of independent interest. Finally we discuss whether conformal flatness is not only necessary for the coincidence of $\mathring g_{(n)}$ and $D$ (up to a constant), but also sufficient.   We conclude 
   that this is the case as long as no purely magnetic spacetimes satisfy the $\Lambda$-vacuum Einstein equation. Interestingly, this is a long standing open conjecture in general relativity, see e.g. \cite{mcintosh94} and also \cite{barnes04}. Finally, we give ideas for relating $D$ and (the total coefficient) $g_{(n)}$ in the non-conformally flat $\scri$ case. We show that $D$ is in general divergent
and find an explicit expression involving the Weyl tensor of an auxiliary metric that needs to be subtracted  to $D$ in order to retrieve $g_{(n)}$.


 A characterization result of spacetimes via data at $\scri$ with positive $\Lambda$ must encode all the information of the corresponding spacetime. In particular, the presence of symmetries must also constraint these data. The case $n=3$ has been studied in \cite{KIDPaetz}
 where it is shown that the spacetime admitting  a Killing vector
 is equivalent to the TT tensor $D$ satisfying geometrically neat equation, the so-called Killing initial data (KID) equation. This equation involves a conformal Killing vector field (CKVF) of $\gamma$ which is a posteriori the Killing vector field at $\scri$ in the conformally extended spacetime. Apart from this $n=3$ case, no previous results relating continuous local isometries to initial data at $\scri$ were known in more dimensions. In this paper
 we prove a higher dimensional result, analogous to the $n=3$ one, restricted to the case of analytic metrics with zero obstruction tensor.

 The results described above are used in the second part of the paper to characterize the generalized Kerr-de Sitter spacetimes \cite{Gibbons2005}.
 We find that the data corresponding to Kerr-de Sitter is characterized
   by the conformal class of $\gamma$ being locally conformally flat and the
   tensor $\mathring g_{(n)}$ taking  the form
   $\mathring g_{(n)} = D_\Y$  where $D_\Y$ is a TT tensor depending on a conformal Killing vector field (CKVF) $\Y$ of $\gamma$. More concretely, the CKVF must belong to a specific conformal class  which we also explicitly determine. Since by the results in the first part of the paper $\mathring g_{(n)}$ admits a clear geometric
   interpretation in terms of the rescalled Weyl tensor whenever
   $\gamma$ is locally conformally flat, our characterization result of Kerr-de Sitter at $\scri$ is fully geometric. The result here is a 
   natural generalization of the already known case for $n=3$, studied in \cite{KdSlike} (see \cite{KdSnullinfty} for the non-conformally flat $\scri$ case,
   and also \cite{gasperinwilliams17}).

 The contents of the paper are organized as follows. In Section \ref{secweylgn} we start by setting the basics of the Fefferman-Graham formalism. Some general results of conformal geometry are also proven and two useful formulas for the Weyl tensor are derived (cf. Lemmas \ref{lemmaCt} and \ref{lemmaWeyls}, with proofs in Appendix \ref{appA}), which have several applications in this paper,
 an example being the calculation of the initial data in the second part of the paper. Then Theorem \ref{theognweyl} is proven, which establishes
 that the $n$-th order coefficient of the FG expansion coincides (up to a certain constant) with the electric part of the rescaled Weyl tensor in the
   case when $\scri$ is conformally flat and $n>3$ (for $n=3$ this is true in
 full generality). This theorem finds immediate application in the Cauchy problem of Einstein equations at $\scri$ with positive cosmological constant (cf. Corollary \ref{coroland}). In Section \ref{appendixKID} we derive a KID equation for analytic data at $\scri$ for $n$ odd and even with vanishing obstruction tensor (we indicate that the result should also hold when
  the  obstruction tensor is non-zero, but this requires additional analysis). This equation is necessary and sufficient for the Cauchy development of the data at $\scri$ to admit a Killing vector field. Our final Section \ref{secKdsmet} gives an interesting application of the previous results. Namely, we calculate the initial data of the Kerr-de Sitter metrics in all dimensions \cite{Gibbons2005}. As mentioned above, these data are uniquely characterized by the conformal class of a CKVF $\Y$. In order to give a complete characterization, we identify this conformal class using the results in \cite{marspeon21}.

Throughout this paper, $n$ refers to the dimension of $\scri$ and $n+1$ the dimension of the spacetime. Several results below are valid in arbitrary signature
and arbitrary sign of the cosmological costant. We will work in the general setup unless otherwise stated.

\section{Initial data and the Weyl tensor}\label{secweylgn}

In this section we relate the initial data at spacelike conformally flat $\mathscr I$ which appears in Anderson's and Kichenassamy existence and uniqueness theorem \cite{Anderson2005}, \cite{kichenassamy03} to the electric part of the rescaled Weyl tensor. This theorem relies on the Fefferman-Graham (FG) expansion of Poincaré metrics near $\mathscr I$. These are a generalization of the Poincaré metric of the disk model of the hyperbolic space, whose conformal infinity is given by the conformal structure associated to the usual $n$-sphere. The properties concerning Poincaré metrics that are needed in this paper will be stated next, and we refer to the original publication \cite{FeffGrah85} and to the extended monograph \cite{ambientmetric} for further details.

 \subsection{Fefferman-Graham formalism}

Consider an $(n+1)$-dimensional pseudo-Riemannian manifold $(M,g)$ with boundary $\partial M$ and denote its interior by $\widetilde M = \mathrm{Int}(M)$. 
Then $(M,g)$ is said to be a {\it conformal extension } of $(\widetilde M, \tilg)$ if there exists a smooth function $\Om$ positive on $\widetilde M$ such that 
\begin{equation}
g = \Om^2 \tilg \quad\quad \mbox{and}\quad\quad \partial M = \{\Om = 0 \cap \dif \Om \neq 0 \}.
\end{equation} 
%
%
The extended metric $g$ is assumed to be smooth in $\widetilde{M}$ but only to have finite differentiability up to $\partial M$. When $g$ is smooth also at the boundary, we will call this a {\it smooth conformal extension}. The submanifold $\scri := (\partial M, g|_{\partial M})$, which will be non-degenerate of signature $(p,q)$ in the cases we shall deal with, is called ``conformal infinity''. Notice that multiplying the conformal factor $\Om$ by any smooth positive function $\widehat \omega$ yields a different conformal extension such that $\scri = (\partial M, \omega^2 g|_{\partial M})$, where $\omega = \widehat\omega|_\scri$. Hence one usually considers $\scri$ as $\partial M$ equipped with the whole conformal class of metrics $[g|_{\partial M}] = \omega^2 g|_{\partial M},~\forall \omega \in C^\infty(\partial M),~\omega>0$. A metric admitting a conformal extension is said to be conformally extendable. Also, throughout this paper by ``conformally flat'' we mean ``locally conformally flat'' unless otherwise stated.  

The so-called Poincaré metrics associated to a conformal manifold, i.e. a smooth manifold endowed with a conformal structure $(\Sigma,[\gamma])$, are metrics $\tilg$ admitting a smooth conformal extension $g = \Om^2 \tilg$ with prescribed conformal infinity $\scri :=(\Sigma,[\gamma])$. The extension $g$ is defined in an open neighbourhood $M$ of  $\Sigma \times \{0\}$ in $\Sigma \times [0, \infty)$ (and $\tilg$ in $ \widetilde M = M - \{ \Sigma \times \{0\} \}$) satisfying the following conditions. $\Sigma$ is naturally embedded in $M$ by $i: \Sigma \hookrightarrow \Sigma \times [0,\infty)$, where $i(\Sigma) = \Sigma \times \{0\}$, so we  identify $\Sigma$ and $\Sigma \times \{0 \}$ when there is no risk of confusion. In $M$, $\Om$ is a defining function of $\Sigma  =  \{ \Om  = 0 \}$ as before.
 Moreover, the definition imposes, depending on the parity of $n$, a certain decay rate of the tensor $Ric(\tilg) - \lambda n \tilg$ near $\scri$, where $Ric(\tilg)$ is the Ricci tensor of $ \tilg$ and $\lambda:= \frac{2}{n(n-1)} \Lambda$  with $\Lambda$ the cosmological constant, which we assume to be non-zero. Following \cite{FeffGrah85} we say that a symmetric $2$-tensor field $D$ is  $O^+(\Om^m)$ if it is $D = O(\Om^m)$ and $\mathrm{Tr}_\gamma i^\star(\Om^{-m}D|_{\scri}) = 0$.
With this notation we can give the formal definitions \cite{FeffGrah85}, \cite{ambientmetric}:
 \begin{definition}  \label{defpoincare}
  A Poincaré metric for a conformal $n$-manifold $(\Sigma,[\gamma])$
  of signature $(p,q)$, is a metric $\tilg$ of signature $(p+1,q)$ if $\lambda>0$ or $(p+1,q)$ if $\lambda<0$ admitting a smooth conformal extension such that $\scri = (\Sigma, [\gamma])$ and
  \begin{enumerate}
   \item If $n = 2$ or $n \geq 3$ and odd, $Ric(\tilg) - \lambda n \tilg$ vanishes to infinite order at $\Sigma$.
   \item If $n \geq 4$ and even, $Ric(\tilg) - \lambda n \tilg $ is $ O^+(\Om^{n-2})$.
  \end{enumerate}
 \end{definition}

 Let $\tilg$ be a conformally extendable metric $\tilg$  and $g$ a conformal extension. From the well-known relation between Ricci tensors two conformal metrics  (e.g. 
 \cite{Kroonbook}) 
 \begin{equation}\label{eqrelriccis}
  R_{\alpha \beta} - \widetilde R_{\alpha \beta} = - \frac{n-1}{\Om} \nabla_\alpha \nabla_\beta \Om - g_{\alpha \beta} \frac{\nabla_\mu \nabla^\mu \Om}{\Om} + g_{\alpha \beta}\frac{n}{\Om^2} \nabla_\mu \Om \nabla^\mu \Om  
 \end{equation}
 it follows \cite{GrahamLee91} that $\tilg$ satisfies the Einstein equation $Ric(\tilg) - \lambda n \tilg = O(\Om^{-1})$ if and only if 
   $(\Om^{-2}\tilg^{\alpha \beta} \nabla_\alpha \Om  \nabla_\beta \Om )|_{\scri} = (g^{\alpha \beta} \nabla_\alpha \Om  \nabla_\beta \Om )\mid_{\scri} = -\lambda$. If this holds, it can be proven \cite{mazzeo88} that all sectional curvatures of $\tilg$ take a limit at $\partial M$ equal to $\lambda$. In the Riemannian case (which requires $\lambda <0$) these metrics are called {\it asymptotically hyperbolic}. In the case with general signature and arbitrary non-zero $\lambda$ we call them {\it asymptotically of constant curvature} (ACC). It is obvious from Definition \ref{defpoincare} that  Poincaré metrics are ACC. We define then:
 
\begin{definition}\label{defnormf}
Let $\tilg$ be an ACC and conformally extendable metric with  $\scri = (\Sigma,[\gamma])$. Let also be a representative $\gamma \in [\gamma]$. Then $\tilg$, as well as the conformally extended metric $g = \Om^2 \tilg$,  are said to be in normal form w.r.t. $\gamma$ if 
  \begin{equation}\label{eqnormform}
 \tilg =\frac{1}{\Om^2} (-\frac{\dif \Om^2}{\lambda} + g_\Om),\quad\quad g = \Om^2 \tilg = -\frac{\dif \Om^2}{\lambda} + g_\Om,
\end{equation}
where  $g_{\Om}$ is a family of induced metrics on the leaves $\Sigma_\Om = \{ \Om = const.\}$ such that $g_{\Om}\mid_\Sigma = \gamma$.
\end{definition}
Notice that it is always possible to adapt coordinates $\{\Om, x^i \}$ to the normal form \eqref{eqnormform}, where $\partial_\Om:= \partial/\partial \Om$ and $\partial_i := \partial/\partial_{x^i}$ are normal and tangent to the $\Sigma_\Om$ leaves respectively. We denote these as {\it Gaussian coordinates}\footnote{In the usual definition of Gaussian coordinates, the $\partial_\Om$ vector is unit. The introduction of a constant factor $\lambda$ does not modify its general properties.}. 
For an ACC metric $\tilg$ which satisfies the Einstein equations to infinite order at $\scri$, the Fefferman-Graham expansion (FG) is a formal expansion in Gaussian coordinates
\begin{align}
 g_\Om  & \sim \sum\limits_{s = 0}^{(n-1)/2} g_{(2 s)}  \Om^{2s}  + \sum\limits_{s =n}^{\infty} g_{(s)}  \Om^s,\quad\quad &&\mbox{if $n$ is odd},\label{eqFGexpoddsec}\\
  g_\Om  & \sim \sum\limits_{s = 0}^{\infty} g_{(2 s)}  \Om^{2s}  + \sum\limits_{s = n/2}^{\infty} \sum\limits_{t = 1}^{m_s} \obs_{(2s,t)} \Om^{2s} (\log \Om)^t ,\quad\quad &&\mbox{if $n$ is even}\label{eqFGexpevsec},
\end{align}
where $m_s \leq 2s-n+1$ in an integer for each $s$, the coefficients $g_{(s)}$ are objects defined at $\scri$ and extended to $M$ as independent of $\Om$ and the logarithmic terms arise in the $n$ even case whenever the so-called obstruction tensor $\obs_{(n,1)} = \obs$ for $\gamma$ is non-zero. This tensor is a conformally invariant symmetric and trace-free 2-covariant in even dimensions and its vanishing is a necesssary and sufficient condition for the tensor $g_{\Om}$ to be smooth up to and including $\scri$. 




%
%
%
%
%
The appearance of the logarithmic terms (whenever the obstruction tensor is non-zero) is a consequence of forcing the metric to satisfy the Einstein equations to infinite order at $\scri$ with $n$ even. These terms spoil smoothness. Thus, in order to distinguish this case from the Poincaré metrics, which are smooth by  Definition \ref{defpoincare}, metrics  satisfying the Einstein equations  with non-zero cosmological constant to infinite order at $\scri$ are called {\bf Feffeman-Graham-Poincaré (FGP)} in this paper. Obviously, whenever  $n$ is odd or $n$ is even and $\obs =0$, a FGP metric is Poincaré.

For FGP metrics, the coefficients of the FG expansions are generated by recursive relations obtained from the Einstein equations at $\scri$ as follows. One must first prescribe the zero-th order coefficient $g_{(0)}$, which in turn determines the boundary metric $g_{(0)} = \gamma$. Then, the recursive relations give each coefficient of order $s \neq n$, as a function of previous terms and tangential derivatives up to order $s-2$.  As a consequence, the expansions \eqref{eqFGexpoddsec} and \eqref{eqFGexpevsec} are both even up to order strictly lower than $n$ and the terms $g_{(s)}$ for $s<n$ are uniquely generated by $\gamma$.  Explicit expressions for the first terms  $g_{(s)}$ as well as for the obstruction tensor up to dimension $n=6$ are obtained in 
\cite{deHaro} (the complexity of the respective expressions increases rapidly with $s$ and with the dimension). The $n$-th order coefficient is independent of $\gamma$, except for its trace and divergence,
 \begin{equation}\label{eqTTgsec}
 \mathrm{Tr}_\gamma g_{(n)} := \a(\gamma), \quad\quad \mathrm{div}_\gamma g_{(n)} := \b(\gamma),
\end{equation} 
where $\a(\gamma) = 0,~\b(\gamma) = 0$ for $n$ odd and $\a(\gamma)$ is a scalar and $\b(\gamma)$ a one-form covariantly determined by $\gamma$ for $n$ even.  Also, $\obs$ is obtainable from $\gamma$ and in particular, it vanishes for conformally flat $\gamma$, a case  which will be central in this paper. The subsequent coefficients $g_{(s)}$ (and also $\obs_{(s,t)}$ whenever they appear) with $s > n$ are obtained by previous coefficients up to order $g_{(s-2)}$ (and $\obs_{(s-2,t)}$, whenever they appear).

 In other words, the freely prescribable data at $\Sigma$, which determine a unique FGP metric $g$ to infinite order at this hypersurface, are the zero-th order coefficient, which is given by the boundary metric $\gamma$, and a symmetric two-covariant tensor $h$, whose trace and divergence are constrainted by $\a(\gamma),~\b(\gamma)$ respectively, and which determines the $n$-th order coefficient.  Stated as a formal Theorem:

 \begin{theorem}[Fefferman-Graham \cite{ambientmetric}]\label{theoFG}
Let $(\Sigma,\gamma)$ be a pseudo-Riemannian manifold of signature $(p,q)$ and let $h$ be a symmetric two-tensor of $\Sigma$. 
\begin{itemize}

\item If $n = 2$ and if $\mathrm{div}_\gamma h =\frac{1}{2\lambda} \dif (\mathrm{Scal}_\gamma)$ and $\mathrm{Tr}_\gamma h =\frac{1}{\lambda}  \mathrm{Scal}_\gamma$, there exists an even (i.e. with only non-zero coefficients of even order) Poincaré metric $g$ in normal form w.r.t $\gamma$ which admits an expansion of the form \eqref{eqFGexpevsec} (with $\obs_{(2s,t)} = 0$) and $g_{(2)} = h$.
 \item If $n\geq 3$ is odd and  if $\mathrm{div}_\gamma h = 0$  and $\mathrm{Tr}_\gamma h = 0$, there exists a Poincaré metric $g$ in normal form w.r.t $\gamma$, which admits an expansion of the form \eqref{eqFGexpoddsec} such that $g_{(n)} =  h$ (in particular, trace-free). 
\item If  $n\geq 4$ is even, there exist a one-form $\b(\gamma)$ and a scalar $\a(\gamma)$, both covariantly determined by $\gamma$, is such a way that if $\mathrm{div}_\gamma h = \b(\gamma)$ and $\mathrm{Tr}_\gamma(h) = \a(\gamma)$, then there exists a FGP which admits an expansion of the form \eqref{eqFGexpevsec} such that $g_{(n)} = h$. The solution is smooth if and only if the obstruction tensor of $ \gamma$ vanishes.
\end{itemize}
 \end{theorem}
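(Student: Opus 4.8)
The plan is to recover the Fefferman--Graham expansion by inserting the normal-form ansatz \eqref{eqnormform} into the Einstein equation $Ric(\tilg)-\lambda n \tilg = 0$ and turning it into a system of ordinary differential equations in $\Om$ for the one-parameter family $g_\Om$ of metrics on $\Sigma$. First I would combine the conformal relation \eqref{eqrelriccis} with the Gauss--Codazzi equations of the leaves $\Sigma_\Om$ to decompose the equation into its tangential ($ij$), mixed ($\Om i$) and normal ($\Om\Om$) components. Writing $g_\Om' := \partial_\Om g_\Om$, the tangential component gives a second order evolution equation of the schematic form $\Om\,\partial_\Om^2 g_\Om + \cdots + Ric(g_\Om) = 0$, whereas the $\Om\Om$ and $\Om i$ components give two first order constraints --- morally the Hamiltonian and momentum constraints --- which control $\mathrm{Tr}_\gamma$ and $\mathrm{div}_\gamma$ of the coefficients, respectively.

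Next I would substitute the formal series \eqref{eqFGexpoddsec} (for $n$ odd) or \eqref{eqFGexpevsec} (for $n$ even) into the tangential equation and read off the recursion. The crux is the indicial analysis: the linear operator determining the trace-free part of $g_{(r)}$ at order $\Om^r$ degenerates precisely at $r=n$, the indicial roots being $0$ and $n$. Hence for $0<r<n$ the coefficient $g_{(r)}$ is uniquely fixed by $\gamma=g_{(0)}$ and the lower coefficients through explicit algebraic-differential expressions, and for $n$ odd one checks in addition that the recursion forces all odd coefficients below order $n$ to vanish, leaving only the even ones in \eqref{eqFGexpoddsec}. At the critical order $r=n$ the degeneracy leaves the trace-free transverse part of $g_{(n)}$ undetermined by the evolution equation: this is exactly the free datum, which I would identify with $h$ by imposing $tf(g_{(n)})=h$. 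The two constraints then fix the remaining components, the $\Om\Om$ constraint forcing $\mathrm{Tr}_\gamma g_{(n)}=0$ and the $\Om i$ constraint forcing $\mathrm{div}_\gamma g_{(n)}$ to equal the $\gamma$-determined expression in the statement ($0$ for $n\geq 3$ odd, $-\tfrac{1}{2}\mathrm{grad}_\gamma \mathrm{Scal}_\gamma$ for $n=2$, and the one-form $\b$ for $n\geq 4$ even).

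For $n$ odd, continuing the recursion to all orders $r>n$ is unobstructed, since the indicial operator is invertible there, so every higher coefficient is uniquely determined and the series solves the Einstein equation to infinite order. For $n$ even the degeneracy at $r=n$ has a second consequence: the source produced by $\gamma$ and the lower coefficients generically has a component in the cokernel of the degenerate operator that cannot be absorbed into $g_{(n)}$; this component is the conformally invariant, trace-free and divergence-free obstruction tensor $\obs=\obs_{(n,1)}$. I would then show that introducing the logarithmic term $\obs\,\Om^n\log\Om$, and the further log terms in \eqref{eqFGexpevsec}, always permits the expansion to be continued to infinite order, so that $Ric(g)-\lambda n g$ vanishes to infinite order, and that the expansion is smooth if and only if $\obs=0$ (for $n=2$ the obstruction tensor is trivial, which is why that case is log-free and ``even''). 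Uniqueness holds in all cases because, once $\gamma$ and $h$ are prescribed, the recursion together with the two constraints determines every coefficient. To justify that solving only the tangential evolution equation suffices, I would invoke the contracted second Bianchi identity: if the tangential equations hold to all orders and the $\Om\Om$, $\Om i$ constraints hold at the leading order $\Om^n$ --- which is precisely what the trace-freeness of $h$ and its divergence condition guarantee --- then the constraints propagate to all higher orders.

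The main obstacle is the analysis at the critical order $\Om^n$: pinning down the exact degeneracy of the indicial operator, extracting from the constraints the precise trace and divergence conditions (in particular the $\gamma$-determined one-form $\b$ and the scalar-curvature term for $n=2$), and, for $n$ even, identifying the obstruction tensor together with the proof that smoothness is equivalent to $\obs=0$ and that $\obs$ is conformally invariant. Managing the combinatorics of the higher-order recursion and the propagation of the constraints through the Bianchi identities is where the bulk of the technical work lies; this is the content of the Fefferman--Graham construction in \cite{ambientmetric}, which one could alternatively cite directly.
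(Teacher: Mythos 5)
This statement is Fefferman--Graham's theorem, which the paper imports by citation rather than proving: the only in-text justification is the paragraph following the statement, which summarizes the recursive structure of the expansion. Your sketch reproduces exactly that standard construction (ODE system in $\Om$ from the normal form, indicial roots $0$ and $n$, uniqueness of the coefficients below the critical order, evenness up to order $n$, the free trace-free transverse datum at order $n$, the obstruction tensor and logarithmic terms for $n$ even, and propagation of the constraints via the Bianchi identity), so it is consistent with both the paper's description and the cited source. One small correction: the $\Om\Om$ constraint forces $\mathrm{Tr}_\gamma g_{(n)}=0$ only when $n$ is odd; for $n$ even the trace of $g_{(n)}$ is a generically non-zero scalar determined by $\gamma$ (the quantity $\a$ introduced in the paragraph after the theorem), which is precisely why the statement prescribes only $tf(g_{(n)})=h$ in the even case.
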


Away from $\Sigma$ there are multiple ways of extending $g$ if no further assumptions are made.  Anderson's existence and uniqueness theorem \cite{Anderson2005} (see also  \cite{andersonchrusciel05} and \cite{kaminski21} for a fully complete proof) proves that for metrics of Lorentzian signature and $n$ odd, there is a unique way of extending this metric away from $\Sigma$ so the $\Lambda>0$ vacuum Einstein field equations hold, namely
\begin{equation}\label{EFE}
\widetilde R_{\alpha \beta}  = n \lambda \tilg_{\alpha \beta}.
\end{equation} 
 In other words, the coefficients $(\gamma,g_{(n)})$, with $(\Sigma,\gamma)$ Riemannian, determine a unique Einstein metric for $(\Sigma, \gamma)$ in a collar neighbourhood of $\Sigma$.  Thus, a triple $(\Sigma,\gamma,g_{(n)})$, where $g_{(n)}$ satisfies the conditions of Theorem \ref{theoFG}, will be called {\bf asymptotic data}. The proof of Anderson's theorem relies on imposing that the obstruction tensor of $g$ vanishes. This can be done in the $n$ odd case because $g$ is $n+1$ (even) dimensional. This method is not applicable to the $n$ even case, for no obstruction tensor is associated to $g$ when $n+1$ is odd. 
 
 

\begin{theorem}[Anderson, Chruściel, Kamiński  \cite{Anderson2005,andersonchrusciel05,kaminski21}]\label{theoanderson}
   Let $n \geq 3$ odd. For every choice of asymptotic data $(\Sigma, \gamma, g_{(n)})$, with $\gamma$ Riemannian, there is a unique Lorentzian metric solving Einstein's equations for $\Lambda>0$ in a neighbourhood of $\scri$. This problem is well-posed in suitable Sobolev spaces.
 \end{theorem}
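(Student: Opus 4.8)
The plan is to replace the Einstein equation by the vanishing of the Fefferman--Graham \emph{obstruction tensor} of the full $(n+1)$-dimensional metric $g$. Since $n$ is odd, the spacetime dimension $D=n+1$ is even, so the conformal class of $g$ carries a well-defined obstruction tensor $\obs_{\alpha\beta}(g)$: a conformally covariant, symmetric, trace-free, divergence-free $2$-tensor of differential order $D$, whose principal part is schematically $\Box^{(n-1)/2}$ acting on the trace-free Schouten (equivalently Ricci) tensor of $g$. Note the exponent $(n-1)/2$ is a non-negative integer precisely because $D$ is even. The starting point is the fact, due to Fefferman and Graham, that $\obs(g)$ is exactly the obstruction to solving $Ric(\tilg)-n\lambda\tilg=0$ to all orders at $\scri$; in particular every Einstein metric satisfies $\obs(g)=0$. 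I would therefore study the conformally invariant equation $\obs(g)=0$, which, unlike the Einstein equation, is regular at a smooth conformal boundary.

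Next I would reduce $\obs(g)=0$ to a determined hyperbolic system. Because $\obs$ is divergence-free as a consequence of diffeomorphism invariance, the raw system is underdetermined and carries the usual gauge freedom; I would fix a gauge adapted to $\scri$ --- for instance the Gaussian normal form of Definition \ref{defnormf} supplemented by a generalized wave gauge --- to obtain a quasilinear system whose principal symbol, evaluated on a spacelike hypersurface, is that of $\Box^{(n-1)/2}$ composed with a second-order operator, hence a genuinely hyperbolic operator of order $n+1$ in Lorentzian signature. Since $\Lambda>0$ forces $g^{\alpha\beta}\nabla_\alpha\Om\nabla_\beta\Om=-\lambda<0$ at $\scri$, the hypersurface $\{\Om=0\}$ is spacelike with $\partial_\Om$ timelike; this is exactly what makes the top-order symbol hyperbolic rather than elliptic, so that evolution proceeds in $\Om$ from data at $\scri$.

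The Cauchy data are then read off from the Fefferman--Graham recursion. By Theorem \ref{theoFG} and the discussion around \eqref{eqTTgsec}, the coefficients $g_{(0)}=\gamma, g_{(2)},\dots,g_{(n-1)}$ are all determined by $\gamma$ and its tangential derivatives, while the first freely specifiable normal datum is $g_{(n)}$, constrained only by $\mathrm{Tr}_\gamma g_{(n)}=\a=0$ and $\mathrm{div}_\gamma g_{(n)}=\b=0$ (the values for $n$ odd). This is precisely the amount of Cauchy data an order-$(n+1)$ hyperbolic equation requires: the normal derivatives of orders $0,\dots,n-1$ are pinned by $\gamma$, the $n$-th normal datum is $g_{(n)}$, and the trace and divergence conditions are propagating constraints. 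With this identification I would invoke the standard well-posedness theory for higher-order quasilinear hyperbolic systems --- energy estimates in Sobolev spaces $H^s$ together with a Leray/Kato-type existence theorem --- to produce a unique solution $g$ in a collar neighborhood of $\scri$, depending continuously on $(\gamma,g_{(n)})$.

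The remaining and, I expect, hardest points are constraint propagation and the recovery of an honest Einstein metric. Having solved the gauge-reduced equation one must show that the full $\obs(g)=0$ holds: the transverse components of $\obs$ together with the gauge condition define constraint quantities that vanish on $\scri$ and satisfy a subsidiary hyperbolic system by virtue of the contracted Bianchi-type identity $\mathrm{div}\,\obs=0$, hence vanish throughout; one then argues that a metric with $\obs(g)=0$ and the prescribed boundary data is conformally Einstein, so that $\tilg=\Om^{-2}g$ solves \eqref{EFE}. The genuine technical obstacle is controlling the principal symbol and the constraint system of $\obs$ for arbitrary odd $n$: the obstruction tensor has no simple closed form in general dimension, so verifying hyperbolicity, setting up the energy estimates at the correct order, and proving that the constraints propagate all rely on the delicate structural identities of Fefferman--Graham theory. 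Finally, the conformal covariance of $\obs$ yields the data equivalence $(\gamma,g_{(n)})\simeq(\omega^2\gamma,\omega^{2-n}g_{(n)})$, so the invariant data are $(\Sigma,[\gamma],[g_{(n)}])$, which completes the statement.
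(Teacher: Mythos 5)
Your outline reproduces exactly the strategy the paper attributes to Anderson \cite{Anderson2005}: the theorem is quoted without proof, and the only indication given is that the argument ``relies on imposing that the obstruction tensor of $g$ vanishes,'' which is possible precisely because the bulk dimension $n+1$ is even --- this is the equation $\obs(g)=0$ you take as your starting point, with the Cauchy data $(\gamma,g_{(n)})$ read off from the Fefferman--Graham recursion as in Theorem \ref{theoFG}. Since the paper offers no further detail, your sketch is consistent with (and no less complete than) what is presented here; the genuinely hard steps you flag --- hyperbolicity of the gauge-reduced order-$(n+1)$ operator, constraint propagation, and recovering a conformally Einstein metric from $\obs(g)=0$ --- are exactly the content of the cited reference.
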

 
 {
 Next, we shall comment on the convergence results of the FG  expansion of a FGP metric in the case when the data
    $(\gamma,g_{(n)})$ at $\scri$ are analytic.  The most general convergence result for this case is obtained by Kichenassamy in \cite{kichenassamy03}. This reference assumes that $\gamma$ is positive definite, but no sign of $\Lambda$ is imposed. The convergence is proven {for arbitrary $n$}, including the case where non-zero logarithmic terms appear ({only  when} $n$ is even).
    Actually, the convergence guarantees that the Einstein equations are in fact solved in a sufficiently small
  neighbourhood of $\Om = 0$, not just to infinite order at $\scri$. Hence, this implies existence and uniqueness of Einstein metrics in a neighbourhood of $\scri$ generated by analytic data $(\gamma,g_{(n)})$. These will be called {\bf asymptotic data in the analytic class}. We also note that the convergence in the $n$ odd case with $g_{(n)} = 0$ was already established in \cite{FeffGrah85} in full generality, i.e. with no restrictions on the signature of $\gamma$ nor the sign
    of $\lambda$. 
    The next Theorem summarizes the results that we shall require in the analytic case.
    \begin{theorem}[Kichenassamy \cite{kichenassamy03}]\label{theokichen}
      Let $n \geq 1$. For every choice of asymptotic data in the analytic class $(\Sigma, \gamma, g_{(n)})$, with $\gamma$ Riemannian, the FG expansion converges in a neighbourhood of $\scri$ to a unique Lorentzian (resp. Riemannian) metric solving the Einstein equations for $\Lambda>0$ (resp. $\Lambda <0$).
 \end{theorem}
    }
    
%
%

In this paper, a conformally extended metric (i.e. defined at $\scri$) will be denoted by $g$ and the metric from which it is extended by $\tilg$ (i.e. not defined at $\scri$). We use tilde to distiguish geometric objects associated to $\tilg$ from those associated to $g$. For instance, $\widetilde \nabla$ is the Levi-Civita connection of $\tilg $ and $\nabla$ the one of $g$. The signature of the metric $\tilg$ and the sign of $\lambda$ will remain general, namely $(p+1,q)$ if $\lambda>0$  or $(q,p+1)$ if $\lambda<0$ unless otherwise specified. However, we note that our main interest is in the positive cosmological constant setting, specially for the applications in the second part of the paper. Therefore, for the sake of simplicity some of our results are given in detail only for this case (namely Lemma \ref{lemmapoinconflat}, Proposition \ref{propconflatdec} and Theorem \ref{theognweyl}), while the negative $\lambda$ will be just indicated.
%

 If $g$  is an ACC metric in normal form w.r.t. a representative $\gamma \in [\gamma]$, it is immediate to verify that the vector field $T := \nabla \Om$ is geodesic (affinely parametrized). 
 Any conformal extension such that $ T$ is geodesic is called a {\it geodesic conformal extension}. We start by proving some general results about this kind of extensions. In the following, Greek indices $\alpha,\beta$ running form zero to $n$ are used for spacetime coordinates. For spacelike hypesurfaces (usually $\{ \Om = const.\}$) we use Latin indices $i= 1, \cdots, n$.

\begin{lemma}\label{lemmageodesic}
Let $\tilg$ be an ACC metric. Then, a conformal extension $g = \Om^2 \widetilde g$ is geodesic if and only if
 \begin{equation}
  \nabla_\alpha \Om \nabla^\alpha \Om =-\lambda .
 \end{equation}
\end{lemma}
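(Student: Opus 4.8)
The plan is to characterize the geodesic condition on $T := \nabla \Om$ as the single statement that the $g$-norm $\nabla_\alpha \Om \nabla^\alpha \Om$ be constant, and then to invoke the ACC hypothesis to pin this constant to the value $-\lambda$. First I would compute the acceleration of the integral curves of $T^\alpha = \nabla^\alpha \Om$ with respect to the extended metric $g$. Since $T$ is a gradient field, its acceleration is $T^\beta \nabla_\beta T^\alpha = \nabla^\beta \Om\, \nabla_\beta \nabla^\alpha \Om$. The key (elementary) observation is that the Hessian of the scalar $\Om$ is symmetric, so after raising an index the two derivatives may be interchanged and the expression recognised as a total gradient:
\[
T^\beta \nabla_\beta T^\alpha = \nabla^\beta \Om\, \nabla^\alpha \nabla_\beta \Om = \tfrac{1}{2}\,\nabla^\alpha\!\left( \nabla_\beta \Om\, \nabla^\beta \Om\right),
\]
where I have used that $g$ is covariantly constant to pull it through. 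Hence $T$ is an affinely parametrized geodesic if and only if the function $f := \nabla_\beta \Om\, \nabla^\beta \Om$ has vanishing gradient, i.e. $f$ is (locally) constant on $M$.

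The reverse implication is then immediate: if $\nabla_\alpha \Om\, \nabla^\alpha \Om = -\lambda$, then $f$ is constant, its gradient vanishes, and the displayed acceleration is zero, so the extension is geodesic. For the forward implication there remains only to identify the value of the constant $f$, and this is exactly where the ACC assumption is needed. As recalled in the discussion following \eqref{eqrelriccis}, an asymptotically constant curvature metric satisfies $(g^{\alpha\beta}\nabla_\alpha \Om\, \nabla_\beta \Om)|_{\scri} = -\lambda$. Since $g$ is at least continuously differentiable up to $\partial M$, the function $f$ extends continuously to $\scri$; a function that is constant on the interior and whose boundary limit is $-\lambda$ must therefore equal $-\lambda$ everywhere. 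This gives $\nabla_\alpha \Om\, \nabla^\alpha \Om = -\lambda$ and closes the equivalence.

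The computation itself is routine, so the only genuine content is invoking the ACC normalization of $|\nabla\Om|^2_g$ at $\scri$ to fix the integration constant. The main (mild) obstacle I anticipate is the bookkeeping ensuring that $f$ is \emph{globally}, and not merely locally, equal to $-\lambda$: one must argue that the relevant region is connected and touches $\scri$, so that the locally constant value propagates from the boundary limit to the whole collar. In the setting at hand $M$ is a neighborhood of $\Sigma\times\{0\}$ in $\Sigma\times[0,\infty)$ with $\partial M$ in its closure, so continuity up to the boundary delivers the claim on each connected component.
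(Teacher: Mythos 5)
Your proposal is correct and follows essentially the same route as the paper: both identify the acceleration of the gradient field $T=\nabla\Om$ with (half) the gradient of $f=\nabla_\alpha\Om\nabla^\alpha\Om$ via the symmetry of the Hessian, and both fix the constant using the ACC normalization $f|_{\scri}=-\lambda$. The only (cosmetic) difference is that you work with the full acceleration vector, whereas the paper contracts once more with $T$ and propagates constancy of $f$ along the integral curves of $T$; your version makes the ``geodesic $\Leftrightarrow$ $\nabla f=0$'' equivalence slightly more explicit.
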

\begin{proof}
 The lemma follows from
 \begin{equation}\label{eqmodnabome} 
  \nabla^\beta \Om \nabla_\beta \nabla_\alpha \Om  =  \nabla^\beta \Om \nabla_\alpha \nabla_\beta \Om = \frac{1}{2}\nabla^\alpha \Om \nabla_\alpha \lr{\nabla_\beta \Om \nabla^\beta \Om } 
 \end{equation}
 because if $\nabla_\alpha \Om \nabla^\alpha \Om = - \lambda$ the RHS of \eqref{eqmodnabome} vanishes and $T$ is geodesic and, conversely, if $T$ is geodesic then \eqref{eqmodnabome} is zero, so $\nabla_\beta \Om \nabla^\beta \Om$ is constant and ,$g$ being ACC, its value is everywhere equal to 
$-\lambda$.
\end{proof}

The following result guarantees the existence of geodesic conformal extensions for each choice of boundary metric $\gamma$. The proof, which we detail next, is similar to the one for $\Lambda< 0$, which appears in \cite{GrahamLee91} (Lemma 5.2). Before stating the lemma, let us briefly review the non-characteristic condition for a first order PDE Cauchy problem (see \cite{evanspdes}, Chapter 3). For this it will be convenient to use coordinates $\{x^\alpha\} = \{\Om, x^i\}$ adapted to the initial hypersurface, that is, taking coordinates $\{ x^i \}$ at $\Sigma =  \{ \Om = 0 \}$ and propagating them as $T^\alpha\partial_\alpha x^i = 0$.
Consider a first order PDE Cauchy problem 
\begin{equation}\label{eqcauchyp}
 F(x^\alpha; u, \nabla_\alpha u) = 0,\quad\quad u\mid_\Sigma = \phi,
\end{equation}
where $u$ is a scalar function.
Two functions $\{\phi, \psi_0\}$  of  $\Sigma$ are a set of {\it admissible} initial data whenever they satisfy the following {\it compatibility condition}
\begin{equation}\label{eqwelldata}
 F(x^0 = 0,x^i;\phi,\psi_0,\frac{\partial \phi}{\partial x^1},\cdots,\frac{\partial \phi}{\partial x^n}) = 0.
\end{equation}
Denote  $\mathcal{D}_{\nabla_\alpha u} F$ to the derivative of $F$ w.r.t. $\nabla_\alpha u$ and let $V(x^\alpha;u,\nabla_\alpha u)$ be the vector of components $V^\alpha = \mathcal{D}_{\nabla_\alpha u} F $. Also, let $T$ be the normal covector to $\Sigma$, i.e. $T_\alpha = \nabla_\alpha \Om$. Then, for every set of admissible initial data, the Cauchy problem is said to be {\it non-characteristic} if
\begin{equation}
 T \cdot V(x^0 = 0,x^i;\phi,\psi_0,\frac{\partial \phi}{\partial x^1},\cdots,\frac{\partial \phi}{\partial x^n})  \neq 0,
\end{equation}
where $\cdot$ denotes the usual action of a covector on a vector.
A non-characteristic Cauchy problem is known to be locally well-posed (e.g.  \cite{evanspdes}), i.e. that there exists a solution $u$ of \eqref{eqcauchyp}, satisfying $u|_\Sigma = \phi,~\partial_0 u |_{\Sigma} = \psi_0$. 
After this remark, we can prove the next Lemma.

\begin{lemma}\label{lemmaexistgeod}
 Let $ \widetilde g$ be an ACC metric with conformal infinity $(\Sigma, [\gamma])$. Then, for every representative $\gamma \in [\gamma]$, there exist a  geodesic conformal extension $g = \Om^2 \widetilde g$ which induces the metric $\gamma$ at $\Sigma$.  
\end{lemma}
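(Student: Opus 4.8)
The plan is to use the conformal freedom in the defining function to turn the geodesic condition of Lemma~\ref{lemmageodesic} into a first-order Cauchy problem for a single scalar, and then to check that, thanks to the ACC hypothesis, this problem is non-characteristic; this mirrors the argument of Graham--Lee (\cite{GrahamLee91}, Lemma~5.2) for $\lambda<0$.

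First I would fix a reference extension. Since $\widetilde g$ is conformally extendable there is a defining function $\Om_0$ with $\barg:=\Om_0^2\widetilde g$ a (finitely differentiable) metric up to $\Sigma$; rescaling $\Om_0 \mapsto \widehat\omega\,\Om_0$ by a positive function I may assume that $\barg$ induces exactly $\gamma$ on $\Sigma$. Any other conformal extension has defining function $\Om = e^{\varphi}\Om_0$ for some function $\varphi$, with $g=\Om^2\widetilde g = e^{2\varphi}\barg$ and $g|_\Sigma = e^{2\varphi|_\Sigma}\gamma$; I therefore seek $\varphi$ with $\varphi|_\Sigma = 0$, which gives $g|_\Sigma = \gamma$ and, since $d\Om|_\Sigma = e^{\varphi|_\Sigma}\,d\Om_0|_\Sigma\neq 0$, ensures that $\Om$ is again a defining function.

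By Lemma~\ref{lemmageodesic} the extension $g$ is geodesic iff $g^{\alpha\beta}\partial_\alpha\Om\,\partial_\beta\Om = -\lambda$. Substituting $\Om = e^\varphi\Om_0$ (so that $g^{\alpha\beta}=e^{-2\varphi}\barg^{\alpha\beta}$) this reads
\begin{equation}\label{eqgeodcond}
\Om_0^2\,\barg^{\alpha\beta}\partial_\alpha\varphi\,\partial_\beta\varphi + 2\Om_0\,\barg^{\alpha\beta}\partial_\alpha\Om_0\,\partial_\beta\varphi + (N+\lambda) = 0,\qquad N:=\barg^{\alpha\beta}\partial_\alpha\Om_0\,\partial_\beta\Om_0 .
\end{equation}
The key point is that, $\widetilde g$ being ACC, the relation $N|_\Sigma = -\lambda$ holds, so the $\varphi$-independent term $N+\lambda$ vanishes on $\Sigma=\{\Om_0=0\}$; as $\Om_0$ is a defining function, Hadamard's lemma gives a function $h$ with $N+\lambda = \Om_0\,h$. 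Factoring out one power of $\Om_0$, equation \eqref{eqgeodcond} is equivalent (away from $\Sigma$, hence by continuity everywhere) to the non-degenerate equation
\begin{equation}\label{eqredPDE}
F(x^\alpha;\varphi,\partial_\alpha\varphi) := \Om_0\,\barg^{\alpha\beta}\partial_\alpha\varphi\,\partial_\beta\varphi + 2\,\barg^{\alpha\beta}\partial_\alpha\Om_0\,\partial_\beta\varphi + h = 0 .
\end{equation}
I would then read \eqref{eqredPDE} as a first-order Cauchy problem with data $\varphi|_\Sigma=0$ and verify that it is non-characteristic in the sense recalled above, so that the local existence result (\cite{evanspdes}, Ch.~3) applies. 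Using $\Om_0$ itself as the transverse coordinate $x^0$, one has $\partial_\alpha\Om_0=\delta^0_\alpha$, the normal covector is $T_\alpha=\partial_\alpha\Om_0$, and $V^\gamma=\mathcal{D}_{\nabla_\gamma\varphi}F$ restricts on $\Sigma$ to $V^\gamma|_\Sigma = 2\,\barg^{\gamma\beta}\partial_\beta\Om_0$. Hence $T\cdot V|_\Sigma = 2N|_\Sigma = -2\lambda\neq 0$, which also shows (via $V^0|_\Sigma=2\barg^{00}=-2\lambda$) that the compatibility condition fixes the normal derivative $\partial_0\varphi|_\Sigma$ uniquely. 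A local solution $\varphi$ then yields $\Om=e^\varphi\Om_0$ and the desired geodesic extension $g=e^{2\varphi}\barg$ restricting to $\gamma$.

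The main obstacle --- and the reason the non-characteristic formalism is set up beforehand --- is precisely the degeneracy of the geodesic condition at $\Sigma$: in its original eikonal form \eqref{eqgeodcond} every term carries a factor $\Om_0$ once $N+\lambda=\Om_0 h$ is taken into account, so the derivative vector $V$ vanishes on $\Sigma$ and the problem is characteristic there. The ACC condition $N|_\Sigma=-\lambda$ is exactly what permits dividing by $\Om_0$ a single time and recovering the genuinely non-characteristic problem \eqref{eqredPDE}; without it the term $N+\lambda$ would not vanish on $\Sigma$ and the reduction would break down. Regularity of the resulting $\Om$ up to $\scri$ is inherited from that of $\barg$ as in \cite{GrahamLee91}.
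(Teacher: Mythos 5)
Your proposal is correct and follows essentially the same route as the paper's proof: both rescale the defining function ($\hat\Om=\omega\Om$ with $u=\log\omega$ in the paper, $\Om=e^{\varphi}\Om_0$ in yours), reduce the geodesic condition of Lemma~\ref{lemmageodesic} to a first-order scalar Cauchy problem that is regular at $\Sigma$ precisely because the ACC condition makes $N+\lambda$ vanish there, and verify the non-characteristic condition $T\cdot V|_\Sigma=-2\lambda\neq 0$. Your use of Hadamard's lemma to factor $N+\lambda=\Om_0 h$ is just a cleaner phrasing of the paper's limit argument for the right-hand side of its equation \eqref{eqnoncharac}.
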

\begin{proof} 
 Consider a conformally extended metric $g$ such that $g = \Om^2 \widetilde g$ and $g\mid_{\Om = 0} = \gamma$. Let $\hat g \in [\widetilde g]$ be such that $\hat g = \omega^2 g$ with $\omega>0$ and $\omega \mid_{\Om=0} = 1$, so that $\hat g$ realizes the same boundary metric $\gamma$. Therefore $\hat g = \hat \Om^2 \widetilde g$, with $\hat \Om = \omega \Om$, so by Lemma \ref{lemmageodesic}, we have to show that there exists a function $\omega$ such that $\hat \Om$ satisfies \eqref{eqmodnabome} for the metric $\hat g$
 \begin{equation}
  \hat g^{\alpha \beta} \nabla_\alpha \hat \Om \nabla_\beta \hat \Om = \frac{g^{\alpha\beta}}{\omega^2}\nabla_\alpha(\omega \Om) \nabla_\beta(\omega \Om) = -\lambda.
 \end{equation}
Expanding the derivatives and defining $u := \log \omega$, this amounts to
 \begin{equation}\label{eqnoncharac}
   g^{\alpha \beta}\lr{2 \nabla_\alpha \Om \nabla_\beta u + \Om \nabla_\alpha u \nabla_\beta u } = \frac{-\lambda - g^{\alpha \beta} \nabla_\alpha \Om \nabla_\beta \Om}{\Om}.
 \end{equation}
 The LHS of  \eqref{eqnoncharac} is obviously regular at $\Om = 0$. Also, since $g$ is ACC $g^{\alpha \beta} \nabla_\alpha \Om \nabla_\beta \Om\mid_{\Om = 0} = -\lambda$, thus the RHS tends to $-\partial_{\Om}\lr{g^{\alpha \beta} \nabla_\alpha \Om \nabla_\beta \Om}$ at $\Om = 0$, which is regular. Hence, we can pose a Cauchy problem at $\{\Om = 0\}$, for which
 we must complete $\phi = \log \omega \mid_\Sigma = 0$ to admissible initial data for \eqref{eqnoncharac}. These data must satisfy \eqref{eqwelldata}, thus $\psi_0$ is fixed to satisfy
 $2 g^{0 0} \psi_0 = -\partial_\Om(g^{\alpha \beta}\nabla_\alpha \Om \nabla_\beta \Om)\mid_\Sigma$. The vector field $V$ has components  $2 g^{\alpha\beta} ( \nabla_{\beta} \Omega + \Omega \nabla_{\beta} u)$ and therefore
\begin{equation}
 T \cdot V(x^0 = 0,x^i;\phi,\psi_0,\frac{\partial \phi}{\partial x^1},\cdots,\frac{\partial \phi}{\partial x^n})    = 2 g^{\alpha\beta} \nabla_\alpha \Om \nabla_\beta \Om\mid_{\Sigma} = -2 \lambda.
\end{equation}
Hence the problem is non-characteristic if $\lambda \neq 0$. 
\end{proof}
\begin{remark}\label{remarkPoinnorm}
 This lemma combined with the use of Gaussian coordinates $\{\Om,x^i\}$ 
implies that given an ACC metric $\tilg$ and any representative $\gamma$  of its boundary conformal structure,  there exists coordinates near $\partial M$ where $\tilde{g}$ is written in normal form w.r.t that representative (see Definition \ref{defnormf}).
\end{remark}


\subsection{Formulae for the Weyl tensor}

We next state some useful results in order to calculate the Weyl tensor.Lemma \ref{lemmaCt} below gives the {\it $T$-electric part of the Weyl tensor} in Gaussian coordinates, which we define by
\begin{equation}\label{eqTelec}
 \lr{C_{T}}_{ \alpha \beta} := {C}_{\mu \alpha \nu \beta} T^\mu T^\nu = \Om^2 \widetilde C_{\mu \alpha \nu \beta} T^\mu T^\nu.
\end{equation}
 This tensor is proportional (with non-zero factor) to the standard {\it electric part of the Weyl tensor} w.r.t. to a unit vector $\nor$
\begin{equation}\label{eqelec}
 \lr{C_\perp}_{ \alpha \beta} := {C}_{\mu \alpha \nu \beta} \nor^\mu \nor^\nu.
\end{equation}
 For geodesic conformal extensions, the proportionality is just a constant, namely $C_\perp = \lambda^{-1} C_T$ provided $T$ and $u$ point into the same direction. If, in addition, the metric is ACC the rescaled Weyl tensors always satisfy
 \begin{equation}\label{CtCperp}
  (\Om^{2-n}C_\perp)\mid_\scri = \lambda^{-1} (\Om^{2-n} C_T)\mid_\scri,
 \end{equation} 
  whenever these quantities are finite. Hence, by adding the constant factor $\lambda$ we can use interchangeably the electric and $T$-electric parts of the Weyl tensors at $\scri$.

 After this remark, we state one of the two lemmas of this subssection, whose proof is given in Appendix \ref{appelec}.

\begin{lemma}\label{lemmaCt}
 Let $\tilg$ be a conformally extendable Einstein metric with $\Lambda\neq 0$ and $g = \Om^2 \tilg$ a geodesic conformal extension. Then, in Gaussian coordinates $\{\Om, x^i \}$, the $T$-electric part of the Weyl tensor reads
 \begin{equation}\label{eqCtnocov}
 \lr{C_{T}}_{ij} = \frac{\lambda^2}{2} \lr{\frac{1}{2} \partial_{\Om} g_{ik} g^{kl} \partial_{\Om} g_{lj}+  \frac{1}{\Om} \partial_\Om g_{ij} - \partial^{2}_\Om g_{ij}}.
\end{equation}
where $g_{ij} = (g_\Om)_{ij}$ is the metric induced by $g$ on the leaves $ \{\Om = const. \}$.
\end{lemma}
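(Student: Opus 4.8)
The plan is to start directly from equation \eqref{eqelecriem2}, which was derived above and already expresses the $T$-electric part of the Weyl tensor purely in terms of $A_{\alpha\beta} = \nabla_\alpha T_\beta$, its normal derivative $\partial_\Om A_{\alpha\beta}$, and the quadratic contraction $A^2_{\alpha\beta}$. All of the conformal and curvature algebra is encapsulated in that identity, so the only remaining task is to translate $A$, $\partial_\Om A$ and $A^2$ into expressions involving the induced metric $g_\Om$ by means of the normal form \eqref{eqnormform} and the formula \eqref{eqAder}.

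First I would restrict attention to the purely spatial components $ij$. The key simplification is that, in Gaussian coordinates adapted to the normal form, $A$ has no components along the normal direction: from \eqref{eqAder}, $A_{0\beta} = -\frac{\lambda}{2}\partial_\Om g_{0\beta}$, and since $g_{00} = -1/\lambda$ is constant and $g_{0i} = 0$, one obtains $A_{0\beta} = 0$. Hence $A$ is effectively a spatial tensor with $A_{ij} = -\frac{\lambda}{2}\partial_\Om g_{ij}$, and the full $T$-electric part is captured by its $ij$ block.

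Next I would substitute into the three terms of \eqref{eqelecriem2}. The derivative term is immediate, $\lambda\partial_\Om A_{ij} = -\frac{\lambda^2}{2}\partial_\Om^2 g_{ij}$, and the linear term gives $-\frac{\lambda}{\Om}A_{ij} = \frac{\lambda^2}{2\Om}\partial_\Om g_{ij}$. For the quadratic term one writes $A^2_{ij} = A_i{}^\mu A_{\mu j} = g^{\mu\nu}A_{i\nu}A_{\mu j}$; here the vanishing of $A_{0\beta}$ together with the block-diagonal structure of $g^{\mu\nu}$ (in particular $g^{0i} = 0$) forces only spatial indices to contribute, so that $A^2_{ij} = \frac{\lambda^2}{4}\,\partial_\Om g_{ik}\,g^{kl}\,\partial_\Om g_{lj}$. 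Collecting the three contributions and factoring out $\frac{\lambda^2}{2}$ then reproduces the claimed formula.

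I do not expect a genuine obstacle, since the substantive content is already carried by the established identity \eqref{eqelecriem2}: the proof is essentially a bookkeeping exercise. The one point deserving a moment of care is the index contraction in $A^2$, where one must check explicitly that the normal components drop out, so that the spatial inverse metric $g^{kl}$ appears in the final expression rather than the full spacetime inverse $g^{\mu\nu}$.
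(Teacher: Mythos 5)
Your proposal is correct and follows essentially the same route as the paper: the paper's own proof consists precisely of deriving \eqref{eqelecriem2} in the text preceding the lemma and then substituting $A_{\alpha\beta}=-\frac{\lambda}{2}\partial_\Om g_{\alpha\beta}$ from \eqref{eqAder}, which is exactly the bookkeeping you carry out. Your explicit verification that $A_{0\beta}=0$ and that the normal components drop out of $A^2_{ij}$ (so that the spatial inverse $g^{kl}$ appears) is the only detail the paper leaves implicit, and you handle it correctly.
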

\begin{remark}\label{remarkg3}
 Note that equation \eqref{eqCtnocov} implies that $C_T$ is always $O(\Om)$. In particular, in dimension $n = 3$ it is always the case that 
 \begin{equation}
  (\Om^{-1}C_T)\mid_\scri = -\frac{3 \lambda^2}{2} g_{(3)}
 \end{equation}
which recovers the well-known result by Friedrich \cite{Fried86initvalue}  that for positive  $\Lambda$ the electric part of the rescaled Weyl tensor corresponds to the free data specifiable at $\scri$.
\end{remark}

 Assume that $\hat g $ satisfies the hypothesis of Lemma \ref{lemmaCt} and that {its FG expansion is of the form }
    \begin{align*}
    \hat g  \sim \sum_{s=0}^{(n-1)/2} g_{(2s)} \Omega^{2s} + \Omega^{n+1} \el
  \end{align*}
  with $n$ odd and $\el$ at least $C^2$ up to an including $\{\Omega=0\}$.
  Equation \eqref{eqCtnocov} implies that its $T$-electric Weyl tensor
  $\hat C_T$ only has even powers of $\Omega$ up to and including $\Omega^{n-1}$
  (higher order terms may be even and odd).
  As a consequence, the tensor $\Omega^{2-n} \hat C_T$ splits as a sum of divergent terms at $\Omega=0$ plus a regular part which vanishes at
    $\Omega=0$. This fact will have interesting consequences when combined with Lemma \ref{lemmaWeyls} below.

    We now present a general result concerning the Weyl tensors of
      two general metrics $g$ and $\hat g$ related by the formula 
\begin{equation}\label{eqmetsghatg}
 g = \hat g + \Om^m q
\end{equation}
for a natural number $m \geq 2$, where $q$ is a symmetric tensor and all three tensors $g, \hat g$ and $q$ are at least $C^2$ in a neighbourhood including $\{\Om = 0\}$. No further assumptions besides minimal regularity conditions are imposed on $g$ or $\hat g$,  such as being Einstein or FPG. The result holds therefore in full generality and has potentially a wide range of applications. The proof involves a calculation whose details are given in Appendix \ref{appdomweyl}.

   
   \begin{lemma}\label{lemmaWeyls}
     Let $n \geq 3$ and  $g$, $\hat g$ be $(n+1)$-dimensional  metrics related by \eqref{eqmetsghatg}
     , for $m\geq 2$, with $g, \hat{g}, q$  and $\Om$ at least $C^2$ in a neighbourhood of $\{\Om=0\}$. Assume that $\nabla \Om$ is nowhere null at $\Om=0$. Then their Weyl tensors satisfy the following equation
 \begin{equation}\label{eqexpweyl}
 {C^\mu}_{\nu\alpha\beta } = {\hat C^\mu}_{~~\nu\alpha\beta } -  K_m(\Om) \frac{n-2}{n-1}(\nor^\mu \nor_{[\alpha} \tlt_{\beta]\nu} + {\tlt^\mu}_{~~ [\alpha}\nor_{\beta]}\nor_\nu) + \frac{\epsilon K_m(\Om)}{n-1}({h^\mu}_{[\alpha} \tlt_{\beta]\nu} + \tlt^\mu_{~~ [\alpha} h_{\beta]\nu}) + o(\Om^{m-2})
\end{equation}
with 
\begin{equation}
 K_m(\Om) =  m(m-1) \Om^{m-2} F^2,
\end{equation}
and where $\nabla \Om = F \nor$, for $g(\nor, \nor) = \epsilon = \pm 1$, $h_{\alpha\beta}$ is the projector orthogonal to $\nor$, all indices are raised and lowered with $g$, $t_{\alpha\beta} : = q_{\mu\nu}{h^\mu}_\alpha {h^\nu}_\beta$ while $t$ and $\tlt_{\alpha \beta}$ are its trace and traceless part respectively.
 \end{lemma}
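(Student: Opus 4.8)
The plan is to split both Weyl tensors via the curvature identity \eqref{eqWeyl}, ${C^\mu}_{\nu\alpha\beta} = {R^\mu}_{\nu\alpha\beta} + {\A^\mu}_{\nu\alpha\beta}$, and to control how each piece responds to the additive perturbation $g = \hat g + \Om^m q$ of \eqref{eqmetsghatg}. First I would record the corresponding expansion of the inverses, $g^{-1} = \hat g^{-1} + \Om^m l$ with $l$ of class $\mathcal{C}^2$: a correction of order $\Om^{m'}$ with $m'<m$ would survive in the product $g^{-1}g$ and violate $g^{-1}g = \mathrm{Id}$, so the first correction sits at order $\Om^m$. This lets me raise and lower indices in either frame at the cost of controlled $O(\Om^m)$ errors.

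Next I would feed \eqref{eqmetsghatg} into the general Riemann-difference formula \eqref{eqdiffriems} with $\tilg \to \hat g$, so that the difference-of-connections tensor $Q = \nabla - \widehat\nabla$ is given by \eqref{eqQtensdef}. This produces ${C^\mu}_{\nu\alpha\beta} = {\hat C^\mu}_{\nu\alpha\beta} + {\B^\mu}_{\nu\alpha\beta} + {D^\mu}_{\nu\alpha\beta}$, where $\B$ collects the $Q$-dependent curvature terms and $D := \A - \hat{\A}$ is the Schouten-type remainder. The central estimate is on $Q$: since $\nabla\hat g = -\nabla(\Om^m q)$, differentiating the prefactor $\Om^m$ gives $\hat Q_{\nu\alpha\beta} = O(\Om^{m-1})$ with leading coefficient proportional to $mF\Om^{m-1}(\nor_\nu q_{\alpha\beta} - \nor_\alpha q_{\beta\nu} - \nor_\beta q_{\alpha\nu})$; one further derivative lowers the exponent, producing the leading $O(\Om^{m-2})$ term of $\nabla Q$, while the quadratic $Q\cdot Q$ piece of $\B$ is $O(\Om^{2m-2}) = o(\Om^{m-2})$ for $m\geq 2$ and may be dropped.

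With the leading order of $\B$ secured, I would take its traces $\B_{\beta\nu}$ and $\B$, assemble $D$ (which at leading order depends only on $\B$ and those traces, the explicit $\Om^m$-terms being subleading), and then decompose $q$ relative to the unit normal, $q_{\alpha\beta} = U\nor_\alpha\nor_\beta + 2\nor_{(\alpha}V_{\beta)} + t_{\alpha\beta}$ with $V$ and $t$ tangential. The decisive simplification I anticipate is that the longitudinal components $U$ and $V$ drop out at order $\Om^{m-2}$: the index pattern $\nor_{[\alpha}\cdots\nor_{\beta]}$ forces every surviving term to come from the purely tangential part $t$. Finally I would split $t_{\alpha\beta} = \tlt_{\alpha\beta} + \tfrac{t}{n}h_{\alpha\beta}$ and collect; adding the $\B$- and $D$-contributions, the trace parts reorganize into the coefficients $\tfrac{n-2}{n-1}$ and $\tfrac{\epsilon}{n-1}$, which is \eqref{eqexpweyl}.

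I expect the principal obstacle to be exactly this last reorganization: confirming that the longitudinal data decouples, and that the two independent sources of trace terms—the Ricci contribution hidden in $\B$ and the Schouten recombination in $D$—conspire to leave only the traceless tensor $\tlt$ with the stated numerical factors, rather than stray $\tlt$-independent remainders. Carrying this out cleanly hinges on systematically using $\hat g_{\alpha\beta} = \epsilon\nor_\alpha\nor_\beta + h_{\alpha\beta} + O(\Om^m)$ and ${\delta^\alpha}_\beta = \epsilon\nor^\alpha\nor_\beta + {h^\alpha}_\beta$ to resolve every metric and Kronecker delta into normal-plus-tangential pieces, after which the nonnull hypothesis $\epsilon = \pm 1$ ensures the projector terms carry their expected weight.
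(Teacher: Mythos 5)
Your proposal follows the paper's proof essentially step for step: the same inverse-metric expansion $g^{-1}=\hat g^{-1}+\Om^m l$, the same splitting ${C^\mu}_{\nu\alpha\beta}={\hat C^\mu}_{~~\nu\alpha\beta}+{\B^\mu}_{\nu\alpha\beta}+{D^\mu}_{\nu\alpha\beta}$ via \eqref{eqdiffriems} and \eqref{eqWeyl}, the same $O(\Om^{m-1})$ and $O(\Om^{m-2})$ estimates on $\hat Q$ and $\nabla Q$ with the quadratic $Q\cdot Q$ piece discarded as $o(\Om^{m-2})$, and the same normal/tangential decomposition of $q$ followed by the trace split $t_{\alpha\beta}=\tlt_{\alpha\beta}+\tfrac{t}{n}h_{\alpha\beta}$. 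The only caveat worth noting is that the $V$-terms do not vanish individually from the index pattern $\nor_{[\alpha}\cdots\nor_{\beta]}$ but rather cancel between the two pieces $\nor^\mu\nor_{[\alpha}q_{\beta]\nu}$ and ${q^\mu}_{[\alpha}\nor_{\beta]}\nor_\nu$ of the leading term of $\B$ — a point you correctly flag as needing verification, and which does work out.
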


Lemma \ref{lemmaWeyls} has an interesting application in the context of data at $\scri$. Consider a FGP metric $\tilg$ and a geodesic conformal extension $g = \Om^2 \tilg$ and assume that either $n$ is odd or that
     the obstruction tensor is identically zero if $n$ is even.      The FG expansion of this metric allows one to decompose $g = \hat g + \Om^n q$ where $\hat g$ is a metric containing all the terms of the expansion of order strictly lower than $n$ (and possibly also higher
     order terms, but \underline{not} the term at order $n$). The rest of terms are
     collected in $\Om^nq$. By construction all these objects are $C^\infty$ up to and including ${\Omega=0}$ (here we use the assumption that the obstruction tensor vanishes in the even case). Hence all the hypothesis of Lemma \ref{lemmaWeyls} hold with $m=n$. From
   equation \eqref{eqexpweyl}, the $T$-electric part of the Weyl tensors
  of $g$ and of $\hat g$ are related by
\begin{equation}
  (C_T)_{ij}
  = (\hat C_T)_{ij}
  - \Om^{n-2} \lambda^2 n(n-2) \tlt_{ij} + o(\Om^{n-2}), \label{relWeyls}
\end{equation}
It follows immediately from the FG expansion and the definition of $\tlt$ in
Lemma \ref{lemmaWeyls}
that $\tlt_{ij} |_{\Omega=0} = tf(g_{(n)})$ (taking the trace-free part
is unnecessary when $n$ is odd because $g_{(n)}$ is always trace-free in that case).
The tensor $(\hat C_T)_{ij}$
is in general $O(1)$ in $\Omega$, so
$\Omega^{2-n} (\hat C_T)_{ij}$ will generically contain $[(n-1)/2]$ divergent terms,
and the same divergent terms must appear in $\Omega^{2-n} (C_T)_{ij}$ because of
\eqref{relWeyls}. Substracting the divergent terms we get
\begin{equation}
  \label{freedata}
 \left ( \Om^{2-n} (C_T)_{ij}  - \Om^{2-n} (\hat
C_T)_{ij} \right ) |_\scri  = -\lambda^2 n(n-2) tf(g_{(n)}), 
\end{equation}
which provides a general formula for the free data in terms of the
electric parts of the Weyl tensors of  $g$ and $\hat g$ at $\scri$. In the case of $n$ odd more can be said because, as justified below Lemma \ref{lemmaCt}, the regular part of $(\hat C_T)_{ij}$ vanishes at $\scri$
and \eqref{freedata} establishes that $g_{(n)}$ arises as the value of $(C_T)_{ij}$
at $\scri$ once all its divergent terms have been substracted. This last statement is not true in the $n$ even case with zero obstruction tensor, since $\Om^{2-n}(\hat C_T)_{ij}$ may contain regular non-zero terms.

{In the next subsection we will prove that in arbitrary dimension and
for conformally flat $\scri$,  $(\hat C_T)_{ij}$ vanishes so the $T$-electric part of the rescaled Weyl tensor of $g$
actually encodes the trace-free part $tf(g_{(n)})$. In the non-conformally flat case $\Omega^{2-n} C_T$ is generically
    divergent and \eqref{freedata} gives a prescription to remove the divergent terms to retrieve the trace-free part. In the context of AdS/CFT correspondence a
    useful method to remove divergent terms is by means of the so-called      renormalization techniques. One method
    \cite{papasken04,papasken04bis,Sken02} involves decomposing objects in terms of a basis of eigenfunctions of a dilation operator. It would be interesting to analyze whether this method has any relationship with \eqref{freedata}, or whether it can be used to be make the removal of divergent quantities
    more explicit.}


\subsection{Free data and the Weyl tensor}

The aim of this subsection is to determine the role that the electric part of the rescaled Weyl tensor plays in the FG expansion coefficients, with particular interest in the conformally flat $\scri$ case. We  will use formula \eqref{eqCtnocov} to relate the electric part of the rescaled Weyl tensor to the $n$-th order coefficient $g_{(n)}$ of the FG expansion. We start with some preliminary results about umbilical submanifolds (also called {\it totally umbilic}). Recall that a nowhere null submanifold $\Sigma \subset M$ is umbilical if its second fundamental form is 
\begin{equation}
 K_{ij} = f(x^k) \gamma_{ij}
\end{equation}
for a smooth function $f$ of $\Sigma$ and $\gamma$ the induced metric. This property is well-known to be invariant under conformal scalings of the ambient metric.

\begin{lemma}\label{lemmumb1}
 Let $n \geq 2$. Every nowhere null umbilical hypersurface $(\Sigma,\gamma)$ of a conformally flat $(n+1)$-manifold $(M,\hat g)$, where $\gamma$ is induced by $\hat g$, is conformally flat.   
\end{lemma}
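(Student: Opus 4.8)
The plan is to exploit the conformal invariance of the umbilical property recalled just above, together with the classical fact that a conformally flat manifold is, near each point, conformally related to a flat metric. First dispose of the case $n=2$: every $2$-dimensional (pseudo-)Riemannian manifold is locally conformally flat, so $(\Sigma,\gamma)$ is conformally flat with no further argument. Hence I assume $n\ge 3$, so that $M$ has dimension $n+1\ge 4$ and conformal flatness of $\hat g$ is equivalent (by the Weyl--Schouten theorem, under the differentiability available here) to the local existence, about any point of $\Sigma$, of a positive function $\theta$ with $\barg := \theta^2 \hat g$ flat.

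Next I transfer the problem to this flat ambient picture. Since umbilicity is invariant under conformal rescalings of the ambient metric, $\Sigma$ remains umbilical as a hypersurface of $(M,\barg)$, with second fundamental form $\bar K_{ij} = \bar f\,\bar\gamma_{ij}$, where $\bar\gamma = \theta^2|_\Sigma\,\gamma$ is the metric induced by $\barg$. As $\Sigma$ is nowhere null, its unit normal has a definite causal character $\epsilon = \pm 1$, and the Gauss equation in the flat ambient reduces (the ambient Riemann tensor vanishing and $\bar K$ being pure trace) to the structural form
\begin{equation}
 R^{\bar\gamma}_{ijkl} = \epsilon\,\bar f^2\,\big(\bar\gamma_{ik}\bar\gamma_{jl} - \bar\gamma_{il}\bar\gamma_{jk}\big),
\end{equation}
up to the overall sign fixed by the curvature convention. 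Thus the intrinsic curvature of $\bar\gamma$ is pointwise of constant-curvature type with curvature function $\kappa = \epsilon\,\bar f^2$.

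I then conclude as follows. For $n\ge 4$ this algebraic form already exhibits the Weyl tensor of $\bar\gamma$ as vanishing, since the right-hand side is pure trace and contributes nothing to the totally trace-free part of the curvature; hence $\bar\gamma$ is conformally flat. For $n=3$ the Weyl tensor vanishes automatically and carries no information, so one must instead control the Cotton tensor: here Schur's lemma (the contracted second Bianchi identity, valid for $n\ge 3$ in any signature) forces $\kappa$ to be constant, so $\bar\gamma$ has genuinely constant curvature and is again conformally flat. In either case $\bar\gamma$ is conformally flat, and since $\gamma = \theta^{-2}|_\Sigma\,\bar\gamma$ and conformal flatness is a conformally invariant property of the intrinsic metric, $(\Sigma,\gamma)$ is conformally flat as well.

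I expect the main obstacle to be precisely the dimension-$3$ case, where the vanishing of the Weyl tensor is empty and one needs the vanishing of the Cotton tensor. This is exactly where passing to the flat ambient is essential: it reduces the Gauss equation to the clean constant-curvature form on which Schur's lemma bites, whereas arguing directly with a merely conformally flat (non-flat) ambient would require propagating the ambient Schouten and Cotton tensors through the full Gauss--Codazzi relations, which is considerably more delicate.
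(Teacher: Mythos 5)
Your proof is correct and follows essentially the same route as the paper's: conformal invariance of umbilicity to pass to a flat ambient representative, the Gauss equation to put the induced curvature in pointwise constant-curvature form, and then the Weyl tensor for $n\ge 4$ versus the Cotton tensor (equivalently, constant curvature) for $n=3$. The only micro-difference is that you obtain constancy of the curvature in the $n=3$ case from Schur's lemma (the contracted second Bianchi identity of the induced metric), whereas the paper gets the constancy of the umbilical factor $\kappa$ directly from the Codazzi equation in the flat ambient; both are valid and essentially equivalent.
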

\begin{proof}
For $n=2$ the result is immediate as every 2-surface is locally conformally flat, so let us assume $n \geq 3$. Since umbilical submanifolds remain umbilical w.r.t. to the whole conformal class of the metrics and $\hat g$ is conformally flat, then $(\Sigma,\gamma)$ is umbilical w.r.t. the flat metric $g_E = \omega^2 \hat g$. In this gauge, the Gauss equation and its trace by $\gamma$ yield
\begin{align}
R(\gamma)_{ijkl}& = -\epsilon (K_{il} K_{jk} - K_{ik} K_{jl}) = -\epsilon( \gamma_{il} \gamma_{jk} - \gamma_{ik} \gamma_{jl}) \kappa^2,\\
 R{(\gamma)}_{jl} & = -\epsilon (K^2_{jl} - K K_{jl}) = -\epsilon(1 - n) \kappa^2 \gamma_{jl},
\end{align}
where $K_{ij} = \kappa  \gamma_{ij}$ is the second fundamental form, for $\kappa \in \mathbb{R}$ constant as a consequence of the Codazzi equation and the fact that the ambient metric $g_E$ is flat, and $K^2_{ij} := \gamma^{kl}K_{ik} K_{jl}$, $K := \gamma^{ij}K_{ij}$, $\epsilon = \hat g(\nor, \nor)$ with $\nor$ the unit normal to $\Sigma$. The Schouten tensor of $\gamma$ is 
\begin{equation}
 P(\gamma)_{ij} =\frac{1}{n-2}\lr{R(\gamma)_{ij} - \frac{R(\gamma)}{2(n-1)}\gamma_{ij}} =\epsilon \frac{\kappa^2}{2}\gamma_{ij} .
\end{equation}
Thus for $n=3$ we can calculate the Cotton tensor
\begin{equation}
 Y(\gamma)_{ijk} = \nabla_k P(\gamma)_{ij} - \nabla_j P(\gamma)_{ik} = 0,
\end{equation}
and for $n \geq 4$ the Weyl tensor is
\begin{equation}
 C(\gamma)_{ijkl} = R(\gamma)_{ijkl} - \gamma_{ik} P(\gamma)_{jl} + \gamma_{jk} P(\gamma)_{il} + \gamma_{il} P(\gamma)_{jk}-\gamma_{jl} P(\gamma)_{ik} = 0.
\end{equation}
By the standard characterization of locally conformally flat metrics by the vanishing of the Cotton ($n=3$) or Weyl ($n \geq 4$) tensors, the result follows.
\end{proof}

The following results are stated imposing the minimal conditions of differentiability required near $\scri$. We remark than for the cases of our interest, namely FGP metrics, these conditions are always satisfied.

\begin{lemma}\label{lemmumb2}
 Let $g$ and $\hat g$ be metrics related by $g = \hat g + \Om^m q$, where $\Om$ is a defining function of $\Sigma = \{ \Om  = 0\}$ and $g, \hat g$ and $q$ are  $C^1$ in a neighbourhood of $\Sigma$. Then if $m \geq 2$, $\Sigma$ is umbilical w.r.t. $g$ if and only if it is umbilical w.r.t. $\hat g$.
\end{lemma}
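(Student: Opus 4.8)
The plan is to exploit the elementary fact that the induced metric and the second fundamental form of a hypersurface are built entirely from the ambient metric and its \emph{first} derivatives evaluated on that hypersurface. Since umbilicity, $K_{ij} = f\,\gamma_{ij}$, is a pointwise condition on $\Sigma$ involving only these two objects, it suffices to show that $g$ and $\hat g$ induce the same metric and the same second fundamental form on $\Sigma$. The entire argument therefore reduces to verifying that $g$ and $\hat g$ agree \emph{to first order} along $\Sigma$, and it is precisely at this point that the hypothesis $m\geq 2$ enters.

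First I would record that, from $g = \hat g + \Om^m q$ with $q$ of class $\mathcal{C}^1$,
\begin{equation}
 \partial_\mu g_{\alpha\beta} = \partial_\mu \hat g_{\alpha\beta} + m\,\Om^{m-1}(\partial_\mu \Om)\, q_{\alpha\beta} + \Om^m\, \partial_\mu q_{\alpha\beta}.
\end{equation}
Evaluating on $\Sigma = \{\Om = 0\}$ and using $m\geq 2$, so that both $\Om^{m-1}$ and $\Om^m$ vanish there, I obtain $g_{\alpha\beta}|_\Sigma = \hat g_{\alpha\beta}|_\Sigma$ together with $\partial_\mu g_{\alpha\beta}|_\Sigma = \partial_\mu \hat g_{\alpha\beta}|_\Sigma$ for every index $\mu$, tangential or transverse. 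Consequently the inverse metrics and the Christoffel symbols coincide on $\Sigma$, namely $g^{\alpha\beta}|_\Sigma = \hat g^{\alpha\beta}|_\Sigma$ and $\Gamma^\mu_{\alpha\beta}|_\Sigma = \hat\Gamma^\mu_{\alpha\beta}|_\Sigma$, since each of these is an algebraic expression in the metric, its inverse and its first derivatives.

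From here the conclusion is immediate. Because $g|_\Sigma = \hat g|_\Sigma$, the hypersurface $\Sigma$ has the same causal character for both metrics, so the unit normal is well defined for one precisely when it is for the other; moreover the common defining function $\Om$ gives a normal covector $\nor_\alpha \propto \partial_\alpha\Om$, normalized through $g^{\mu\nu}\partial_\mu\Om\,\partial_\nu\Om$, which therefore agrees on $\Sigma$ for the two metrics. The induced metric $\gamma$ agrees because $g|_\Sigma = \hat g|_\Sigma$, and the second fundamental form, expressible through the covariant derivative of the normal (equivalently, in adapted coordinates, in terms of $\Gamma^\mu_{ij}$ and first derivatives of $\nor$, as in \eqref{eqAder}), agrees because all of its constituents agree on $\Sigma$. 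Hence $K_{ij} = f\,\gamma_{ij}$ holds for $g$ if and only if it holds for $\hat g$, with the same function $f$. The only delicate point, and the sole place where $m\geq 2$ is essential, is the matching of first derivatives: for $m=1$ the derivatives of the metric would pick up the extra term $(\partial_\mu\Om)\,q_{\alpha\beta}|_\Sigma$, which in the transverse direction is generically nonzero and alters the second fundamental form, breaking the equivalence.
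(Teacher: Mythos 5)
Your proof is correct and follows essentially the same route as the paper's: both arguments reduce umbilicity to the agreement on $\Sigma$ of the induced metric, the unit normal and the Christoffel symbols, with the hypothesis $m \geq 2$ entering precisely to make the $\Om^{m-1}$ contribution to the first derivatives of the metric (equivalently, the $O(\Om^{m-1})$ difference of connections) vanish at $\{\Om = 0\}$. The only cosmetic difference is that you track the first derivatives of $g_{\alpha\beta}$ directly while the paper phrases the same computation in terms of $\Gamma^\mu_{\alpha\beta} = \hat\Gamma^\mu_{\alpha\beta} + O(\Om^{m-1})$ and the coincidence of $\nabla\nor$ and $\hat\nabla\nor$ on $\Sigma$.
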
 
\begin{proof}
 The metrics induced by $g$ and $\hat g$ at $\Sigma$ are the same. Assume that $\Sigma$ is nowhere null. Thus, the property of being umbilical is preserved if the covariant derivatives $\nabla \nor$ and $\hat  \nabla \nor$ w.r.t. the Levi Civita connections of $g$ and $\hat g$ respectively of the normal unit (which is the same for $g$ and $\hat g$) covector $\nor \in (T\Sigma)^\perp$ coincide at $\Sigma$. The inverse metric $g^{-1}$ is $g^{-1} = \hat g^{-1} + \Om^m l$ for $l$ a contravariant tensor $O(1)$ in $\Om$ (cf. equation \eqref{eqmetinvghatg} and argument below). Then, the Christoffel symbols are
 \begin{equation}  
  \Gamma^\mu_{\alpha \beta} = (\hat g^{-1} + \Om^m l)^{\mu \nu}(\partial_{\alpha}(\hat g + \Om^m q)_{\beta\nu} + \partial_{\beta}(\hat g + \Om^m q)_{\alpha\nu}-\partial_{\nu}(\hat g + \Om^m q)_{\alpha\beta}) = \hat \Gamma^\mu_{\alpha \beta} + O(\Om^{m-1}),
 \end{equation}
from which it follows $\nabla \nor \mid_\Sigma = \hat  \nabla \nor \mid_\Sigma$ if $m \geq 2$.  
\end{proof}

Our interest in umbilical submanifolds is because of the (well-known) fact that $\scri$ is umbilical for Poincaré or FGP metrics. This results follows immediately from the Einstein equations at $\scri$, and will be the base for an interesting decompostion that we will derive later in this section (cf. Proposition \ref{propconflatdec}).

\begin{lemma}\label{lemmascriumb}
Let $\tilg$ be a Poincaré or FGP metric for $\scri = (\Sigma,[\gamma])$. Then $\scri$ is umbilical.
\end{lemma}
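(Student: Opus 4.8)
The plan is to show that the tangential part of the second fundamental form of $\scri$ is pure trace, by isolating the trace-free part of the Hessian of the defining function and proving that it vanishes at $\scri$ directly from the asymptotic Einstein equation. Since umbilicity is conformally invariant, it suffices to establish the claim for a single conformal extension; so first I would use Lemma \ref{lemmaexistgeod} and Remark \ref{remarkPoinnorm} to take $g=\Om^2\tilg$ to be a geodesic conformal extension in normal form with respect to a representative $\gamma\in[\gamma]$. In this gauge the tensor $A_{\alpha\beta}=\nabla_\alpha\nabla_\beta\Om$ is, by \eqref{eqAder}, proportional to the second fundamental form of the leaves $\Sigma_\Om$, so that $\scri$ is umbilical precisely when $A_{ij}|_\Sigma\propto\gamma_{ij}$.

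The key tool is the Ricci relation \eqref{eqrelriccis}. Writing the (asymptotic) Einstein condition as $\widetilde R_{\alpha\beta}=n\lambda\tilg_{\alpha\beta}+S_{\alpha\beta}$ with $\tilg_{\alpha\beta}=\Om^{-2}g_{\alpha\beta}$ and $S_{\alpha\beta}$ the Einstein error, I observe that every term on the right-hand side of \eqref{eqrelriccis} carrying a factor $g_{\alpha\beta}$ is pure trace, and so is $n\lambda\tilg_{\alpha\beta}=n\lambda\,\Om^{-2}g_{\alpha\beta}$. Hence taking the trace-free part (with respect to $g$) of \eqref{eqrelriccis} annihilates all of these contributions, leaving only the Hessian term,
\begin{equation*}
 \mathrm{tf}(R_{\alpha\beta})-\mathrm{tf}(S_{\alpha\beta}) = -\frac{n-1}{\Om}\,\mathrm{tf}(\nabla_\alpha\nabla_\beta\Om).
\end{equation*}
Multiplying by $\Om$ and letting $\Om\to0$, I expect the left-hand side to vanish, because the Ricci tensor $R_{\alpha\beta}$ of $g$ is bounded up to $\scri$ (the extension $g$ is smooth in the Poincaré case, and for FGP metrics the logarithmic terms enter only at order $\Om^n$ with $n\geq4$, so that $g$ is at least $\mathcal C^2$ at $\Sigma$ and $\Om R_{\alpha\beta}\to0$), while $\Om^{-1}S_{\alpha\beta}\to 0$ by the decay imposed in Definition \ref{defpoincare}. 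This forces $\mathrm{tf}(\nabla_\alpha\nabla_\beta\Om)|_\Sigma=0$, i.e. the Hessian of $\Om$ is pure trace at $\Sigma$; restricting to $T\Sigma$ yields $A_{ij}|_\Sigma\propto\gamma_{ij}$ and hence the umbilicity of $\scri$.

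The step requiring most care is the decay of $S_{\alpha\beta}=\widetilde R_{\alpha\beta}-n\lambda\tilg_{\alpha\beta}$: I must confirm that $\Om^{-1}\mathrm{tf}(S_{\alpha\beta})\to0$ in each case of Definition \ref{defpoincare}. For $n$ odd and for $n=2$ this is immediate since $S_{\alpha\beta}$ vanishes to infinite order, whereas for $n\geq4$ even the condition $S_{\alpha\beta}=O^+(\Om^{n-2})$ gives $S_{\alpha\beta}=O(\Om^2)$, so the limit holds there as well. A shorter but less self-contained alternative would be to invoke directly that the FG expansion is even up to order $n$, so that in normal form $\partial_\Om g_\Om|_{\Om=0}=0$ and $\scri$ is in fact totally geodesic for $g$, hence a fortiori umbilical; I would nonetheless prefer the argument above, as it uses only the Einstein equations at $\scri$ through \eqref{eqrelriccis} without appealing to the detailed structure of the expansion.
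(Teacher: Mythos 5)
Your argument is correct and is essentially the paper's own proof: both multiply the conformal Ricci relation \eqref{eqrelriccis} by $\Om$ in a geodesic gauge, use the decay of $Ric(\tilg)-\lambda n\tilg$ imposed in Definition \ref{defpoincare} together with the boundedness of $R_{\alpha\beta}$ up to $\scri$, and conclude that the tangential Hessian of $\Om$ --- i.e.\ the second fundamental form --- is pure trace there; projecting onto the trace-free part first is only a cosmetic reorganization of the same computation. The one small bookkeeping quibble is that after multiplying by $\Om$ you only need $\Om\,\mathrm{tf}(S_{\alpha\beta})\to 0$, not the stronger condition $\Om^{-1}\mathrm{tf}(S_{\alpha\beta})\to 0$ that you verify, but since the latter does hold in every case of Definition \ref{defpoincare} this is harmless.
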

\begin{proof}
For a geodesic conformal extension $g = \Om^2 \tilg$, the relation between the Ricci tensors of $g$ and $\tilg$ is given by \eqref{eqrelriccis} with $\nabla_\mu \Om \nabla^\mu \Om = -\lambda$ (cf. Lemma \ref{lemmageodesic}). This expression is not defined at $\Om = 0$, but it is when multiplied by $\Om$. Rearranging terms this gives
\begin{equation}\label{eqOmriccgeod}
  \Om R_{\alpha \beta} + (n-1) \nabla_\alpha \nabla_\beta \Om + g_{\alpha \beta} \nabla_\mu \nabla^\mu \Om = \Om(\widetilde{R}_{\alpha\beta} - \lambda n \tilg_{\alpha\beta}),
\end{equation}
where we have used $g = \Om^2 \tilg$ in the RHS. Since $\tilg$ is a Poincaré or FGP metric, the RHS vanishes at $\scri$. This also implies that $g_{\alpha\beta}$ is at least $C^2$ at $\scri$, so $R_{\alpha\beta}$ is defined at $\scri$. In addition writing $\nabla_\alpha \Om = |\lambda|^{1/2}\nor_\alpha$, where $\nor$ is the unit normal of the hypesurfaces $\Sigma_\Om = \{\Om = const. \}$, then $\nabla_i \nabla_j \Om\mid_\scri = |\lambda|^{1/2} K_{ij}$, where $K_{ij}$ is the second fundamental form of $\scri$. Thus, equation \eqref{eqOmriccgeod} gives at $\scri$
\begin{equation}
 (n-1)|\lambda|^{1/2} K_{ij} + f \gamma_{ij} = 0,\quad\quad\mbox{with}\quad\quad f:= \nabla_\mu   \nabla^\mu \Om \mid_{\scri}.
\end{equation} 
\end{proof}

For concreteness, in the remainder of this section, we state and prove  our results in the case of positive cosmological constant and Lorentzian signature. However, they also hold with slight modifications for arbitrary signature and non-vanishing cosmological constant (see Remark \ref{remarkgen} for the specific correspondence).

We start by giving the general form of the FG expansion of the de Sitter spacetime. We refer the reader to \cite{skenderis} for a similar proof in the case of $\lambda<0$. Also, see a discussion of general case in \cite{eastwoodrev} (in terms of Fefferman-Graham ambient metrics).

\begin{lemma}\label{lemmapoinconflat}
 For every Riemmanian conformally flat boundary metric $\gamma$ of dimension $n \geq 3$ and positive cosmological constant $\lambda$, the metric 
 \begin{equation}\label{eqpoindS}
 g = -\frac{\dif \Om^2}{\lambda} + g_{\Om} \quad\quad g_{ij} = 
 {\Ap^k}_i {\Ap^l}_k \gamma_{lj},\quad\quad\mbox{for}\quad {\Ap^k}_i  :=  {\delta^k}_i + \frac{\Om^2}{2\lambda }P_{i l}\gamma^{lk}
\end{equation}
with $P$ the Schouten tensor of $\gamma$, is locally conformally isometric to de Sitter, i.e. $g = \Om^2 \tilg_{dS}$, where $\tilg_{dS}$ is de Sitter.
\end{lemma}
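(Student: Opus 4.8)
The plan is to prove that $\tilg := \Om^{-2} g$ has constant curvature equal to $\lambda$, since a Lorentzian constant-curvature metric with $\lambda>0$ is locally de Sitter. The natural route is to establish two facts about $\tilg$: that it is Einstein with $\widetilde{Ric}(\tilg) = n\lambda\,\tilg$, and that it is (locally) conformally flat; in the spacetime dimension $n+1\geq 4$ these two conditions together force constant sectional curvature, and the contracted second Bianchi identity fixes that constant to be $\lambda$ via the Einstein relation. First I would record the closed form obtained by squaring $\Ap$, namely
\begin{equation*}
 g_{ij} = \gamma_{ij} + \Om^2 P_{ij} + \frac{\Om^4}{4}\, P_{ik}\gamma^{kl}P_{lj},
\end{equation*}
so that $g$ is in normal form w.r.t.\ $\gamma$ in the sense of Definition \ref{defnormf}, with an \emph{even} and \emph{finite} expansion whose boundary value is $g|_\scri = \gamma$.

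The cleanest way to obtain conformal flatness is a local reduction to a flat boundary representative. Since $\gamma$ is conformally flat and the statement is local, I may write $\gamma = \omega^2 \delta$ with $\delta$ flat; then $P(\delta)=0$, the formula \eqref{eqpoindS} collapses to $g = -\dif\Om^2/\lambda + \delta$, which has constant components in the coordinates $\{\Om,x^i\}$ and is therefore flat. Consequently $\tilg = \Om^{-2}g$ is conformal to a flat metric, so its Weyl tensor vanishes, and feeding $\nabla_\alpha\Om\nabla^\alpha\Om = -\lambda$ and $\nabla_\alpha\nabla_\beta\Om = 0$ into \eqref{eqrelriccis} gives at once $\widetilde R_{\alpha\beta} = n\lambda\,\tilg_{\alpha\beta}$; the flat case is thus de Sitter. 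To transport this to a general conformally flat representative I would invoke the conformal covariance of the construction: de Sitter is an ACC Einstein metric whose conformal infinity carries the flat conformal class $[\delta]=[\gamma]$, so by Lemma \ref{lemmaexistgeod} and Remark \ref{remarkPoinnorm} it admits a normal-form extension $-\dif\Om^2/\lambda + \bar g_\Om$ with $\bar g_\Om|_\scri = \gamma$, and by the uniqueness part of Theorem \ref{theoFG} the coefficients of $\bar g_\Om$ are determined by $\gamma$. It then suffices to check that the explicit $g_\Om$ of \eqref{eqpoindS} reproduces those coefficients, i.e.\ that the stated $g$ is Einstein with boundary $\gamma$, which identifies $\tilg$ with de Sitter up to local isometry.

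The remaining verification — that the explicit finite expansion satisfies the Einstein equation for \emph{arbitrary} conformally flat $\gamma$ — is the main obstacle and the only genuinely computational step. Here I would substitute into the Fefferman--Graham recursion (equivalently, into the Einstein equation written in the Gaussian gauge of \eqref{eqnormform}) the identity that for conformally flat $\gamma$ the curvature is the Kulkarni--Nomizu product of $\gamma$ with its Schouten tensor, $R(\gamma)_{ijkl} = \gamma_{ik}P_{jl} - \gamma_{il}P_{jk} + \gamma_{jl}P_{ik} - \gamma_{jk}P_{il}$; the bookkeeping of the $\Om^2$ and $\Om^4$ terms coming from $\Ap = \delta + \tfrac{\Om^2}{2}\gamma^{-1}P$ and its square is where the algebra concentrates. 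As a consistency cross-check I would also compute the $T$-electric Weyl component directly from Lemma \ref{lemmaCt}, plugging the explicit $g_\Om$ into \eqref{eqCtnocov} and confirming $(C_T)_{ij}\equiv 0$, which both corroborates conformal flatness along the leaves (cf.\ Lemma \ref{lemmumb1}) and anticipates the role this metric will play as the conformally flat reference used in the next subsection.
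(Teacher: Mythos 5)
Your overall skeleton (Einstein $+$ conformally flat $\Rightarrow$ constant curvature $\Rightarrow$ locally de Sitter) is sound, and your treatment of the flat representative $\gamma=\delta$ is complete. The problem is that everything needed for a \emph{general} conformally flat $\gamma$ is deferred to steps that are not carried out. First, the reduction to $\delta$ does not transfer: the assignment $\gamma\mapsto g$ in \eqref{eqpoindS} goes through the Schouten tensor and is not manifestly conformally covariant, so flatness of the metric built from $\delta$ says nothing about the metric built from $\omega^2\delta$; establishing that covariance is essentially equivalent to the Lemma itself. Second, your transport via the uniqueness part of Theorem \ref{theoFG} needs the two metrics to be FGP \emph{with the same free data} $tf(g_{(n)})$; but the value of de Sitter's $g_{(n)}$ relative to a general conformally flat representative is precisely part of what must be computed, so this step is circular unless the Einstein verification is done first. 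Third, that verification --- which you yourself flag as ``the main obstacle and the only genuinely computational step'' --- is only described, not performed, so the proof is incomplete at its crux. Note also that your consistency cross-check ($C_T=0$ for the explicit metric) cannot substitute for it: Einstein plus vanishing $T$-electric Weyl is the purely magnetic condition, which is exactly the subject of the open conjecture recalled in Remark \ref{remarkpuremag}, so it does not by itself yield constant curvature.

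The paper's proof avoids the hard computation entirely by running the logic in the opposite direction. It starts from de Sitter, which is already Einstein and conformally flat and which, by Lemmas \ref{lemmascriumb}, \ref{lemmumb1} and \ref{lemmaexistgeod}, admits a geodesic normal form $-\dif\Om^2/\lambda+g_\Om$ with respect to \emph{any} prescribed conformally flat boundary representative $\gamma$. Conformal flatness of de Sitter gives $C_T=0$, and Lemma \ref{lemmaCt} converts this into the closed second-order ODE \eqref{eqgddotg} for $g_\Om$. Differentiating that ODE shows $\partial^{(5)}_\Om g_\Om=0$ and, with the standard input $g_{(2)}=P$, yields $g_{(4)}=\tfrac14 P\gamma^{-1}P$ and nothing else --- i.e.\ exactly \eqref{eqpoindS}. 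The Einstein property is never checked for the explicit metric; it is inherited from de Sitter, and FG uniqueness then identifies the two. To salvage your approach you would have to actually carry out the substitution of \eqref{eqpoindS} into the Einstein equations for arbitrary conformally flat $\gamma$; the paper's inverted argument is precisely the device that makes this unnecessary.
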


\begin{proof} 
De Sitter spacetime is ACC and its boundary metric $\gamma$ is (by Lemmas \ref{lemmascriumb} and \ref{lemmumb1}) necessarily conformally flat. Moreover, given the freedom in scaling any conformal extension by an arbitrary positive function, any conformally flat metric is (locally) a boundary metric for the de Sitter space. In addition, as a consequence of this fact and Lemma \ref{lemmaexistgeod} we have that for any  conformally flat metric $\gamma$, there exists a local coordinate system of de Sitter near null infinity  such that the metric is in normal form with respect to $\gamma$ (see Remark \ref{remarkPoinnorm}). The core of the proof is to verify that this ACC metric in normal form w.r.t any such conformally flat $\gamma$ takes the explicit form \eqref{eqpoindS}.

Therefore, consider a conformally flat boundary metric $\gamma$ for a geodesic conformal extension of de Sitter $g$. Since de Sitter metric is also conformally flat, it follows that the $T$-electric part of the Weyl tensor $C_T = 0$. Using formula \eqref{eqCtnocov} we obtain the coefficients of the FG expansion, which give the normal form of $g$ w.r.t. $\gamma$. Let us put \eqref{eqCtnocov} in matrix notation
\begin{equation}\label{eqgddotg}
 C_T = \frac{\lambda^2}{2} \left (\frac{1}{2}\dot g_\Om g^{-1}_\Om \dot g_\Om + \frac{1}{\Om} \dot g_\Om - \ddot g_{\Om} \right ) = 0 \quad\Longrightarrow\quad \ddot g_\Om = \frac{1}{2}\dot g_\Om g^{-1}_\Om \dot g_\Om + \frac{1}{\Om} \dot g_\Om
\end{equation}
where a dot denotes derivative w.r.t. $\Om$. First we calculate
\begin{equation}\label{eqdergom}
 \partial_\Om(\dot g_\Om g^{-1}_\Om \dot g_{\Om})= \ddot g_\Om g_\Om^{-1} \dot g_\Om - \dot g_\Om g_\Om^{-1} \dot g_\Om g^{-1}_\Om \dot g_\Om + \dot g_\Om g^{-1}_\Om \ddot g_\Om = \frac{2}{\Om} \dot g_\Om g^{-1}_\Om \dot g_\Om
\end{equation}
where we have used $\partial_\Om {(g_\Om^{-1})} = -g^{-1}_\Om \dot g_\Om g_{\Om}^{-1}$ for the first equality and expression of $\ddot g_\Om$ in \eqref{eqgddotg} for the second equality. Then, taking two derivatives in $\Om$ of  \eqref{eqgddotg} gives
\begin{equation}\label{eqd4gOm}
 \partial^{(4)}_\Om g_\Om =\frac{3}{2 \Om^2} \dot g_\Om g^{-1}_\Om \dot g_\Om.
\end{equation}
Thus, taking one more derivative in $\Om$ of \eqref{eqd4gOm} and combining with \eqref{eqdergom} gives $\partial^{(5)}_\Om g_\Om = 0$ and hence all higher derivates also vanish. Expression \eqref{eqd4gOm} evaluated at $\Om = 0$ gives the expressions for the coefficients (note $\partial^{(k)}_\Om g_\Om\mid_{\Om = 0} = k! g_{(k)}$)
\begin{equation}
  g_{(4)}
  =  \frac{1}{4} g_{(2)} \gamma^{-1}g_{(2)}.
\end{equation}
 The coefficient $g_{(2)}$ can be directly calculated from the recursive relations for the FG expansion and it always coincides with the Schouten tensor of the boundary metric (e.g. \cite{Anderson2004,deHaro}), namely\footnote{{In \cite{Anderson2004} the cosmological parameter $\lambda$ is set to one. The factor $\lambda^{-1}$ arises by simply observing that $\tilg_\lambda = \lambda \tilg$ satisfies the Einstein equations with $\lambda = 1$. Then, the asymptotic analysis of $\tilg_\lambda$ is analogous to the one for $\tilg$ with conformal factor $\Om_\lambda = \lambda^{-1/2}\Om$.}}
 \begin{equation} 
  g_{(2)} = \frac{\lambda^{-1}}{n-2}\lr{Ric(\gamma) - \frac{R(\gamma)}{2(n-1)} \gamma } = P.
 \end{equation}  
 
 Having calculated the only non-zero coefficients $g_{(2)}$ and $g_{(4)}$, it is straightforward to verify that the FG expansion of de Sitter takes the form \eqref{eqpoindS}.
 
 We have shown that for any choice of conformally flat $\gamma$, there exists a de Sitter metric $\tilg_{dS}$  and a choice of conformal factor $\Om$ with associated Gaussian coordinates such that, defining $g$  as in \eqref{eqpoindS}, we have  $\Om^{-2} g = \tilg_{dS}$. Moreover, the metric \eqref{eqpoindS} satisfies all the properties stated in Theorem \ref{theoFG} with the choice $h=0$ if $n\neq 4$ and $h= P^2/(2\lambda)$. Recall that we are assuming $n \geq 3$ and that the obstruction tensor vanishes identically  when $\gamma$ is conformally flat. Now the lemma follows as a consequence of the uniqueness part of the FG expansion stated in Theorem \ref{theoFG}.
\end{proof} 

\begin{remark}\label{remarkgen}
{The result generalizes to arbitrary signature and arbitrary sign of $\lambda$ 
(see \cite{Anderson2004,Sken02} for a discussion on the relation between $\lambda$ positive and negative cases), by changing $\gamma$ to a conformally flat metric of signature $(p,q)$ and $g$ to conformal to a metric of constant curvature (instead of conformal to de Sitter) and signature $(p+1,q)$ if $\lambda>0$ or $(p,q+1)$ if $\lambda<0$. Taking this into account, Proposition \ref{propconflatdec} and Theorem \ref{theognweyl} below easily extend to arbitrary  signature and arbitrary sign of $\lambda$.}
\end{remark}
\begin{remark}\label{remarkpuremag}
 The proof of Lemma \ref{lemmapoinconflat} shows that the condition $C_T = 0$ suffices to obtain a metric of the form \eqref{eqpoindS} with $\gamma$ in an arbitrary conformal class.
 The spacetimes satisfying this condition are the so-called ``purely magnetic'' and they have a long tradition in general relativity (e.g. \cite{barnes04} and references therein). The purely magnetic condition implies restrictive integrability conditions which lead to a conjecture \cite{mcintosh94} that no Einstein spacetimes exist in the $n=3$ case, besides the spaces of constant curvature. Although no general proof has been found so far, the conjecture has been established in restricted cases such as Petrov type $D$, and this not only in dimension four, but in arbitrary dimensions \cite{hervik13}. 
 The explicit form \eqref{eqpoindS} that the metric must take whenever $C_T=0$ gives an avenue to analyze the conjecture in the case of metrics admitting a conformal compactification. This is an interesting problem which, however, falls beyond the scope of the present paper.  
\end{remark}

Before proving Theorem \ref{theognweyl}, we state and prove an auxiliary result (Proposition \ref{propconflatdec}) which is of independent interest since it provides (when combined with Lemma \ref{lemmaWeyls}) a  useful decomposition for calculating leading order terms of the Weyl tensor. This will be exploited in the calculation of  
initial data of spacetimes which admit a smooth conformally flat $\mathscr I$ (cf. Corollary \ref{coroland}).     

 \begin{proposition}\label{propconflatdec}
Assume $n \geq 3$. Let $\tilg$ be a FGP metric with $\lambda$ positive for a Riemannian conformal manifold $\scri = (\Sigma,[\gamma])$. Then $\mathscr I$ is locally conformally flat if and only if any geodesic conformal extension $g = \Omega^2 \tilg$, admits the following decomposition 
  \begin{equation}\label{eqdecconflat}
   g = \hat g + \Om^n q
  \end{equation}
where $\hat g$ is conformally isometric to de Sitter 
and $\hat g$, $q$ and $\Om$ are at least $C^1$ in a neighbourhood of $\{\Om = 0 \}$.
 \end{proposition}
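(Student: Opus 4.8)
The statement is an equivalence, so the plan is to prove the two implications separately. I would dispatch the reverse direction first, as it follows quickly by chaining the umbilicity lemmas already established, and then treat the forward direction, where the real work lies.

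For the implication ($\Leftarrow$), assume $g = \hat g + \Om^n q$ with $\hat g$ conformally isometric to de Sitter. Since $\tilg$ is FGP and $g = \Om^2 \tilg$ is a geodesic conformal extension, $\scri$ is umbilical with respect to $g$ by Lemma~\ref{lemmascriumb}. As $n \geq 3 > 2$, Lemma~\ref{lemmumb2} transfers this property, so $\scri$ is umbilical with respect to $\hat g$ as well. Now $(M, \hat g)$ is conformally flat, conformal flatness being a conformal invariant and de Sitter being conformally flat, while the metric that $\hat g$ induces on $\scri$ coincides with $\gamma$ because $g - \hat g = O(\Om^n)$ vanishes at $\{\Om = 0\}$. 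An umbilical hypersurface of a conformally flat manifold is conformally flat by Lemma~\ref{lemmumb1}, so $[\gamma]$ is locally conformally flat, as claimed.

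For the implication ($\Rightarrow$), assume $\scri$ is locally conformally flat and let $g = \Om^2 \tilg$ be an arbitrary geodesic conformal extension with boundary representative $\gamma := g|_\scri$. By Lemma~\ref{lemmageodesic} the field $\nabla \Om$ is geodesic of constant norm $-\lambda$, so flowing along it yields Gaussian coordinates in which $g = -\dif \Om^2/\lambda + g_\Om$, i.e.\ $g$ is in normal form with respect to $\gamma$ (cf.\ Remark~\ref{remarkPoinnorm}). I would then define $\hat g$ by the explicit formula~\eqref{eqpoindS} of Lemma~\ref{lemmapoinconflat} for this same $\gamma$; that lemma guarantees $\hat g$ is conformally isometric to de Sitter and is itself in normal form with respect to $\gamma$. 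Since conformal flatness of $\scri$ forces the obstruction tensor to vanish, both $g$ and $\hat g$ are Poincar\'e metrics sharing the boundary datum $\gamma$, so by Theorem~\ref{theoFG} their FG coefficients $g_{(r)}$ agree for every $r < n$: these are universally generated by $\gamma$ and do not involve $g_{(n)}$. Because the expansion is even up to order $n$, it follows that $g_\Om - \hat g_\Om = O(\Om^n)$, and since the $\dif \Om^2$ parts cancel identically, $q := \Om^{-n}(g - \hat g)$ extends by Taylor's theorem to a tensor that is at least $\mathcal{C}^1$ up to $\{\Om = 0\}$, giving the decomposition~\eqref{eqdecconflat}.

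The main obstacle is the bookkeeping in the forward direction: one must verify that all FG coefficients of order strictly below $n$ genuinely coincide for $g$ and $\hat g$, so that the remainder truly begins at order $\Om^n$, and that $\Om^{-n}(g - \hat g)$ is well defined and sufficiently regular at $\scri$. Both points hinge on the structure of the FG recursion established in Theorem~\ref{theoFG}: the vanishing obstruction tensor for conformally flat $\gamma$ excludes logarithmic terms, the even parity of the expansion up to order $n$ pins down the coefficient matching, and the smoothness of $g_\Om$ up to the boundary (in the Poincar\'e case) controls the regularity of $q$. By contrast, once Lemmas~\ref{lemmascriumb}, \ref{lemmumb2} and~\ref{lemmumb1} are invoked in sequence, the reverse direction is essentially immediate.
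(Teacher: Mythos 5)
Your proof is correct and follows essentially the same route as the paper: the reverse implication chains Lemmas \ref{lemmascriumb}, \ref{lemmumb2} and \ref{lemmumb1} exactly as in the text, and the forward implication builds $\hat g$ from the explicit de Sitter normal form of Lemma \ref{lemmapoinconflat} and matches the FG coefficients below order $n$ via Theorem \ref{theoFG}. If anything, you spell out more carefully than the paper why the decomposition holds for an \emph{arbitrary} geodesic conformal extension and why $q=\Om^{-n}(g-\hat g)$ is regular at the boundary.
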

 
 \begin{proof}
 $\scri$ is umbilical w.r.t. $g$ and  if $g$ admits the decomposition \eqref{eqdecconflat}, by Lemma \ref{lemmumb2} $\mathscr I$ is also umbilical w.r.t. $\hat g$. 
 Since $\hat{g}$ is conformally flat, Lemma \ref{lemmumb1} implies that $\scri$ is also conformally flat. This proves the proposition in one direction.
 
  The converse follows by considering the FGP metric in normal form constructed from a representative $\gamma$ in the conformal structure of $\mathscr I$. By assumption, $\gamma$ is conformally flat. The terms up to order $n$ are uniquely generated by $\gamma$ (see Theorem \ref{theoFG} and discussion below). Thus, by Lemma \ref{lemmapoinconflat} 
  \begin{equation}
   g = -\frac{\dif \Om^2}{\lambda} + {\Ap^k}_i {\Ap^l}_k \gamma_{lj} + \Om^n q := \hat g + \Om^n q,
  \end{equation}
  where $\hat g$ is locally conformally isometric to de Sitter and $\hat g$, $q$ and $\Om$ are smooth at $\Om = 0$ by construction.
 \end{proof}

{Now observe that for any set of initial data $(\gamma,g_{(n)})$, one can always add a TT term $\mathring g_{(n)}$ to $g_{(n)}$ so that $(\gamma,g_{(n)} + \mathring g_{(n)})$ gives a new set of initial data. On the other hand, decomposition \eqref{eqdecconflat} in the conformally flat $\scri$ case reads
 \begin{equation}\label{decgn}
   g = -\frac{\dif \Om^2}{\lambda} + {\Ap^k}_i {\Ap^l}_k \gamma_{lj} + \Om^n (\mathring g_{(n)} + \cdots ).
  \end{equation}
  Therefore, if $n \neq 4$ and $n \geq 3$, then $g_{(n)} = \mathring g_{(n)}$ and if $n=4$, then $g_{(4)} = \barg_{(4)} + \mathring g_{(4)}$, where $\barg_{(4)}$ is the term of order four in \eqref{eqpoindS}. This forces $\mathring g_{(n)}$ to be TT, because it is immediate from Lemma \ref{lemmapoinconflat} that de Sitter is given by data $(\gamma, 0)$ for $n \neq 4$ and $n \geq 3$ and by $(\gamma, \barg_{(4)})$ if $n = 4$. Therefore, in the conformally flat $\scri$ case we can always extract the TT term $\mathring g_{(n)}$, which will be called {\bf free part} of $g_{(n)}$.
   Note that a pair $(\gamma,\mathring g_{(n)})$ is equivalent to $(\gamma,g_{(n)})$.}
 
   {We may now extend to the case of arbitrary $\lambda$ the relation
     between the electric part of the rescalled Weyl tensor and the
     coefficient $g_{(n)}$ obtained in \cite{Holl05} for the negative $\lambda$ case. As discussed in the introduction, this extension could be inferred from the general results in \cite{Sken02}. However, our argument is fully conformally covariant and follows directly from the general identity in Lemma \ref{lemmaWeyls}.}

\begin{theorem}\label{theognweyl}
{Assume  $n \geq 3$ and let $\tilg$ be a FGP metric with $\lambda$ positive for a Riemannian conformal manifold $\scri = (\Sigma,[\gamma])$.
 Then, if $\mathscr I$ is conformally flat, $\mathring{g}_{(n)}$, the free part of the $n$-th order coefficient of the FG expansion satisfies
 \begin{equation}
  \Om^{2-n} C_T \mid_{\scri} = -\frac{\lambda^2}{2} n(n-2)  \mathring{g}_{(n)} 
 \end{equation}

 }
\end{theorem}
\begin{proof}
By Proposition \ref{propconflatdec}, admitting a smooth conformally flat $\mathscr I$ amounts to admitting a decomposition of the form \eqref{eqdecconflat}. Then, by Lemma \ref{lemmapoinconflat}, the associated FG expansion has the form 
\begin{equation}
 g = -\frac{\dif\Om^2}{\lambda} + g_\Om = -\frac{\dif\Om^2}{\lambda} + {\Ap^k}_i {\Ap^{l}}_k \gamma_{l j}  + \Om^n \mathring{g}_{(n)} + \cdots = \hat g + \Om^n q ,
\end{equation}
where $ q \mid_{\mathscr I} = \mathring{g}_{(n)}$ and $\hat g$ is  conformal to de Sitter. Using the formula \eqref{eqexpweyl} of Lemma \ref{lemmaWeyls} with $m=n$ and putting $T = |\lambda|^{1/2}u$, with $u$ unit normal, one obtains
\begin{equation}
 (C_T)_{\alpha \beta} = -\frac{\lambda^2}{2} n(n-2)  \tlt_{\alpha \beta} \Om^{n-2} + o(\Om^{n-2})
\end{equation}
and the Theorem follows.

\end{proof}

\begin{remark}
Although this theorem concentrates on the electric part of the Weyl tensor, its proof (which  is based on Lemma \ref{lemmaWeyls}) actually establishes that the full Weyl tensor decays at $\scri$ as $\Omega^{n-2}$. In \cite{OrtaggioWeyl}, the authors analyze the asymptotic behaviour along null godesics of vacuum solutions with non-zero cosmological constant. Letting $r$ be an affine parameter along the geodesics and assuming a priori that suitable components of the Weyl tensor decay at infinity faster than $r^{-2}$ the authors prove a certain peeling behaviour of the Weyl tensor, with the fastest components decaying like $r^{-(n+2)}$
and the slowest as $r^{2-n}$. It is clear that there is a connection between the two results. It would be interesting to establish and analyze this connection, which hopefully would lead to a weakening of the a priori decay rate assumed in \cite{OrtaggioWeyl}.
\end{remark}

\begin{remark}\label{remarpuremag2} 
It is also interesting to comment on the necessary and sufficient conditions for $\mathring{g}_{(n)}$ and $\Om^{2-n}C_T\mid_\scri$ to be the same in the case of Einstein metrics. Just like in the proof of Lemma \ref{lemmapoinconflat}, if $C_T$ has a zero of order $m>3$, we can apply formula \eqref{eqCtnocov} and find
\begin{equation}\label{eqpartial5gom}
 \partial^{(5)}_\Om g_\Om = O(\Om^{m-3})
\end{equation} 
and all coefficients of the FG expansion vanish up to order  $g_{(m+2)}$. If, like in the conformally flat case, $C_T$ has a zero of order $n-2$ , its leading order term determines $\mathring{g}_{(n)}$. 
  If $n$ is odd, we can construct (cf. Theorem \ref{theoanderson})
  two  solutions of the $\Lambda > 0$ Einstein field equations $\hat g$ and $g$
  in a neighbourhood of $\{\Om = 0 \}$, the first one corresponding to the data
  $(\scri, \gamma, 0)$ and the second to the data
  $(\scri, \gamma, g_{(n)})$ where
  $\gamma$ is in an arbitrary conformal class. By the FG expansion we also have
  $g = \hat g + \Om^n q$  with $q = g_{(n)}+O(\Om)$.
  As a consequence of \eqref{eqpartial5gom}, the metric $\hat g$  
  is of the form \eqref{eqpoindS} with $\gamma $ in the given conformal
  class. Then, from equation \eqref{eqCtnocov} it  follows that $\hat g$ is purely magnetic. The converse is also true, namely, if $g = \hat g + \Om^n q$, with $\hat g$ a purely magnetic Einstein spacetime and both $\hat g$, $q$ and $\Om$ are $C^2$ near $\{ \Om = 0 \}$, the electric part of the rescaled Weyl tensor at $\scri$ and $\mathring{g}_{(n)}$ coincide (up to a constant) provided $n>2$. The proof involves simply  taking the $T$-electric part in \eqref{eqexpweyl}. 

This proves that, for Einstein metrics with positive $\Lambda$, of dimension $n+1 \geq 4$ and admitting a conformal compactification, $\mathring{g}_{(n)}$ and $C_T\mid_\scri$ coincide  up to a constant if and only if $g = \hat g + \Om^n q$, where $\hat g$ is a purely magnetic spacetime Einstein with non-zero cosmological constant. However, as mentioned in Remark \ref{remarkpuremag}, it is not clear (and not an easy question) whether purely magnetic Einstein spacetimes are locally isometric to de Sitter or anti-de Sitter spacetime. 
\end{remark}


{
Note that Theorem \ref{theognweyl} has been proven  for metrics of all dimensions $n \geq 3$ and arbitrary signature. An interesting Corollary arises when applying this to the case of $\Lambda>0$ Einstein metrics of Lorentzian signature and odd $n$, because the coefficients of the FG expansion $\gamma $ and $g_{(n)}$ determine initial data at $\scri$ which characterize the spacetime metric (cf. Theorem \ref{theoanderson}). In a similar manner, if $n$ is even and the data $(\gamma,g_{(n)})$ are analytic with $\gamma$ Riemannian, the convergence of the FG expansion (cf. Theorem \ref{theokichen}) it holds in general for any sign of $\Lambda$. Thus, we obtain  a characterization result also for this case.
}

\begin{corollary}\label{coroland}
 {Let $n \geq 3$ be odd. Then for every set asymptotic data $(\Sigma,\gamma,g_{(n)})$ of Einstein's vacuum equations with $\Lambda>0$ and $\gamma$ conformally flat, the free part $\mathring g_{(n)}$ is up to a constant, the electric part of the rescaled Weyl tensor at $\mathscr I$ of the corresponding spacetime. Similarly, if $n \geq 4$ is even, the same statement holds for every analytic asymptotic data $(\Sigma,\gamma,g_{(n)})$, with $\gamma$ Riemannian and for any sign of non-zero $\Lambda$.} 
\end{corollary}

\section{KID for analytic metrics}\label{appendixKID}

In this section we prove a result that determines, in the analytic case, the necessary and sufficient conditions for initial data at $\scri$ so that the corresponding spacetime metric it generates admits a local isometry. The proof relies in the FG expansion of FGP metrics. It is important to remark that analytic metrics correspond to analytic data \cite{ambientmetric}.

This theorem is a generalization to higher dimensions (but restricted to the analytic case) of a known result \cite{KIDPaetz} in dimension $n=3$ establishing that a set $(\Sigma,\gamma, D)$ of asymptotic data at spacelike $\scri$, where $D = g_{(3)}$ (cf. Remark \ref{remarkg3}) is a TT tensor, generates a spacetime admitting one Killing vector field if and only if $g_{(3)}$ satisfies the following Killing Initial Data (KID) equation for $\Y$ a conformal Killing vector field (CKVF) of $\gamma$
 \begin{equation}
 \mathcal{L}_\Y g_{(3)} + \frac{1}{3} \mathrm{div}_\gamma(\Y) g_{(3)}=0.  
\end{equation}
A posteriori $\Y$ it is also the restriction to $\scri$ of the Killing vector of the spacetime whose existence is guaranteed by the theorem.

{In this section we focus in the analytic data case, as we shall require convergence of the FG expansion (cf. Theorem \ref{theokichen}) in the proof of the next theorem.}

\begin{theorem}\label{theoKID}
 Let $\Sigma$ be $n$ dimensional with $n \geq 3$ and let
$(\Sigma,\gamma,g_{(n)})$ be asymptotic data in the analytic class, with $\gamma$ Riemannian and if $n$ even $\obs = 0$. Then, the corresponding spacetime admits a Killing vector field if and only if there exist a CKVF $\Y$ of $\mathscr I$ satisfying the following Killing initial data (KID) equation 
 \begin{equation}\label{eqKID}
 \mathcal{L}_\Y g_{(n)} + \frac{n-2}{n} \mathrm{div}_\gamma(\Y) g_{(n)}=0.
\end{equation}
 \end{theorem}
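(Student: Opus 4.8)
The plan is to reduce both implications to a single finite covariance identity, namely that the data obey $\phi_t^\star g_{(n)} = e^{(2-n)\sigma_t} g_{(n)}$ under the flow $\phi_t$ of $\Y$, and to recover the KID equation \eqref{eqKID} as its infinitesimal version. Three structural facts will be used. First, the map from asymptotic data to spacetime is \emph{diffeomorphism covariant}: for any local diffeomorphism $\phi$ of $\Sigma$, the FGP metric built from $(\phi^\star\gamma,\phi^\star g_{(n)})$ is isometric to the one built from $(\gamma,g_{(n)})$ through a bulk diffeomorphism extending $\phi$, since the entire FG recursion is assembled from natural operations on $\gamma$ and $g_{(n)}$. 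Second, the \emph{conformal gauge freedom} recorded earlier, $(\Sigma,\gamma,g_{(n)}) \simeq (\Sigma,\omega^2\gamma,\omega^{2-n}g_{(n)})$, expresses that two such data describe the same physical metric $\tilg$ in different conformal compactifications. Third, \emph{uniqueness}: once a boundary representative $\gamma$ is fixed, the datum $g_{(n)}$ determines $\tilg$; this is the content of Anderson's Theorem \ref{theoanderson} for $n$ odd and, for $n$ even with $\obs=0$, of the convergence of the analytic FG expansion together with the uniqueness in Theorem \ref{theoFG}, which is precisely why the hypotheses are imposed.

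For necessity, suppose $\tilg$ admits a Killing vector $\widetilde X$. In the analytic category $\widetilde X$ extends to $\scri$; writing $\mathcal{L}_{\widetilde X}\tilg=0$ as $\mathcal{L}_{\widetilde X}g = 2(\widetilde X \log\Om)\,g$ and demanding regularity at $\{\Om=0\}$ forces $\widetilde X$ to be tangent to $\scri$, so that $\widetilde X$ is a conformal Killing vector of $g$ whose restriction $\Y := \widetilde X|_\scri$ is a CKVF of $\gamma$, say $\mathcal{L}_\Y\gamma = 2\chi\gamma$ with $\chi=\tfrac1n\mathrm{div}_\gamma\Y$. The isometry flow $\widetilde\Phi_t$ of $\widetilde X$ restricts on $\scri$ to the flow $\phi_t$ of $\Y$, with $\phi_t^\star\gamma = e^{2\sigma_t}\gamma$. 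Since $\widetilde\Phi_t$ maps $\tilg$ to itself, the datum of $\tilg$ in the gauge $\phi_t^\star\gamma$, computed both by transporting the data ($\phi_t^\star g_{(n)}$) and by the conformal equivalence ($e^{(2-n)\sigma_t}g_{(n)}$), must agree by uniqueness, giving $\phi_t^\star g_{(n)} = e^{(2-n)\sigma_t}g_{(n)}$; differentiating at $t=0$ with $\dot\sigma_0=\chi$ yields \eqref{eqKID}.

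For sufficiency, let $\Y$ be a CKVF satisfying \eqref{eqKID} and let $\phi_t$ be its flow, $\phi_t^\star\gamma = e^{2\sigma_t}\gamma$. Set $h_t := e^{(n-2)\sigma_t}\phi_t^\star g_{(n)}$, so $h_0 = g_{(n)}$. Using $\tfrac{d}{dt}\phi_t^\star = \phi_t^\star\mathcal{L}_\Y$ and $\dot\sigma_t = \phi_t^\star\chi$ one computes
\begin{equation*}
 \tfrac{d}{dt} h_t = e^{(n-2)\sigma_t}\,\phi_t^\star\!\left(\mathcal{L}_\Y g_{(n)} + \tfrac{n-2}{n}\,\mathrm{div}_\gamma(\Y)\,g_{(n)}\right) = 0,
\end{equation*}
so $\phi_t^\star g_{(n)} = e^{(2-n)\sigma_t}g_{(n)}$ for all $t$. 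Hence $\phi_t$ sends the data $(\gamma,g_{(n)})$ to $(e^{2\sigma_t}\gamma,\phi_t^\star g_{(n)})$, which by the conformal equivalence with $\omega=e^{-\sigma_t}$ is again $(\gamma,g_{(n)})$; by covariance $\phi_t$ therefore lifts to a bulk isometry $\Phi_t$ of $\tilg$, and the flow $\{\Phi_t\}$ is generated by a Killing vector $\widetilde X$ of $\tilg$ with $\widetilde X|_\scri=\Y$.

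The main obstacle is the interface between this soft covariance argument and the hard analytic input. Concretely, one must justify that a Killing vector extends analytically to $\scri$ and is tangent to it, that the data-to-spacetime map is genuinely injective in each parity (for $n$ even this is exactly where $\obs=0$, Riemannian $\gamma$ and analyticity enter, through the convergence result and Theorem \ref{theoFG}), and that the conformal weight of $g_{(n)}$ is precisely $2-n$, so that the coefficient $\tfrac{n-2}{n}$ in \eqref{eqKID} is correct. A more explicit alternative is to expand $\mathcal{L}_{\widetilde X}g = 2\psi g$ order by order in $\Om$ in the normal-form gauge: the trace of the order-$r$ equation fixes the next coefficient of $\psi$, while its trace-free part is automatically satisfied for $r\neq n$ because $g_{(r)}$ is a natural function of $\gamma$ and $\Y$ is a conformal symmetry of $\gamma$, the single genuine obstruction appearing at $r=n$ as \eqref{eqKID}; making this rigorous requires controlling the equivariance of the FG recursion at every order, which is the technical heart of the argument.
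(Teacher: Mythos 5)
Your strategy is genuinely different from the paper's: you integrate everything to the finite covariance identity $\phi_t^\star g_{(n)} = e^{(2-n)\sigma_t}g_{(n)}$ and appeal to uniqueness and naturality of the data--to--spacetime map, whereas the paper works infinitesimally throughout. For necessity the paper expands $\mathcal{L}_X g_\Om$ order by order in the normal-form gauge, establishes $\mathrm{div}_g X|_{\scri} = \frac{n+1}{n}\mathrm{div}_\gamma\Y$, and reads off \eqref{eqKID} from the $\Om^n$ coefficient. For sufficiency it first passes to the conformal gauge in which $\Y$ is a genuine Killing vector of $\gamma$ (so the KID equation collapses to $\mathcal{L}_\Y g_{(n)}=0$), constructs the candidate $X$ by $[T,X]=0$, proves $\mathcal{L}_\Y g_{(r)}=0$ for every $r$ by induction on the FG recursion, and uses analyticity to conclude $\mathcal{L}_X g=0$. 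That is precisely the ``more explicit alternative'' you sketch in your closing paragraph, and it is where the paper spends its effort; what your version buys is a cleaner conceptual packaging (the weight $2-n$ of $g_{(n)}$ immediately produces the coefficient $\tfrac{n-2}{n}$, which you verify correctly), at the price of leaning on the well-posedness of the data--to--spacetime correspondence, which for $n$ even is only available in the analytic category.

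The two structural facts you assume are, however, exactly the technical content of the theorem, so as written the proposal relocates rather than resolves the work. (i) The extension of the Killing vector to $\scri$ is not automatic even for analytic $\tilg$: the paper obtains it from Friedrich's observation that the $\mathcal{L}_X g_{0\beta}$ components of the conformal Killing equation form a linear homogeneous symmetric hyperbolic propagation system whose domain of dependence reaches $\scri$; your phrase ``demanding regularity at $\{\Om=0\}$'' presupposes the extension it is meant to establish. (ii) The diffeomorphism covariance and injectivity of the map $(\gamma,g_{(n)})\mapsto \tilg$ is, in the even-dimensional analytic case, precisely the equivariance of the FG recursion at every order --- the induction the paper performs --- and the gauge choice making $\Y$ a true Killing vector of $\gamma$ is what renders that induction tractable. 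Finally, your sufficiency argument produces an isometry $\Phi_t$ for each $t$ but a Killing vector requires $t\mapsto\Phi_t$ to be a differentiable local one-parameter group; this follows from uniqueness of the lift (an isometry whose conformal extension fixes $\scri$ pointwise and preserves the geodesic normal form is the identity), but it must be said. None of these are wrong turns, and the outline is sound; they are gaps to be filled, and the paper's proof shows how to fill them.
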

 \begin{proof}
Showing that \eqref{eqKID} is necessary is proved by direct calculation as follows. Let $X$ be a Killing vector field of $\tilg$ so that
\begin{equation}
 0 = \mathcal{L}_X \tilg = \mathcal{L}_X (\Om^{-2}g) = - 2 \frac{X(\Om)}{\Om^3} g + \frac{1}{\Om^2} \mathcal{L}_X g.
\end{equation}
It follows that on $\mathrm{Int(M)}$, $X$ is a conformal Killing vector of $g$ with a specific right-hand side, namely
\begin{equation}\label{eqconfkill} 
 \mathcal{L}_X g_{\alpha\beta}  = \nabla_\alpha X_\beta + \nabla_\beta X_\alpha =  2 \frac{\mathrm{div}_g X}{n+1} g_{\alpha\beta}, \quad\quad X(\Om) = \frac{\Om}{n+1}\mathrm{div}_g X.
\end{equation}
The following argument  \cite{Fried86initvalue} shows that $X$ must be extendable to $\scri$. The terms $\mathcal{L}_X g_{ 0 \beta}$ of \eqref{eqconfkill} imply a linear, homogeneous symmetric hyperbolic system of propagation equations for $X$. Thus, putting initial data corresponding to $X$ sufficiently close to $\scri$ generates a solution whose domain of dependence must reach $\scri$ (and possibly beyond if the manifold is extendable across $\scri$).
Hence $X$ must admit a smooth extension on $\mathscr I$, which vanishes near $\mathscr I$ only if $X\mid_{\mathscr I} = 0$. The rest of equations $\mathcal{L}_X g_{ ij} $ are also satisfied at $\scri$ by continuity so the extension is a CKVF. 

Then, from the second of equations \eqref{eqconfkill}, it follows that $X(\Om) = 0$ when $\Om = 0$, thus $X$ is tangent to $\mathscr I$, so we denote $\Y :=X\mid_{\mathscr I}$. Putting $g$ in normal form $g = -\frac{\dif \Om^2}{\lambda} + g_\Om$ it easily follows that $\Gamma^\alpha_{\alpha j}  = \Gamma^i_{i j}$. In consequence, expanding $\mathrm{div}_g X$ and evaluating at $\scri$ yields
\begin{align}
 \mathrm{div}_g X\mid_{\mathscr I} & = \partial_\Om(X(\Om))\mid_{\mathscr I} + \partial_j \Y^j + \Gamma^i_{ij}\mid_{\mathscr I} \Y^j \\ 
 & =   \frac{1}{n+1} \mathrm{div}_g X\mid_{\mathscr I}  + ~ \mathrm{div}_\gamma \Y  
  \quad \Longrightarrow\quad 
 \mathrm{div}_g X\mid_{\mathscr I} =  \frac{n+1}{n}\mathrm{div}_\gamma \Y \label{eqdivXdivY}
\end{align}
where we have used the second equation in \eqref{eqconfkill}. 
In addition, the normal form gives the following tangent components of the first equation in \eqref{eqconfkill}:
\begin{equation} 
 \mathcal{L}_X g_\Om =  \frac{2}{n+1} \mathrm{div}_g X g_\Om.
\end{equation}
Evaluating this expression at $\scri$ and taking into account \eqref{eqdivXdivY} shows that $\Y$ is a CKVF of $\gamma$. Also, using the FG expansion of $g_\Om$ we have the following expansion of $\mathcal{L}_X g_\Om$:
\begin{align}
 \mathcal{L}_X g_\Om & = X(\Om) \partial_\Om g_\Om + \mathcal{L}_X \gamma + \Om^2 \mathcal{L}_X g_{(2)}+\cdots +  \Om^n \mathcal{L}_X g_{(n)} + \cdots \\
 & = \frac{\Om}{n+1}(\mathrm{div}_g X) \partial_\Om g_\Om + \mathcal{L}_X \gamma + \Om^2 \mathcal{L}_X g_{(2)}+\cdots +  \Om^n \mathcal{L}_X g_{(n)} + \cdots  .\label{eqliegom}
\end{align}
Therefore
\begin{equation}
\mathcal{L}_X \gamma + \Om^2 \mathcal{L}_X g_{(2)}+\cdots +  \Om^n \mathcal{L}_X g_{(n)} + \cdots  = \frac{1}{n+1} (\mathrm{div}_g X) (2 g_\Om-\Om \partial_\Om g_\Om)\label{eqliegom2}.
\end{equation}
 Equating $n$-th order terms and evaluating at $\scri$ yields \eqref{eqKID} after substituting $\mathrm{div}_g X\mid_\scri$ as in \eqref{eqdivXdivY}.

To prove sufficiency, let us first choose the conformal gauge where $\Y$ is a Killing vector field of $\gamma' = \omega^2 \gamma$. Thus, the corresponding KID equation for $g'_{(n)}$ becomes:
\begin{equation}\label{eqKIDkillgauge}
 \mathcal{L}_\Y g'_{(n)}= 0.
\end{equation}
The remainder of the proof in this gauge, so we drop all the primes. By Lemma \ref{lemmaexistgeod} there exist a geodesic extension which recovers the representative $\gamma$ at $\mathscr I$. In addition, there exists a unique vector field $X$, extended from $\Y$ at $\mathscr I$, which satisfies $[T,X] = 0$. This is obvious in geodesic Gaussian coordinates $\{\Om, x^i\}$, because
\begin{equation}
 [T,X]^\alpha = -\lambda \partial_\Om X^\alpha = 0, 
\end{equation}
with initial conditions $X^\Om \mid_{\Om = 0} = 0$ and $X^i\mid_{\Om = 0} = \Y^i$ has a unique solution $X^\Om = 0$ and $X^i = \Y^i$. We now prove that $X$ is a Killing vector field of the physical metric $\tilg$ provided that \eqref{eqKIDkillgauge} holds. 

Consider the normal form of the metric $g = -\frac{\dif \Om^2}{\lambda} + g_\Om$. Since $\mathcal{L}_{X} \dif \Om = \dif (X(\Om)) = 0$, it follows that $\mathcal{L}_{X} g = \mathcal{L}_{X}(g_\Om)$. Using the FG expansion of $g_\Om$ we have
\begin{equation}
 \mathcal{L}_{X}g_\Om = \mathcal{L}_{X} \gamma + \Omega^2  \mathcal{L}_{X} g_{(2)} + \cdots+ \Omega^n  \mathcal{L}_{X} g_{(n)} + \cdots.
\end{equation}
 If $g$ is analytic, the value of the coefficients $\mathcal{L}_{X} g_{(r)}$ determine $\mathcal{L}_{X} g$ in a neighbourhood of $\scri$. These are
\begin{equation}
 \partial_\Omega^{(r)}  \lr{ \mathcal{L}_{X} g_{\Omega} }\mid_{\Omega=0} = \mathcal{L}_{\Y} \lr{\partial_\Omega^{(r)} g_{\Omega} \mid_{\Omega=0}} = r! \mathcal{L}_{\Y} g_{(r)}.
\end{equation}
We want to show that all these quantities are identically zero, for which we exploit the Feffeman-Graham recursive construction.
The fundamental equation that determines recursively the coefficients of the FG expansion takes the form \cite{Anderson2004}
\begin{equation}\label{eqscoefs}
 - \Om \ddot g_\Om + (n-1) \dot g_\Om - 2 H g_\Om  = \Om L,\quad \quad L := \frac{2}{\lambda} Ric{(g_\Om)}- H \dot g_\Om - (\dot g_\Om)^2 - \frac{2}{\lambda} \widetilde G_\parallel.
\end{equation}
where $\widetilde G_\parallel$ are the tangent components of the tensor $\widetilde G = \widetilde Ric(\tilg) - \lambda n \tilg $ and $\lambda H = \mathrm{Tr}_{g_\Om} A$, with $A := \nabla T$ (cf. Appendix \ref{appA}). A direct calculation shows that taking the $r$-th order derivative of equation \eqref{eqscoefs} at $\Om = 0$ and separating the terms containing highest ($r+1$) order coefficients from the rest gives an expression of the form:
\begin{equation}\label{eqcoefsec}
  (n-r-1)g_{(r+1)} + \lr{\mathrm{Tr}_\gamma g_{(r+1)}}  \gamma = \mathcal{F}_{(r-1)} 
\end{equation}
where $\mathcal{F}_{(r-1)}$ is a sum of terms containing products of coefficients up to order $r-1$ and tangential derivatives thereof, up to second order. Notice that $\widetilde G_\parallel$ vanishes to all orders at $\scri$, so it does not contribute to these equations.
We now prove by induction that the Lie derivative of all cofficients vanish provided equation \eqref{eqKIDkillgauge} is satisfied. 

First, the Lie derivative of \eqref{eqcoefsec}, given that $\Y$ is a Killing of $\gamma$, yields
\begin{equation}\label{eqcoefseclie}
  (n-r-1)\mathcal{L}_\Y g_{(r+1)} + \lr{\mathrm{Tr}_\gamma \mathcal{L}_\Y g_{(r+1)}}  \gamma = \mathcal{L}_\Y \mathcal{F}_{(r-1)} .
\end{equation}
Assume by hypothesis that the Lie derivative $\mathcal{L}_\Y$ of all the coefficients up to a certain order $r$ is zero (for the moment we do not assume neither $r<n$ nor $r>n$).  
The Lie derivative $\mathcal{L}_{\Y}\mathcal{F}^{(r-1)}$ is a sum where each terms is multiplied by either $\mathcal{L}_{\Y}g_{(s)}$, $\mathcal{L}_{\Y}\partial_i g_{(s)}$ or $\mathcal{L}_{\Y}\partial_i \partial_j g_{(s)}$ , with $s \leq r-1$. Since $\Y$ commutes with $T = - \lambda \partial_\Om$, we can adapt coordinates to both vector fields, namely $\Y = \partial_j$, so that in these coordinates $\mathcal{L}_{\Y}\partial_i g_{(s)} = \partial_i \mathcal{L}_{\Y} g_{(s)}$ and  $\mathcal{L}_{\Y}\partial_i \partial_j g_{(s)} = \partial_i \partial_j \mathcal{L}_{\Y} g_{(s)}$. Thus each term in  $\mathcal{L}_{\Y}\mathcal{F}^{(r-1)}$ contains a Lie derivative $\mathcal{L}_{\Y}g_{(s)}$ with $s<r-1$, or a tangential derivative thereof up to second order. Thus by the induction hypothesis $\mathcal{L}_{\Y}\mathcal{F}^{(r-1)} = 0$. Therefore, it follows that $\mathcal{L}_\Y g_{(r+1)} = 0$

The induction hypothesis can be assumed for $r<n-1$ because it is true for the first term $\mathcal{L}_\Y \gamma = 0$ and we have equations for the succesive terms. For $r = n-1$ the fundamental equation does not determine the term $g_{(n)}$ any longer (this is the reason why this terms is free-data in the FG expansion), so the induction hypothesis cannot go further in principle. But since we are imposing the condition $\mathcal{L}_\Y g_{(n)} = 0$, the induction hypothesis can be extended to any value of $r$. Therefore, all the derivatives $\mathcal{L}_\Y g_{(r+1)}$ vanish, so if $g$ is analytic $\mathcal{L}_\Y g = 0$. 
\end{proof}

 In short, the argument behind the proof of Theorem \ref{theoKID} relies on the well-known fact that the recursive relations that determine
   the coefficients
   of the FG expansion can be cast in a covariant form, so that ultimately
   all terms can be expressed in terms of $\gamma$, its curvature tensor,
   $g_{(n)}$ and covariant derivatives thereof.
   Then the Lie derivative of any coefficient must be zero provided that $\mathcal{L}_\Y \gamma = \mathcal{L}_\Y g_{(n)} = 0$. The case with non-zero obstruction tensor, and hence involving logarithmic terms is likely to
   admit an analogous proof. However, the recursive equations equivalent to \eqref{eqcoefsec} are not so explicit, because taking derivatives of order higher than $n$ yields an expression which mixes up coefficients of the regular part $g_{(r)}$ and logarithmic terms $\obs_{(r,s)}$ of the expansion. These expressions are notably more involved (see e.g. \cite{Rendall03}). If one showed
   that every coefficient $\obs_{(r,s)}$ admits a covariant form which only involves geometric objects constructed from  $\gamma$, $g_{(n)}$ and its covariant derivatives, a similar argument as in the proof above would establish
   that equation \eqref{eqKID} is also sufficient for the
  spacetime to admit a  Killing vector field in the case of 
  analytic data with non-vanishing $\obs$.  It is hard to imagine that this is not the case, and in fact the result should follow  from the expressions in \cite{Rendall03}, but the details need to be worked out. On the other hand, the necessity of \eqref{eqKID} is true  in general and the argument
 is totally analogous to the one presented above  except that equations \eqref{eqliegom} and \eqref{eqliegom2} contain also logarithmic terms. 
 We will not discuss this case any further since for the rest of paper we shall focus on conformally flat $\scri$ (hence $\obs = 0$). We plan to come back to this open issue in a future work.

\section{Characterization of generalized Kerr-de Sitter metrics}\label{secKdsmet}

In this section, we will apply the results obtained in the previous sections to find a characterization of the higher dimensional Kerr-de Sitter metrics. These were firstly formulated in five dimensions in \cite{Hawk99} and latter extended {to arbitrary dimensions} in \cite{Gibbons2005}. We first prove that these metrics admit a smooth conformally flat $\scri$. Then we combine with Theorem \ref{theognweyl} to determine their initial data at $\scri$, which is straightforwardly computable from equation \eqref{eqexpweyl}.
The data corresponding to Kerr-de Sitter in all dimensions  are analytic. Therefore, by Theorem \ref{theokichen}, the identification of their data provides a characterization of the metric also in the case of $n$ even. Hence, we perform the analysis simultaneously  for $n$ even and odd. We stress that the initial data for Kerr-de Sitter in any dimensions turn out to be a natural extension of the data in the $n=3$ case \cite{KdSlike},\cite{KdSnullinfty}, which in turn is directly related to a local spacetime characterization of the Kerr-NUT-de Sitter family of metrics \cite{marsseno15}. This 
result provides a clear geometric reason why the higher-dimensional Kerr-de Sitter metrics in \cite{Hawk99}, \cite{Gibbons2005} are indeed directly linked to the Kerr-de Sitter metric in spacetime dimension four.

Like in the four dimensional case, the generalized Kerr-de Sitter metrics are $(n+1)$-dimensional Kerr-Schild type metrics. Namely, they admit the following form 
\begin{equation}
 \tilg = \tilg_{dS} + \widetilde{\mathcal{H}}~\widetilde k\otimes \widetilde k
\end{equation}
with $\widetilde{g}_{dS}$ the de Sitter metric, $k$ is a null (w.r.t. to both $\tilg$ and $\widetilde{g}_{dS}$) field of 1-forms and $\widetilde{\mathcal{H}}$ is a smooth function. In order to unify the $n$ odd and $n$ even cases in one single expression, we define the following parameters
\begin{equation}
 p := \lrbrkt{\frac{n+1}{2}} -1,\quad\quad q := \lrbrkt{\frac{n}{2}},
\end{equation}
where note, $p= q$ if $n$ odd and $p + 1= q$ if n even. The explicit expression of the Kerr-de Sitter metrics will be given using the so-called ``spheroidal coordinates'' $\{r,\m_i\}_{i=1}^{p+1}$ (see \cite{Gibbons2005} for their detailed construction), with the redefinition $\rho  := r^{-1}$.  Strictly speaking, they do not quite define a  coordinate system because the $\m_i$ functions are constrainted to satisfy
\begin{equation}\label{DEFmu}
  \sum\limits_{i=1}^{p+1}  \m_i^2 = 1.
\end{equation}
However, it is safe to abuse the language and still call $\{\m_i \}$  coordinates. To complete $\{\rho, \m_i \}$  to full spacetime coordinates we include $\{\rho, t, \{\m_i\}_{i=1}^{p+1}, \{\phi_i\}_{i=1}^q\}$. The $\m_i$s and $\phi_i$s are related to polar and azimuthal angles of the sphere respectively and they take values in $0 \leq \m_i \leq 1$ and $0 \leq \phi_i < 2 \pi$ for $i = 1, \cdots, q$ and (only when $n$ odd) $-1 \leq \m_{p+1} \leq 1$. Associated to each $\phi_i$ there is one rotation parameter $a_i \in \mathbb{R}$. For notational reasons, it is useful to define a trivial parameter
$a_{p+1} =0$ in the case of $n$ odd. The remaining $\rho$ and $t$  lie in $0 \leq \rho < \lambda^{1/2}$ and $t \in \mathbb{R}$. The domain of definition of $\rho$ can be extended (across the Killing horizon) to $\rho > \lambda^{1/2}$, but this is unnecessary in this work since we are interested in regions near $\rho = 0$.

In addition, as we will work with the conformally extended metric $g = \rho^2 \tilg$, we directly write down the expresions of the following quantities, which admit a smooth extension to  $\rho=0$,
\begin{equation}\label{eqredefconf}
 \hat g = \rho^2 \tilg_{dS},\quad\quad \mathcal{H} = \rho^2 \widetilde{\mathcal{H}},\quad\quad k_\alpha = \widetilde k_\alpha
\end{equation}
and
\begin{equation}\label{eqgKS}
 g = \hat g + \mathcal{H}~ k \otimes k.
\end{equation}
We provide below the expression of $k_{\alpha}$ (as opposed to $k^{\alpha}$) because the metrically associated vector field $k^\alpha = g^{\alpha\beta} k_\beta$ is no longer the same as $\widetilde k^\alpha = \tilg^{\alpha\beta} \widetilde k_{\beta}$. In order for the reader to compare with the original publication \cite{Gibbons2005}, we remark that the expressions given there are for the ``physical'' objects $\tilg_{dS},~\widetilde{\mathcal{H}}, \widetilde k$, using the coordinates $r := \rho^{-1}$ and denoting $\mu_i := \m_i$ instead.

Let us now introduce the functions
\begin{equation}\label{eqdefWXi}
 W := \sum\limits_{i=1}^{p+1} \frac{\m_i^2}{1 + \lambda a_i^2}\quad \quad  \Xi := \sum\limits_{i=1}^{p+1} \frac{\m_i^2}{1 + \rho^2 a_i^2},
\end{equation}
Note that it is thanks to having introduced the spurious quantity $a_{p+1} \equiv 0$ that these expressions take a unified form in the $n$ odd and $n$ even cases. The explicit form of the objects in \eqref{eqredefconf} in the case of generalized Kerr-de Sitter are     
\begin{eqnarray}
  \hat{g} & = & - W (\rho^2 - \lambda ) \dif t^2 + \frac{\Xi}{\rho^2 - \lambda } \dif \rho^2 + \delta_{p,q} \dif \m_{p+1}^2 + \sum_{i=1}^q \frac{1 + \rho^2 a_i^2}{1 + \lambda a_i^2} \lr{\dif \m_i^2 + \m_i^2 \dif \phi_i^2} 
  \\
 & & 
  + \frac{\lambda}{W (\rho^2- \lambda )} \lr{\sum_{i= 1}^{p+1} \frac{\lr{1 + \rho^2 a_i^2} \m_i \dif \m_i}{1 + \lambda a_i^2}}^2 , \label{eqgdSodd}\\
 k & = & W \dif t - \frac{\Xi}{\rho^2 - \lambda } \dif \rho - \sum\limits_{i=1}^q \frac{a_i \m_i^2}{1 + \lambda a_i^2} \dif \phi_i, \label{eqkodd}\\
  \Pi & = & \prod\limits_{j=1}^q (1 + \rho^2 a_j^2), \quad \mathcal{H} =  \frac{2M \rho^{n}}{ \Pi \Xi}, \quad M\in \mathbb{R}  \label{eqgHodd}.
\end{eqnarray}
The term $\delta_{p,q}$ only appears when $q = p$, i.e. when $n$ is odd. In the case of even $n$, all terms multiplying $\delta_{p,q}$ simply go away.

The function $\mathcal{H}=O(\rho^n)$ and $k\otimes k = O(1)$. Therefore $g$ decomposes as 
\begin{equation}
 g = \hat g + \rho^n q,\quad\quad\mbox{with}\quad\quad q = \frac{\mathcal{H}}{\rho^n} k \otimes k = O(1).
 \end{equation}
 Let $\gamma$ be the metric induced at $\Sigma = \{ \rho = 0 \}$ by $g$. By Lemma \ref{lemmaexistgeod}, we can define a geodesic conformal factor $\Om$  such that $\{ \Om = 0 \} = \Sigma$ and which induces the same metric $\gamma$ at $\Sigma$. Hence $\Om = O(\rho)$ and therefore $\mathcal{H} = O(\Om^n) $ and $q = O(1)$ (in $\Om$). So by Proposition \ref{propconflatdec} it follows that the generalized Kerr-de Sitter metrics in all dimensions admit a conformally flat $\mathscr I$. 
This can be also verified by explicit calculation. From \eqref{eqgdSodd}, the induced metric at $\{ \rho = 0 \}$ has the following expression
\begin{align}
   \gamma & =   \lambda W \dif t^2  + \delta_{p,q} \dif \m_{p+1}^2 + \sum_{i=1}^{q} \frac{\dif \m_i^2 + \m_i^2 \dif \phi_i^2}{1 + \lambda a_i^2} 
  - \frac{1}{W} \lr{\sum_{i= 1}^{p+1} \frac{ \m_i \dif \m_i}{1 + \lambda a_i^2}}^2.\label{eqgammu}
\end{align}
It is useful to define new coordinates
\begin{equation}\label{eqhatmuscr}
 \hat \m_i^2 : = \frac{1}{W} \frac{\m_i^2}{1 + \lambda a_i^2},
\end{equation}
which from \eqref{eqdefWXi} are restricted to satisfy $\sum_{i=1}^{p+1}{\hat {\m}}_i^2 =1$. Since also $\sum_{i=1}^{p+1} \m_i^2 =1$, this allows us to express $W$ (given in \eqref{eqdefWXi}) in terms of the hatted coordinates
\begin{equation}\label{eqhatW} 
 W = \frac{1}{1+\sum_{i=1}^{p+1}{\lambda \hat{\m}}_i^2 a_i^2}.
\end{equation}
A direct calculation shows that the metric \eqref{eqgammu} expressed with $\hat \m_i$s takes the form 
\begin{equation}\label{eqgammhatmu}
 \gamma = W \Big( \lambda \dif t^2 + \delta_{p,q}\dif \hat \m_{p+1}^2 +  \sum\limits_{i = 1}^{q}\big(\dif \hat \m_i^2 + \hat \m_i^2  \dif \phi_i^2\big)\Big)\mid_{\sum_{i=1}^{p+1}{\hat{\m}}_i^2 =1}.
\end{equation}
An explicitly flat representative of the conformal class of $\gamma$ can be obtained using the coordinates
\begin{equation}\label{eqcartcoor}
 x_i := e^{\sqrt{\lambda}t} \hat \m_i \cos \phi_i \quad \quad y_i := e^{\sqrt{\lambda}t} \hat \m_i \sin  \phi_i,\quad i= 1,\cdots , q
\end{equation}
together with $z := e^{\sqrt{\lambda}t} \hat \m_{p+1}$ if $n$ odd, which are Cartesian for the following flat metric
\begin{equation}\label{eqdefgammE}
 \gamma_E := \frac{e^{2\sqrt{\lambda}t}}{W} \gamma = \delta_{p,q} \dif z^2 + \sum\limits_{i = 1}^{q}\big(\dif x_i^2 + \dif y_i^2\big).
\end{equation}
This form will be used below to determine the conformal class of a conformal Killing vector $\Y$ which we introduce next. Let us denote the projection of $k$ onto $\mathscr I$ by
\begin{equation}
 \Y_\alpha = \big(k_\alpha + (k_\beta \nor^\beta) \nor_\alpha\big)\mid_\scri
\end{equation}
with $\nor_\alpha = \nabla_\alpha \rho/|\nabla \rho|_g$ the unit timelike normal to $\mathscr I$. Explicitly
\begin{align}\label{eqYflat}
 \Yf = W \dif t - \sum\limits_{i=1}^{q} \frac{a_i \m_i^2}{1 + \lambda a_i^2} \dif \phi_i = W\Big(\dif t - \sum\limits_{i=1}^{q} \hat \m_i^2 a_i \dif \phi_i \Big),
\end{align}
where from now on the metrically related one-form $\Yf = \gamma(\Yv,\cdot)$ associated to a CKVF $\Yv$ of $ \gamma$ will be written in boldface when index free notation is used. The vector field is, using \eqref{eqgammhatmu},
 \begin{equation}\label{eqYsharp}
  \Yv =\frac{1}{\lambda} \partial_t - \sum^{q}_{i=1} a_i \partial_{\phi_i},
 \end{equation} 
 which in Cartesian coordinates \eqref{eqcartcoor} of $\gamma$ takes the form 
  \begin{equation}\label{eqYsharpcart1}
  \Yv =\frac{1}{\sqrt{\lambda}}\widetilde \Yv   - \sum^{q}_{i=1} a_i \eta_i 
 \end{equation} 
where we have introduced
\begin{equation}\label{eqYsharpcart2}
 \widetilde \Yv := \delta_{p,q} \partial_z+ \sum^{q}_{i=1} x_i \partial_{x_i} + y_i \partial_{y_i} 
 ,\quad\quad
 \eta_i  :=  x_i \partial_{y_i} - y_i \partial_{x_i}.
\end{equation}
The vector $\widetilde \Yv$ is a homothety of $\gamma_E$ and
each $\eta_i$ is a rotation of this metric. Consequently, $\Yv$ is a CKVF of $\gamma$.

 The electric part of the rescaled Weyl tensor at $\scri$ can be obtained at once from expression \eqref{CtCperp} and Lemma \ref{lemmaWeyls} using $\Omega = \rho$ and $m = n$, because by definition $(t_{\alpha \beta})\mid_{\scri} = (\mathcal H/\rho^n)\mid_{\scri} \xi_{\alpha} \xi_{\beta}$
and $ \tlt\mid_{\scri}$ is its trace-free part. Note also that $H/\rho^n\mid_{\scri} = 2M$. Thus
\begin{equation}\label{eqDKds1}
 D =  \rho^{2-n}{C^\mu}_{\alpha \nu\beta}\nabla_\mu \rho \nabla^\nu \rho\mid_{\mathscr{I}} = -\frac{1}{2} \lambda n (n-2) \tlt_{\alpha \beta}\mid_\scri =  -M \lambda n (n-2) \lr{\Y_\alpha \Y_\beta - \frac{|\Y|_\gamma^2}{n} \gamma_{\alpha \beta}}. 
\end{equation}
Since, by equation \eqref{eqhatW} above, 
  \begin{equation}
   |\Y|_\gamma^2 = W \lr{\frac{1}{\lambda} + \sum_{i=1}^q a_i^2 \hat \m_i^2} = \frac{1}{\lambda},
  \end{equation}
 $D$  can be cast as
\begin{equation}
 D = \con D_\xi,\quad\quad \mbox{with}\quad \con := -\frac{M n(n-2)}{\lambda^{\frac{n}{2}}} 
\end{equation}
and
\begin{equation}\label{eqDKds}
 D_\Y :=   \frac{1}{|\Y|_\gamma^{n+2}}\lr{\Yf \otimes \Yf - \frac{|\Y|_\gamma^2}{n} \gamma}.
\end{equation}

\begin{remark}\label{remarknotationD}
 Following the notation in \cite{KdSlike}, observe that the primary object defining
$D_{\Yv}$ is the vector field $\Yv$, while in the RHS of \eqref{eqDKds} there appears the one-form
$\Yf = \gamma(\Yv,\cdot)$.
This notation generalizes to any CKVF $\Yv'$ and metric $\gamma'$.
This
will be useful in order to prove conformal properties of $D_\Yv$ which depend only on $\Yv$ (cf.
Lemma \ref{lemmaequivdata} below).
\end{remark}

Summarizing, we have proven the following result.

\begin{proposition}
 The asymptotic data  corresponding to the $(n+1)$-dimensional generalized Kerr-de Sitter metrics is given by the class of conformally flat metrics and the class of TT tensors determined by \eqref{eqDKds}, where $\Y$ is the field of 1-forms given by \eqref{eqYflat} when the metric $\gamma$ is written in the coordinates where \eqref{eqgammhatmu} holds.
\end{proposition}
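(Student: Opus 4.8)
The plan is to assemble the explicit computations of this section together with Theorem \ref{theognweyl} and Lemma \ref{lemmaWeyls}, identifying the two ingredients of the asymptotic datum $(\Sigma,[\gamma],[g_{(n)}])$ one at a time: the conformal class of the boundary metric and the free-data class.

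For the first ingredient I would argue that $[\gamma]$ is locally conformally flat, which is available in two complementary ways. Abstractly, the Kerr-Schild form \eqref{eqgKS} together with $\mathcal H = O(\rho^n)$ gives the decomposition $g = \hat g + \rho^n q$ with $\hat g$ conformal to de Sitter and $q = O(1)$, so that Proposition \ref{propconflatdec} applies verbatim. Concretely, the change of coordinates \eqref{eqhatmuscr} puts $\gamma$ in the form \eqref{eqgammhatmu}, and the Cartesian coordinates \eqref{eqcartcoor} exhibit the manifestly flat representative $\gamma_E$ of \eqref{eqdefgammE}; either route settles the first half of the claim.

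For the second ingredient I would invoke Theorem \ref{theognweyl}: since $\scri$ is conformally flat, the free data, namely the trace-free part of $g_{(n)}$, coincides up to the universal constant $-\frac{\lambda^2}{2}n(n-2)$ with the $T$-electric rescaled Weyl tensor. This quantity can be read off directly from the Kerr-Schild decomposition by applying Lemma \ref{lemmaWeyls} with $m=n$ to \eqref{eqgKS}: the trace-free tangential part of $q|_\scri = (\mathcal H/\rho^n)|_\scri\, k\otimes k = 2M\, k\otimes k$ projects onto $2M\,\Y\otimes\Y$, yielding \eqref{eqDKds1}. It then remains to normalize using \eqref{eqhatW}, which gives $|\Y|_\gamma^2 = 1/\lambda$, so that the result takes the form $D = \con D_\Y$ with $D_\Y$ as in \eqref{eqDKds} and $\con = -M n(n-2)\lambda^{1-n/2}$. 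That $D_\Y$ is a TT tensor is automatic, being the trace-free part of the Fefferman-Graham coefficient $g_{(n)}$, which is TT by Theorem \ref{theoFG}.

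The only step that is not pure bookkeeping is confirming that $\Y$ of \eqref{eqYflat} is a conformal Killing vector field of $\gamma$, so that $[D_\Y]$ is genuinely a free-data class of the kind occurring in the Kerr-de Sitter characterization. Here I would pass to the flat metric $\gamma_E$ and recognize the metrically associated vector $\Y^\sharp$ as the sum \eqref{eqYsharpcart1} of the homothety $\widetilde\Y^\sharp$ and the rotations $\eta_i$ of $\gamma_E$; since every such vector is conformal Killing for $\gamma_E$, and the conformal Killing property is conformally invariant, it descends to $\gamma$. I expect the residual difficulty to be entirely computational, namely re-expressing $\Y^\sharp$ in the Cartesian coordinates and verifying $|\Y|_\gamma^2 = 1/\lambda$, and thus to pose no conceptual obstacle once these identifications are in place.
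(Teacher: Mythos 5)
Your proposal is correct and follows essentially the same route as the paper: decompose $g=\hat g+\rho^n q$ from the Kerr--Schild form to get conformal flatness of $\scri$ via Proposition \ref{propconflatdec} (with the explicit coordinates \eqref{eqhatmuscr}--\eqref{eqdefgammE} as a cross-check), extract the free data by applying Lemma \ref{lemmaWeyls} with $m=n$ to the tangential trace-free part of $2M\,k\otimes k$, normalize with $|\Y|_\gamma^2=1/\lambda$, and identify $\Y^\sharp$ as a CKVF by writing it as a homothety plus rotations of the flat representative. The only cosmetic difference is that the paper establishes the TT property of $D_\Y$ by direct calculation for an arbitrary CKVF rather than by appeal to Theorem \ref{theoFG} (which for $n$ even requires the extra observation that $\a=\b=0$ when $\gamma$ is conformally flat), but this does not affect the validity of your argument.
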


Now suppose that we let $\Y$ to be any CKVF of $\gamma$. By direct calculation one shows that the corresponding $D_\Y$ is still TT w.r.t. $\gamma$. The spacetimes corresponding to the class of data obtained in this way constitute a natural extension to arbitrary dimensions of the so-called {\it Kerr-de Sitter-like class} with conformally flat $\scri$, first defined for $n=3$ in \cite{KdSlike} and \cite{KdSnullinfty}. A detailed definition and properties of this class of spacetimes will be discussed in   \cite{marspeonKSKdS21}. What we want to prove now is that for data $(\Sigma, \gamma, D_\Y)$ with $\gamma$ conformally flat and $\Y$ a CKVF of $\gamma$, only the conformal class of $\Y$ (equivalently $\Yv$) matters. Then, by identifying the conformal class of \eqref{eqYsharp} we will obtain a complete geometrical characterization of Kerr-de Sitter in all dimensions. 
 
Recall that the conformal class of a CKVF $\Yv$ of a manifold $(\Sigma,\gamma)$ is given by the pushforward vector fields $\phi_\star(\Y)$ for every local conformal differmorphism  $\phi$ of $(\Sigma,\gamma)$. Such diffeormorphisms are only locally defined in a neighbourhood $\mathcal{U} \subset \Sigma$, i.e. $\phi: \mathcal{U} \rightarrow \Sigma \mid \phi^\star(\gamma) = \omega^2 \gamma$ for some smooth positive function $\omega$ of $\mathcal U$. Thus, the  equivalence relation between CKVFs makes sense when we restrict to $\mathcal{U} \cap \phi(\mathcal{U})$. For example, this is the case for M\"obius transformations in the Euclidean space $\mathbb{E}^n$ (e.g. \cite{IntroCFTschBook,blairconformal}) and generalizes to all locally conformally flat manifolds. It is however well-known (e.g. \cite{IntroCFTschBook}) that the  M\"obius tranformations are global in the $n$-sphere, so one may use the conformal classes globally defined in the sphere as a reference for all locally conformally flat manifolds. Details of this construction are given in \cite{marspeonKSKdS21}. In the following, $\mathrm{ConfLoc}(\Sigma,\gamma)$ denotes the set of all local conformal diffeomorphisms and when referring to conformal classes of CKVFs we shall implicitly restrict to the domain where this relation makes sense.

\begin{lemma}\label{lemmaequivdata}
 For asymptotic data $(\Sigma, \gamma, \kappa D_\Yv)$ with $\gamma$ conformally flat and any
trasformation   $\phi \in \mathrm{ConfLoc}(\Sigma, \gamma)$, the following equivalence of data holds
\begin{equation}\label{eqequivdata}
 (\Sigma, \gamma, \kappa D_{\phi_\star \Yv}) \simeq (\Sigma, \phi^\star \gamma, \phi^\star(\kappa D_{\phi_\star \Yv})) =  
 (\Sigma, \omega^2 \gamma,\omega^{2-n} \kappa D_\Yv) \simeq (\Sigma, \gamma,\kappa D_\Yv),
\end{equation}
 where the tensor $D_{\phi_\star \Yv}$ is given by \eqref{eqDKds} with the notation of Remark \ref{remarknotationD}.
\end{lemma}
\begin{proof}
 The first equivalence in \eqref{eqequivdata} is a consequence of the diffeomorphism equivalence of data and the last one is a consequence of Corollary \ref{coroland} and the conformal properties of the Weyl tensor. Hence
 we must verify the equality in the expression. Denote the one-form $\phi_\star(\Yf) : = \gamma(\phi_\star \Yv, \cdot )$. For any vector field $X\in T\Sigma$ we compute
\begin{equation}
 (\phi^\star \phi_\star(\Yf)) (X) = (\phi_\star(\Yf)) (\phi_\star X) = \gamma(\phi_\star \Yv, \phi_\star X) = \omega^2 \gamma(\Yv, X) = \omega^2 \Yf(X),
\end{equation}
which means $\phi^\star(\phi_\star(\Yf)) = \omega^2 \Yf$. Moreover  $|\phi_\star(\Y)|_\gamma =\sqrt{\gamma(\phi_\star \Yv, \phi_\star \Yv)} = \omega |\Y|_\gamma$.
Thus
\begin{align}
 \phi^\star \lr{D_{\phi_\star(\Yv)}} & =   \frac{1}{|\phi_\star(\Y)|^{n+2}_\gamma} \lr{
\phi^\star(\phi_\star(\Yf) \otimes \phi_\star(\Yf)) - \frac{|\phi_\star(\Y)|_\gamma^2}{n} \phi^\star \gamma} \\ &  =  \omega^{-n+2} \frac{1}{|\Y|^{n+2}_\gamma} \lr{
\Yf \otimes \Yf - \frac{|\Y|_\gamma^2}{n} \gamma} = \omega^{2-n} D_\Yv.
\end{align}
\end{proof}

We now come back to Kerr-de Sitter and identify the conformal class of \eqref{eqYsharp}. Following the results in \cite{marspeon21}, a direct way to do that is to write $\Yv$ in any Cartesian coordinate system for any flat representative $\gamma_E$ in the conformal class of metrics. One then associates to the explicit form of $\Yv$ in these coordinates a skew-symmetric endomorphism (equivalently a two-form) of  the Minkowski spacetime $\mathbb{M}^{1,n+1}$. Let us denote this set $\skwend{\mathbb{M}^{1,n+1}}$ and we write $F(\Yv)$ for the skew-symmetric endomorphism associated to the CKVF $\Yv$. The classification of 
$\skwend{\mathbb{M}^{1,n+1}}$ up to $O(1,n+1)$ (i.e. the Lorentz group) transformations is equivalent to the classification of CKVFs up to conformal transformations, but the former is simpler because $\skwend{\mathbb{M}^{1,n+1}}$ are linear operators, while the CKVFs are vector fields (see \cite{marspeon21} for additional details).

Consider Cartesian coordinates $\{X^\alpha\}_{A=1}^n$ for an $n$-dimensional Riemannian flat metric. Then an arbitrary CKVF is well-known (e.g. \cite{IntroCFTschBook}) to be given by  
\begin{equation}  \label{eqCKVF}
\Yv  = \big(\br^A + \nu X^A + (\ar_B X^B)X^A - \frac{1}{2}(X_B X^B)\ar^A - {\omega^A}_B X^B \big) \partial_{X^A},
\end{equation}
to which one associates \cite{KdSlike},\cite{marspeon21} the following skew-symmetric endomorphism, given in an orthonormal basis $\{e_\alpha \}_{\alpha = 0}^{n+1}$ of $\mathbb{M}^{1,n+1}$ with $e_0$ timelike,
\begin{equation}\label{skwmatrix}
 F(\Yv) = \begin{pmatrix}
  0 &  -\nu & -\ar^t + \br^t/2 \\
  -\nu & 0 & - \ar^t - \br^t/2 \\
  -\ar + \br/2 & \ar +\br/2 & -\pmb{\omega}
  \end{pmatrix}.
\end{equation}
where $\ar,\br \in \mathbb{R}^n$ are column vectors with components $\ar^A, \br^A$ respectively, $t$ stands for their transpose (row vector), $\nu \in \mathbb{R}$ and ${\bm \omega}$ is $n\times n$ real skew-symmetric matrix of components $( \delta_{AC} {\omega^C}_A =:){\omega}_{AB} = -\omega_{BA}$. 
 Understood as a map $F: \Yv \mapsto F(\Yv)$, $F$ is a Lie algebra anti-homomorphism, i.e. 
$[F(\Yv),F({\Y}')] = -F([\Yv,{\Y}'])$. The $O(1,n+1)$ transformations on $F(\Yv)$ are translated into conformal tranformations of $\Yv$. That is, for every $\Lambda \in O(1,n+1)$, then $\Lambda \cdot F(\Yv) = F (\phi_{\Lambda \star} (\Yv))$ for a conformal transformation $\phi_\Lambda$ of $\gamma_E$, where ``dot'' denotes adjoint action, which in matrix notation corresponds simply to the multiplication of matrices $\Lambda \cdot F(\Yv) = \Lambda F(\Yv) \Lambda^{-1}$. As a consequence, the classification of 
$\skwend{\mathbb{M}^{1,n+1}}$ up to $O(1,n+1)$ transformations is equivalent to the classification of CKVFs up to conformal transformations.

In \cite{marspeon21}, it is proven that the orbits of $\skwend{\mathbb{M}^{1,n+1}}/O(1,n+1)$ are characterized\footnote{Equivalent characterizations may be found in the literature (e.g. \cite{KdSlike}) in terms of traces of even powers of $F(\Yv)$ and matrix rank.} by the eigenvalues of $-F(\Yv)^2$ together with the causal character of $\ker F{(\Yv)}$. The algorithm is as follows. Denote by $\mathcal{P}_{F^2}(-x)$ the characteristic polynomial of $-F(\Yv)^2$ and define
\begin{equation}\label{defQF2}
\mathcal{Q}_{F^2}(x) := \lr{\mathcal{P}_{F^2}(-x)}^{1/2} \quad \mbox{($n$ even)}, \quad\quad \mathcal{Q}_{F^2}(x) := \lr{\frac{\mathcal{P}_{F^2}(-x)}{x}}^{1/2} \quad \mbox{($n$ odd)}.
\end{equation}
From the properties of $F^2(\Yv)$, it follows \cite{marspeon21} that 
$\mathcal{Q}_{F^2}(x)$ is a polynomial of degree $q+1$ with $q+1$ real roots counting multiplicity, with at most one of which negative. Then   
\begin{proposition}\label{defgammamu}
 Let $\mathrm{Roots}\lr{\mathcal{Q}_{F^2}}$ denote the set of roots of $\mathcal{Q}_{F^2}(x)$ repeated as many times as their multiplicity and sorted as follows
 \begin{enumerate}
  \item[a)] If $n$ odd, $\lrbrace{\sigma; \mu_1^2, \cdots, \mu_{p}^2} := \mathrm{Roots}\lr{\mathcal{Q}_{F^2}}$ sorted by $\sigma \geq \mu_1^2\geq \cdots \geq \mu_{p}^2$ if $\ker F(\Yv)$ is timelike and $\mu_1^2\geq \cdots \geq \mu_{p}^2\geq 0 \geq \sigma$ otherwise. 
  \item[b)] If $n$ even, 
  $\lrbrace{-\mu_t^2, \mu_s^2; \mu_1^2, \cdots, \mu_{p}^2} := \mathrm{Roots}\lr{\mathcal{Q}_{F^2}}$ sorted by $\mu_1^2\geq \cdots \geq \mu_{p}^2\geq\mu_s^2 = -\mu_t^2 = 0$ if $\ker F(\Yv)$ is degenerate and $\mu_s^2 \geq \mu_1^2\geq \cdots \geq \mu_{p}^2\geq 0 \geq  -\mu_t^2$ otherwise.  
 \end{enumerate}
 Then the parameters $\lrbrace{\sigma; \mu_1^2, \cdots, \mu_{p}^2}$ for $n$ odd and $\lrbrace{-\mu_t^2, \mu_s^2; \mu_1^2, \cdots, \mu_{p}^2}$  for $n$ even determine uniquely the class of $F(\Yv)$ up to $O(1,n+1)$ transformations and hence also the class of $\Yv$ up to conformal transformations.  
\end{proposition}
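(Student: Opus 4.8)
The plan is to reduce the statement to the classification theorem of \cite{marspeon21}, which asserts that two elements of $\skwend{\mink{1,n+1}}$ lie in the same $O(1,n+1)$-orbit if and only if the self-adjoint operator $-F(\Y^\sharp)^2$ has the same eigenvalues (as a multiset, counting multiplicities) and $\ker F(\Y^\sharp)$ has the same causal character. Granting this, the content of the Proposition is essentially book-keeping: I would verify that the sorted tuple $\lrbrace{\sigma;\mu_1^2,\cdots,\mu_p^2}$ (for $n$ odd) or $\lrbrace{-\mu_t^2,\mu_s^2;\mu_1^2,\cdots,\mu_p^2}$ (for $n$ even), read together with the sorting convention that fixes which root is the distinguished one, is \emph{equivalent data} to the pair consisting of the multiset of eigenvalues of $-F(\Y^\sharp)^2$ and the causal character of $\ker F(\Y^\sharp)$. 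Two maps sharing such a tuple then share the invariants of \cite{marspeon21} and are conjugate, while conjugate maps produce the same tuple since eigenvalues and the causal character of the kernel are preserved under $O(1,n+1)$. The final clause about $\Y^\sharp$ follows because $F$ intertwines conjugation by $\Lambda\in O(1,n+1)$ with conformal transformations of $\Y^\sharp$, as recorded just above \eqref{skwmatrix}.

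The first genuine step is to show that $\mathrm{Roots}(\mathcal{Q}_{F^2})$ recovers exactly the eigenvalues of $-F(\Y^\sharp)^2$ with the correct multiplicities. The key structural fact is that, because $F(\Y^\sharp)$ is skew with respect to the Minkowski metric, it decomposes $\mink{1,n+1}$ into mutually orthogonal invariant blocks (one block containing the timelike direction, several spacelike rotation blocks, and a kernel), so that every nonzero eigenvalue of $-F(\Y^\sharp)^2$ occurs with even multiplicity: a rotation block $\pm i\mu_i$ contributes $\mu_i^2>0$ twice, and the boost block $\pm\mu_t$ contributes the non-positive value $-\mu_t^2$ twice. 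This is precisely what makes the square roots in \eqref{defQF2} polynomial: in the $n$ even case $\mathcal{P}_{F^2}(-x)$ is a perfect square, while in the $n$ odd case it is $x$ times a perfect square, the lone factor of $x$ reflecting that $\dim\mink{1,n+1}=n+2$ is odd and hence forces an odd-dimensional kernel direction. Counting degrees yields $\deg\mathcal{Q}_{F^2}=q+1$ in both parities, and the roots of $\mathcal{Q}_{F^2}$ are exactly the eigenvalues of $-F(\Y^\sharp)^2$, each listed with half of its multiplicity, reproducing the rotation data $\mu_1^2\geq\cdots\geq\mu_p^2\geq 0$ together with one further distinguished root.

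It then remains to match the distinguished root and the zero roots to the causal character of $\ker F(\Y^\sharp)$, which is the role of the two sorting prescriptions. The spacelike rotation blocks supply the strictly positive $\mu_i^2$, the block containing the timelike direction supplies the distinguished root $\leq 0$ (written $\sigma$ for $n$ odd and $-\mu_t^2$ for $n$ even), and the kernel supplies zeros. When $\ker F(\Y^\sharp)$ is timelike (odd $n$) or spacelike, the distinguished root is genuinely separated from the $\mu_i^2$ and is sorted as the single possibly-negative entry; when $\ker F(\Y^\sharp)$ is degenerate (null), the boost and a rotation merge into a parabolic block and the distinguished root is pinned to zero, which is exactly the condition $\mu_s^2=-\mu_t^2=0$ appearing in case (b). Thus the sorted tuple together with its convention encodes both the eigenvalue multiset and the trichotomy timelike/spacelike/null of the kernel, and the theorem of \cite{marspeon21} then gives that it is a complete invariant.

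The main obstacle is precisely this last matching: the eigenvalues of $-F(\Y^\sharp)^2$ alone do \emph{not} separate the orbits, since distinct orbits can share the same characteristic polynomial and be distinguished only by the block (Jordan) type of $F(\Y^\sharp)$, which in the Lorentzian setting is detected by the causal character of the kernel. The delicate point is therefore to check that the repackaging in \eqref{defQF2} loses none of this information, namely that the degenerate null-kernel branch, in which a boost and a rotation merge into a non-semisimple block, is correctly flagged by the distinguished root collapsing to zero together with the accompanying causal-character datum, and that no further conjugation invariant survives beyond these two. That the block decomposition is forced by the signature, there being a single timelike direction, is what renders the two invariants complete, and this is the substantive input borrowed from \cite{marspeon21}.
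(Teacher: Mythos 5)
Your proposal is correct and follows essentially the same route as the paper, which offers no independent proof of this Proposition: it is stated as a direct repackaging of the classification theorem of \cite{marspeon21} (orbits of $\skwend{\mathbb{M}^{1,n+1}}$ under $O(1,n+1)$ determined by the eigenvalues of $-F(\Y^\sharp)^2$ together with the causal character of $\ker F(\Y^\sharp)$, and the fact that $\mathcal{Q}_{F^2}$ is a degree-$q+1$ polynomial with real roots). Your elaboration of the bookkeeping --- that the roots of $\mathcal{Q}_{F^2}$ recover the eigenvalue multiset with halved multiplicities and that the sorting convention carries the kernel's causal character --- is exactly the intended content, with the substantive input correctly attributed to the cited reference.
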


We now apply these results to the Kerr-de Sitter case.
We have already obtained a flat representative $\gamma_E$ and have introduced corresponding Cartesian coordinates \eqref{eqdefgammE} . We have also obtained the explicit form of $\Yv$ in these coordinates, namely \eqref{eqYsharpcart1} and \eqref{eqYsharpcart2}. Then, from Proposition \ref{defgammamu} it is straightforward to identify the conformal class of $\Yv$.
Denote the Cartesian coordinates in \eqref{eqcartcoor} by $\{X\}_{A = 1}^{n} = \{z, \{x_i,y_i \}_{i=1}^q\}$ if $n$ odd and $\{X\}_{A = 1}^{n} = \{x_i,y_i \}_{i=1}^q$ if $n$ even.
From equations \eqref{eqYsharpcart1}, \eqref{eqYsharpcart2} the parameters of $\Yv$ written as in \eqref{eqCKVF} are $\nu = \lambda^{-1/2},~\ar^A = \br^A = 0$ and $\omega_{AB} = 2 a_i {\delta^{2i}}_{[A}{\delta^{2i+1}}_{B]}$ for $n$ odd and  $\omega_{AB} = 2 a_i {\delta^{2i-1}}_{[A}{\delta^{2i}}_{B]}$ for $n$ even. Thus, from equation \eqref{skwmatrix} it is immediate
\begin{align} 
 F(\Yv) & = \left(
 \begin{array}{cc}
  0 & -\lambda^{-1/2} \\ -\lambda^{-1/2} & 0
 \end{array}
 \right) \oplus (0)
  \bigoplus_{i=1}^{p} 
  \left(
 \begin{array}{cc}
  0 & -a_i \\ a_i & 0
 \end{array}
 \right),\quad\quad&&\mbox{if $n$ is odd} \label{eqFcenodd}\\
 F(\Yv) & = \left(
 \begin{array}{cc}
  0 & -\lambda^{-1/2} \\ -\lambda^{-1/2} & 0
 \end{array}
 \right)
  \bigoplus_{i=1}^{p+1} 
  \left(
 \begin{array}{cc}
  0 & -a_i \\ a_i & 0
 \end{array}
 \right),\quad\quad&&\mbox{if $n$ is even,}\label{eqFceneven}
\end{align}
where this block form is adapted to the following orthogonal decomposition of $\mathbb{M}^{1,n+1}$ as a sum of $F$-invariant subspaces
\begin{equation}\label{eqdecMink}
 \mathbb{M}^{1,n+1} = \Pi_t \oplus \spn{e_2} \bigoplus_{i=1}^p \Pi_i,\quad\mbox{($n$ odd)},\quad \quad \mathbb{M}^{1,n+1} = \Pi_t  \bigoplus_{i=1}^{p+1} \Pi_i, \quad\mbox{($n$ even)},
\end{equation}
where $\Pi_t =\spn{e_0,e_1}$ for both cases and $\Pi_i = \spn{e_{2i +1},e_{2 i+2}}$ for $n$ odd and $\Pi_i = \spn{e_{2i},e_{2 i+1}}$ for $n$ even. Any timelike or null vector $v \in \mink{1,n+1}$ must have non-zero projection onto $\Pi_t$, so it may be written $v = v_t + v_s$, with $0 \neq v_t \in \Pi_t,~v_s \in (\Pi_t)^\perp$. Hence $F(\Yv) (v) = F(\Yv) (v_t) + F(\Yv) (v_s)$, where from the block form it follows that $0 \neq F(\Yv) (v_t) \in \Pi_t$ and $F(\Yv)(v_s) \in (\Pi_t)^\perp$, thus $F(\Yv) (v) = F(\Yv) (v_t) + F(\Yv) (v_s) \neq 0$. Therefore, $\ker F(\Yv)$ is always spacelike or cero.
It is straightforward to compute the polynomial $\mathcal{Q}_{F^2}(x)$
 \begin{equation}
  \mathcal{Q}_{F^2}(x) = (x + \lambda) \prod_{i=1}^q (x - a_i^2)
 \end{equation}
 where we may order the indices $i$, so that the rotation parameters $a_i$ appear in decreasing order $a_1^2\geq \cdots \geq a_{q}^2$. Hence, applying Proposition \ref{defgammamu} we identify the parameters $\sigma := -\lambda^{-1}$ and $\mu_i^2 := a_i^2$ for $n$ odd and $-\mu_t^2 := -\lambda^{-1},~\mu_s^2 := a_1^2$ and $\mu_i^2 := a_{i+1}^2$ for $n$ even.  Therefore:
\begin{theorem}   
 Let $\tilg_{KdS}$ be a metric of the generalized Kerr-de Sitter family of metrics in all dimensions, namely given by \eqref{eqgKS} and \eqref{eqgdSodd}, \eqref{eqkodd}, \eqref{eqgHodd},  with cosmological constant $\lambda$ and $q$ rotation parameters $a_i$ sorted by $a_1^2 \geq\cdots \geq a_q^2$. Then $\tilg_{KdS}$ is uniquely characterized by the class of initial data $(\Sigma, \gamma, D_\Y)$, where $\gamma$ is conformally flat and $D_\Y$ is a TT tensor of $\gamma$ of the form \eqref{eqDKds}, where $\Yv$ (such that $\gamma(\Yv, \cdot) = \Yf$) is a CKVF of $\gamma$ whose conformal class is uniquely determined according to Proposition \ref{defgammamu} by the parameters $\{\sigma = -\lambda^{-1},\mu_1^2 = a_1^2,\cdots,\mu_p^2 = a_p^2\}$ if $n$ odd and $\{-\mu_t^2 = - \lambda^{-1},\mu_s^2 = a_1^2; \mu_1^2 = a_2^2,\cdots,\mu_p^2 = a_{p+1}^2\}$ if $n$ is even.
\end{theorem}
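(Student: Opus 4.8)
The plan is to assemble the results established earlier in this section into a two-way correspondence: first that every member of the generalized Kerr-de Sitter family produces the asymptotic data stated, and second that such data determines the spacetime uniquely, so that the correspondence is a genuine characterization.

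For the direct statement, I would start from the decomposition $g=\hat g+\rho^n q$ of \eqref{eqgKS}, in which $\mathcal H=O(\rho^n)$ and $q=O(1)$, placing us exactly in the hypotheses of Proposition \ref{propconflatdec}; this yields a smooth conformally flat $\scri$ with boundary metric $\gamma$ of \eqref{eqgammu}, a fact confirmed independently by exhibiting the flat representative $\gamma_E$ in \eqref{eqdefgammE}. Since $\scri$ is conformally flat, Theorem \ref{theognweyl} applies and identifies the free datum $g_{(n)}$ (up to the universal constant) with the $T$-electric part of the rescaled Weyl tensor $D=\Om^{2-n}C_T|_\scri$. The following step is the explicit evaluation of $D$ via Lemma \ref{lemmaWeyls} with $m=n$, giving $D=\con D_\Y$ with $D_\Y$ the TT tensor \eqref{eqDKds} built from the projection $\Y$ of $k$ onto $\scri$.

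The remaining task in this direction is to pin down the \emph{conformal class} of $\Y^\sharp$, since by Lemma \ref{lemmaequivdata} the datum $D_\Y$ depends only on that class. Here I would invoke the classification of CKVFs of a flat conformal structure through skew-symmetric endomorphisms of $\mink{1,n+1}$ from \cite{marspeon21}, as encoded in Proposition \ref{defgammamu}: writing $\Y^\sharp$ in the Cartesian coordinates \eqref{eqcartcoor} produces the parameters $\nu=\lambda^{-1/2}$, $\ar=\br=0$ and a rotation block ${\bm\omega}$ encoding the $a_i$, hence the block-diagonal forms $F(\Y^\sharp)$ of \eqref{eqFcenodd}--\eqref{eqFceneven}. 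From that block structure one reads off that $\ker F(\Y^\sharp)$ is spacelike or zero and computes $\mathcal{Q}_{F^2}(x)=(x+\lambda)\prod_{i=1}^q(x-a_i^2)$, whose roots, sorted as prescribed in Proposition \ref{defgammamu}, deliver precisely $\{\sigma=-\lambda^{-1};a_1^2,\dots,a_p^2\}$ for $n$ odd and $\{-\lambda^{-1},a_1^2;a_2^2,\dots,a_{p+1}^2\}$ for $n$ even. For the converse (uniqueness) direction I would argue that analytic asymptotic data with $\gamma$ Riemannian determine the spacetime near $\scri$: for $n$ odd this is Anderson's Theorem \ref{theoanderson}, while for $n$ even one uses the convergence of the FG expansion for analytic positive-definite data with vanishing obstruction tensor (automatic in the conformally flat case), as discussed at the start of Section \ref{appendixKID}. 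Since the Kerr-de Sitter data is analytic, the class $(\Sigma,[\gamma],[D_\Y])$ reconstructs $\tilg_{KdS}$ uniquely, and the equivalence \eqref{eqequivdata} guarantees the characterization is gauge-independent.

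I expect the main obstacle to be the bookkeeping in the final identification: one must verify that the causal character of $\ker F(\Y^\sharp)$ together with the sign and ordering conventions of Proposition \ref{defgammamu} are respected, so that the roots of $\mathcal{Q}_{F^2}$ map unambiguously to the stated invariants, and in the even case that the zero eigenvalue is correctly assigned. A secondary subtlety is applying the uniqueness statement with the correct parity-dependent hypotheses, ensuring the two parities are treated on the same footing thanks to the spurious parameter $a_{p+1}=0$ and the unified expressions \eqref{eqgdSodd}--\eqref{eqgHodd}.
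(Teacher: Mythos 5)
Your proposal is correct and follows essentially the same route as the paper: the decomposition $g=\hat g+\rho^n q$ combined with Proposition \ref{propconflatdec} and Theorem \ref{theognweyl} to identify the data, Lemma \ref{lemmaWeyls} for the explicit evaluation of $D=\con D_\Y$, Lemma \ref{lemmaequivdata} to reduce to the conformal class of $\Y^\sharp$, the block form of $F(\Y^\sharp)$ and $\mathcal{Q}_{F^2}(x)=(x+\lambda)\prod_{i=1}^q(x-a_i^2)$ together with the spacelike-or-zero kernel to apply Proposition \ref{defgammamu}, and analyticity of the data (Anderson for $n$ odd, convergence of the FG expansion for $n$ even) for uniqueness. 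No gaps.
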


  We conclude this paper by comparing our results with previous literature in the $\lambda<0$ case. The metrics in \cite{Gibbons2005} admit both signs of
    $\lambda$, so one also has the family of Kerr-anti de Sitter metrics in all dimensions. The boundary metric $\gamma$ for this case is given by \eqref{eqYflat}, which is now Lorentzian. The electric part of the rescaled Weyl tensor is
    $D = \kappa_{|\lambda|} D_\Y$, where $\kappa_{|\lambda|}$ is obtained from
    $\kappa$ above by simply replacing $\lambda \rightarrow |\lambda|$, and $D_\Y$ is \eqref{eqDKds}, with $\Y$ given by \eqref{eqYflat}. These data characterize the spacetime asymptotically.

   One of the main focus on the Kerr-anti de Sitter metrics has been to study conserved quantities at infinity. There are various notions of conserved charges (see the references in \cite{Holl05}, where the different definitions are compared), but all of them depend on a CKVF $\Y$ of $\scri$. Thus, associated to each $\Y$ one defines a conserved charge $\mathcal{J}_0(\Y)$. This provides a useful method to defined mass in this context. There is no complete agreement as to which CKVF at infinity should be used to define mass. See for instance the $n=4$ cases in  \cite{papasken05} and \cite{Gibb05} or higher dimensional cases in \cite{Das00}. From our analysis, in the  Kerr-anti de Sitter case the
    boundary data itself singles out
    a  privileged CKVF, and it is most natural to use this CKVF to define the mass.
    It turns out that this CKVF agrees with the choice made in
    \cite{papasken05} for completely different reasons. It would be worth to investigate whether there is a deeper reason for this, perhaps in the context of holography.

 \appendix
 
 \section{Derivation of Weyl tensor formulae.}\label{appA}
 
 {In this Appendix we give proofs for Lemmas \ref{lemmaCt} and \ref{lemmaWeyls}. Before doing so, we list well-known identities, that we shall require in both proofs, relating the geometry of any two metrics $g$ and $\tilg$ (not necessarily conformal to each other for the moment). 
Our convention for the Riemann tensor is }
\begin{equation*}
 {R^\mu}_{\alpha \nu \beta}X_\mu = -\nabla_\nu \nabla_\beta X_\alpha + \nabla_\beta \nabla_\nu X_\alpha.
\end{equation*}
The general formula that relates the two Riemann tensors ${R^\mu}_{\alpha\nu\beta}$ and ${{\widetilde{R}}^\mu}_{~~\alpha\nu\beta}$ of $g$ and $\tilg$ respectively is (see e.g. \cite{Kroonbook}):
\begin{equation}\label{eqdiffriems}
 {R^\mu}_{\alpha\nu\beta} - {{\widetilde{R}}^\mu}_{~~\alpha\nu\beta} = 2 \nabla_{[\nu} {Q^\mu}_{\beta] \alpha} 
 + 
 2 {Q^\sigma}_{[\nu | \alpha|}{{Q^\mu}_{\beta] \sigma}}
\end{equation}
where $Q = \nabla - \widetilde \nabla$ is the difference of connections tensor
\begin{equation}\label{eqQtensdef}
 {Q^\mu}_{\alpha \beta} 
 := \frac{1}{2} \tilg^{\mu \nu}(\nabla_\nu \tilg_{\alpha \beta} - \nabla_\alpha \tilg_{\beta \nu} - \nabla_\beta \tilg_{\alpha \nu}).
\end{equation}
If they are conformally related $g = \Om^2 \tilg$, then $Q$ reads
\begin{equation}\label{eqQtens}
  {Q^\mu}_{\alpha \beta} = \frac{1}{\Om} \lr{T_\alpha {\delta^\mu}_\beta  + T_\beta {\delta^\mu}_\alpha - T^\mu g_{\alpha \beta} },\quad\quad T_\mu := \nabla_\mu \Om, \quad T^\mu:=g^{\mu \nu} T_\nu.
\end{equation}

 \subsection{Proof of Lemma \ref{lemmaCt}}\label{appelec}

 {In this subsection we prove Lemma \ref{lemmaCt}. Let $\tilg$ me a conformally extendable $(n+1)$-dimensional metric and let also $g = \Om^2 \tilg$ be a geodesic conformal extension. We define the following contraction of the Riemann tensor with $T$}
\begin{equation}\label{eqRTdef}
 \lr{R_T}_{\alpha \beta} := {R}_{\mu\alpha \nu \beta} T^\mu T^\nu 
\end{equation}
and denote 
\begin{equation}\label{eqRTdefs}
  A_{\alpha \beta}:= \nabla_\alpha T_\beta,
  \quad 
  A^2_{\alpha\beta} :=\nabla_\alpha T^\mu \nabla_\mu T_\beta.
\end{equation}
Observe that $A$ is symmetric. 
Since $T$ is geodesic, we have
\begin{align}
\lr{R_T}_{\alpha \beta} & =  T^\nu (-\nabla_\nu \nabla_\beta T_\alpha +  \nabla_\beta \nabla_\nu T_\alpha) \\ 
& = -\nabla_T \nabla_\beta T_\alpha +  \nabla_\beta \nabla_T T_\alpha -   \nabla_\beta T^\nu \nabla_\nu T_\alpha  =-\nabla_T A_{\alpha \beta}- A^2_{\alpha\beta} \label{eqRTexp}
\end{align}
Using expressions \eqref{eqdiffriems} and \eqref{eqQtens} we calculate the difference of tensors $R_T$ and $\widetilde R_T$
defined analogously for $\tilg$, specifically
\begin{equation}
 \lr{R_T}_{\alpha \beta} := {\widetilde R}_{\mu\alpha \nu \beta} T^\mu T^\nu .
\end{equation}
 First
\begin{align}
 2 T_\mu T^\nu \nabla_{[\nu} {Q^\mu}_{\beta] \alpha}  = & T_\mu T^\nu \nabla_\nu\big( \frac{1}{\Om} (T_\alpha {\delta^\mu}_\beta + T_\beta {\delta^\mu}_\alpha - g_{\alpha \beta} T^\mu)\big) \\ & -  T_\mu T^\nu \nabla_\beta\big( \frac{1}{\Om}(T_\nu {\delta^\mu}_\alpha + T_\alpha {\delta^\mu}_\nu - g_{\alpha \nu} T^\mu)\big) \\ 
  = & -\frac{1}{\Om^2}T_\mu T^\nu T_\nu (T_\alpha {\delta^\mu}_\beta + T_\beta {\delta^\mu}_\alpha - g_{\alpha \beta} T^\mu) \\ & +\frac{1}{\Om^2}T_\mu T^\nu T_\beta (T_\nu {\delta^\mu}_\alpha + T_\alpha {\delta^\mu}_\nu  - g_{\alpha \nu} T^\mu)
 -\frac{T_\mu T^\mu}{\Om} \nabla_\beta T_\alpha \\
= & \frac{\lambda}{\Om^2}(2 T_\alpha T_\beta + \lambda g_{\alpha \beta})  -\frac{\lambda}{\Om^2} T_\alpha T_\beta +  \frac{\lambda}{\Om}  \nabla_\beta T_\alpha = \frac{\lambda}{\Om} \nabla_\alpha T_\beta +\frac{\lambda}{\Om^2} \lr{T_\alpha T_\beta + \lambda g_{\alpha \beta}} 
  \end{align}
  and second
  \begin{align}
  2 T_\mu T^\nu {Q^\sigma}_{[\nu | \alpha|}{{Q^\mu}_{\beta] \sigma}} & =   \frac{T_\mu T^\nu}{\Om^2} (T_\nu {\delta^\sigma}_\alpha + T_\alpha {\delta^\sigma}_\nu - g_{\alpha \nu} T^\sigma)
  (T_\beta {\delta^\mu}_\sigma + T_\sigma {\delta^\mu}_\beta - g_{\beta \sigma} T^\mu) \\
  & - \frac{T_\mu T^\nu}{\Om^2} (T_\beta {\delta^\sigma}_\alpha +T_\alpha {\delta^\sigma}_\beta  - g_{\alpha \beta} T^\sigma)
  (T_\nu {\delta^\mu}_\sigma + T_\sigma {\delta^\mu}_\nu - g_{\nu \sigma} T^\mu) \\ 
  & = -\frac{\lambda}{\Om^2} ( 2 T_{\alpha}T_{\beta} + \lambda g_{\alpha \beta}) +  \frac{\lambda}{\Om^2} ( 2 T_{\alpha} T_{ \beta} + \lambda g_{\alpha \beta})   = 0 .
\end{align} 
Recalling $\nabla_\alpha T_\beta = A_{\alpha \beta}$, we have that \eqref{eqdiffriems} yields
\begin{equation}\label{eqdiffriemsconf} 
 ({R_T})_{\alpha \beta }- \Om^2 ( {\widetilde R_T})_{\alpha \beta} = \frac{\lambda}{\Om} A_{\alpha\beta} +\frac{\lambda}{\Om^2} \lr{T_\alpha T_\beta + \lambda g_{\alpha \beta}}.
\end{equation}
Assume now that $\tilg$ is an ACC metric in normal form w.r.t. to the boundary metric. Thus, from \eqref{eqnormform}, in Gaussian coordinates
 \begin{equation}
  T^\alpha \partial_\alpha = -\lambda  \partial_\Om.
 \end{equation}
Also, $A$, which is, up to a constant factor, the second fundamental form of the leaves $\Sigma_\Om = \lrbrace{\Om = const.}$, is
\begin{equation}\label{eqAder}
 A_{\alpha\beta} = \nabla_\alpha T_\beta = -\Gamma^0_{\alpha \beta} = \frac{g^{00}}{2}\partial_\Om g_{\alpha \beta} = -\frac{\lambda}{2} \partial_\Om g_{\alpha\beta}
\end{equation}
and its covariant derivative w.r.t. $T$ 
\begin{equation*}
 \nabla_T A_{\alpha \beta} = -\lambda \partial_{\Om}A_{\alpha \beta} + \lambda \lr{\Gamma_{\alpha 0}^\mu A_{\mu \beta} + \Gamma_{\beta 0}^\mu A_{\alpha \mu}},
\end{equation*}
 with
\begin{equation*}
 \Gamma_{\alpha 0}^\mu = \frac{1}{2} g^{\mu \nu} \lr{\partial_{\alpha} g_{0 \nu}  + \partial_\Om g_{\alpha \nu}  - \partial_{\nu} g_{0 \alpha}} = \frac{1}{2} g^{\mu \nu} \partial_\Om g_{ \alpha \nu} = -\frac{1}{ \lambda} {A^\mu}_\alpha,
\end{equation*}
so in consequence
\begin{equation}\label{eqcovTA}
 \nabla_T A = - \lambda \partial_\Om A - 2 A^2.
\end{equation}
Inserting  \eqref{eqRTexp} and \eqref{eqcovTA} in equation \eqref{eqdiffriemsconf} yields
\begin{equation}\label{eqelecriem} 
 \Om^2 ( {\widetilde R_T})_{\alpha \beta} =\lambda \partial_\Om A_{\alpha \beta} + A^2_{\alpha \beta} -  \frac{\lambda}{\Om} A_{\alpha\beta} -\frac{\lambda}{\Om^2} \lr{T_\alpha T_\beta + \lambda g_{\alpha \beta}}. 
\end{equation}
  If furthermore, we assume that $\tilg$ is Einstein with cosmological constant $\Lambda \neq 0$
\begin{equation*}
 \widetilde R_{\mu \alpha \nu \beta} = \widetilde C_{\mu \alpha \nu \beta} + 2 \lambda \widetilde g_{\mu[\nu}\widetilde g_{\beta]\alpha},
\end{equation*}
 we can relate $\widetilde R_T$ to the  $T$-electric part of the Weyl tensor \eqref{eqTelec} by 
\begin{equation}\label{eqelecweylriem}
  (\widetilde R_T)_{\alpha \beta} = \frac{\lr{C_{T}}_{\alpha \beta}}{ \Om^2} - \lambda \lr{\frac{\lambda g_{\alpha \beta} + T_\alpha T_\beta}{\Om^4}}.
\end{equation}
 Combining \eqref{eqelecriem} and \eqref{eqelecweylriem} gives
\begin{equation}\label{eqelecriem2} 
  ({ C_T})_{\alpha \beta} =\lambda \partial_\Om A_{\alpha \beta} + A^2_{\alpha \beta} -  \frac{\lambda}{\Om} A_{\alpha\beta} 
\end{equation} 
  which after writting $A_{\alpha \beta}$ in terms of the metric with \eqref{eqAder} yields the following result: 
   \begin{equation}
 \lr{C_{T}}_{ij} = \frac{\lambda^2}{2} \lr{\frac{1}{2} \partial_{\Om} g_{ik} g^{kl} \partial_{\Om} g_{lj}+  \frac{1}{\Om} \partial_\Om g_{ij} - \partial^{2}_\Om g_{ij}}.
\end{equation}
  
\subsection{Proof of Lemma \ref{lemmaWeyls}}\label{appdomweyl}

{We now give a proof for Lemma \ref{lemmaWeyls}. 
%
%
%
 Let two $(n+1)$-dimensional metrics $g$ and $\hat g$ be related by the formula \eqref{eqmetsghatg}. First notice that the inverse metrics $g^{-1}$ and $\hat g^{-1}$ must be related by a similar formula
\begin{equation}\label{eqmetinvghatg}
 g^{-1} = \hat g^{-1} + \Om^m l,
\end{equation}
for a contravariant two-tensor $l$  (also $C^2$ near $\{\Om = 0\}$, as well as $g^{-1}, \hat g^{-1}$), because the presence of any term of order $\Om^{m'},~m'<m$,  would imply terms of order $\Om^{m'}$ in $g^{-1} g$ which could not be cancelled. When using indices, we will omit the $^{-1}$ in the metrics and write upper indices. Also, indices in objects with hats are moved with the metric $\hat{g}$ and its inverse and indices of unhatted tensors are moved with $g$. 

Recall the definition of the Weyl tensor
\begin{equation}\label{eqWeyl}
 {C^\mu}_{\nu \alpha \beta} = {R^\mu}_{\nu\alpha \beta} +{\A^\mu}_{\nu \alpha \beta} \quad\quad\mbox{with}\quad \quad {\A^\mu}_{\nu \alpha \beta} := - \frac{2}{n-1}({\delta^\mu}_{[\alpha}R_{\beta]\nu} - g_{\nu[\alpha}{R^\mu}_{|\beta}) + \frac{2 R}{n(n-1)}  {\delta^\mu}_{[\alpha}g_{\beta] \nu}.
\end{equation}
Using the relation of Riemann tensors \eqref{eqdiffriems} for $g$ and $\hat g$ and \eqref{eqWeyl} we find 
\begin{equation}
  {C^\mu}_{\nu \alpha \beta} =  {{\hat C}^\mu}_{~~\nu \alpha \beta} + {\B^\mu}_{\nu \alpha \beta} + {\A^\mu}_{\nu \alpha \beta} - {\hat \A^\mu}_{~~\nu \alpha \beta}\quad\quad\mbox{with}\quad\quad{\B^\mu}_{\nu \alpha \beta} :=  2 \nabla_{[\alpha} {Q^\mu}_{\beta] \nu} 
 + 
 {Q^\sigma}_{[\alpha | \nu|}{{Q^\mu}_{\beta] \sigma}}
\end{equation}
where $Q$ is the difference of connections tensor \eqref{eqQtensdef}. We also define $\B_{\alpha \beta} = {\B^\mu}_{\alpha \mu \beta}$ and $\B = g^{\alpha \beta}\B_{\alpha \beta}$ so that
\begin{equation}
 R_{\alpha \beta} - \hat R_{\alpha \beta} = \B_{\alpha \beta},\quad\quad 
 {R^\mu}_\beta - {\hat R^\mu}_{~\beta} = {\B^\mu}_\beta + \Om^m l^{\mu \alpha} \hat R_{\alpha \beta},
\quad\quad R- \hat R = \B + \Om^m l^{\mu \beta} \hat R_{\mu \beta}.
\end{equation}
With these definitions we expand ${\A^\mu}_{\nu \alpha \beta}$
\begin{align}
 {\A^\mu}_{\nu \alpha \beta} & = -\frac{2}{n-1}\lr{{\delta^\mu}_{[\alpha}{\hat R_{\beta]\nu}} - \hat g_{\nu[\alpha} \hat R^\mu_{~~\beta]}} + \frac{2}{n(n-1)} \hat R {\delta^\mu}_{[\alpha}\hat g_{\beta]\nu}\\
 & -\frac{2}{n-1} \lr{{\delta^\mu}_{[\alpha} \B_{\beta] \nu} -\hat g_{\nu [\alpha} {B^\mu}_{\beta]} - \Om^m\lr{\hat g_{\nu [\alpha}  \hat R_{\beta] \sigma} l^{\mu \sigma} + q_{\nu [\alpha} {\hat R^\mu}_{~\beta]} + q_{\nu [\alpha} {B^\mu}_{\beta]} } - \Om^{2m}q_{\nu [\alpha}{\hat R}_{\beta]\sigma}l^{\sigma \mu}   }\\
 & + \frac{2B}{n(n-1)} {\delta^\mu}_{[\alpha}\hat g_{\beta]\nu} + \frac{2 \Om^m}{n (n-1)}\lr{l^{\lambda \sigma} \hat R_{\lambda \sigma}{\delta^\mu}_{[\alpha}\hat g_{\beta]\nu} +(\hat R + \B){\delta^\mu}_{[\alpha}q_{\beta]\nu}} + \frac{2 \Om^{2m}}{n (n-1)}  l^{\lambda \sigma} \hat R_{\lambda \sigma}{\delta^\mu}_{[\alpha}q_{\beta]\nu},
\end{align}
so defining
\begin{align}
 {G^\mu}_{\nu \alpha \beta} & : = -\frac{2}{n-1} \lr{{\delta^\mu}_{[\alpha} \B_{\beta] \nu} -\hat g_{\nu [\alpha} {B^\mu}_{\beta]} - \Om^m\lr{\hat g_{\nu [\alpha}  \hat R_{\beta] \sigma} l^{\mu \sigma} + q_{\nu [\alpha} {\hat R^\mu}_{~\beta]} + q_{\nu [\alpha} {B^\mu}_{\beta]} } - \Om^{2m}q_{\nu [\alpha}{\hat R}_{\beta]\sigma}l^{\sigma \mu}   }\\
& + \frac{2B}{n(n-1)} {\delta^\mu}_{[\alpha}\hat g_{\beta]\nu} + \frac{2 \Om^m}{n (n-1)}\lr{l^{\lambda \sigma} \hat R_{\lambda \sigma}{\delta^\mu}_{[\alpha}\hat g_{\beta]\nu} +(\hat R + \B){\delta^\mu}_{[\alpha}q_{\beta]\nu}} + \frac{2 \Om^{2m}}{n (n-1)}  l^{\lambda \sigma} \hat R_{\lambda \sigma}{\delta^\mu}_{[\alpha}q_{\beta]\nu}.
\end{align}
gives 
\begin{equation}
{\A^\mu}_{\nu \alpha \beta} = {{\hat \A}^\mu}_{~~\nu \alpha \beta} +{G^\mu}_{\nu \alpha \beta},
\end{equation}
from which
\begin{equation}\label{eqintweyl}
  {C^\mu}_{\nu \alpha \beta} =  {{\hat C}^\mu}_{~~\nu \alpha \beta} + {\B^\mu}_{\nu \alpha \beta} + {G^\mu}_{\nu \alpha \beta}.
\end{equation}
We now analyze the behaviour near $\{\Om = 0 \}$ of the tensors $\B$ and $G$. 
Using formula \eqref{eqQtensdef} (with $\tilg \rightarrow \hat g = g - \Om^m q$ ) we have
\begin{align}
 \hat Q_{\nu \alpha \beta}  := \hat g_{\mu \nu}{Q^\mu}_{\alpha \beta} 
 & =  -F \frac{m}{2} \Om^{m-1}   \lr{\nor_\nu q_{\alpha \beta} - \nor_\alpha q_{\beta \nu} - \nor_\beta q_{\alpha \nu}}  - \frac{\Om^m}{2} \lr{\nabla_\nu q_{\alpha \beta} - \nabla_\alpha q_{\beta \nu} - \nabla_\beta q_{\alpha \nu}} \\
 & =  -  F \frac{m}{2} \Om^{m-1}   \lr{\nor_\nu q_{\alpha \beta} - \nor_\alpha q_{\beta \nu} - \nor_\beta q_{\alpha \nu}} + O(\Om^{m}) = O(\Om^{m-1}).
\end{align}
On the other hand 
\begin{align}
 \nabla_\mu \hat Q_{\nu \alpha \beta} & = - F^2 \frac{m(m-1)}{2} \Om^{m-2}  u_\mu \lr{\nor_\nu q_{\alpha \beta} - \nor_\alpha q_{\beta \nu} - \nor_\beta q_{\alpha \nu}} - \frac{\Om^m}{2}\nabla_\mu \lr{\nabla_\nu q_{\alpha \beta} - \nabla_\alpha q_{\beta \nu} - \nabla_\beta q_{\alpha \nu}} \\
 & - m \frac{\Om^{m-1}}{2} \lr{ \nabla_\mu \big(F  \lr{\nor_\nu q_{\alpha \beta} - \nor_\alpha q_{\beta \nu} - \nor_\beta q_{\alpha \nu}}\big) +  F \nor_\mu \lr{\nabla_\nu q_{\alpha \beta} - \nabla_\alpha q_{\beta \nu} - \nabla_\beta q_{\alpha \nu}}}\\
 & =  - F^2 \frac{m(m-1)}{2} \Om^{m-2} u_\mu   \lr{\nor_\nu q_{\alpha \beta} - \nor_\alpha q_{\beta \nu} - \nor_\beta q_{\alpha \nu}} + O(\Om^{m-1})
\end{align}
thus
\begin{align}
 \nabla_\mu {Q^\nu}_{\alpha \beta} & =  \nabla_\mu (\hat g^{\sigma \nu}{\hat Q}_{\sigma \alpha \beta}) =  \nabla_\mu \big( (g^{\sigma \nu} - \Om^m l^{\sigma \nu}){\hat Q}_{\sigma \alpha \beta}\big) = g^{\sigma \nu} \nabla_{\mu} \hat Q_{\sigma \alpha \beta} + O(\Om^{m-1})\\
 & = -F^2 \frac{m (m-1)}{2}\Om^{m-2} u_\mu \lr{u^\nu q_{\alpha \beta} - u_\alpha {q^\nu}_\beta - u_\beta {q^\nu}_\alpha} + O(\Om^{m-1}).
\end{align}
Therefore, the leading order terms of $\B$ are
\begin{align}
 {\B^\mu}_{\nu \alpha \beta} = 2 \nabla_{[\alpha}{Q^\mu}_{\beta]\nu} + O(\Om^{2m-2}) =  - m(m-1) F^2 \Om^{m-2} \big(\nor^\mu \nor_{[\alpha}q_{\beta] \nu} + {q^\mu}_{[\alpha} \nor_{\beta]} \nor_\nu\big)+ O(\Om^{m-1}) = O(\Om^{m-2}). 
\end{align}
 
Next, we calculate the leading order terms of $G$. Notice that since $\hat g$ is $C^2$ at $\{ \Om = 0\}$, its Ricci tensor is well-defined.  Moreover $\B$ and all its traces are $O(\Om^{m-2})$. Thus
\begin{equation}
 {G^\mu}_{\nu \alpha \beta} = -\frac{2}{n-1}\big( {\delta^\mu}_{[\alpha}\B_{\beta]\nu} - \hat g_{\nu[\alpha} {\B^\mu}_{\beta]} \big) + \frac{2 \B}{n(n-1)} {\delta^\mu}_{[\alpha}\hat g_{\beta]\nu} + O(\Om^m).
\end{equation}

If $\nor$ is non-null, i.e. $\epsilon \neq 0$, it is useful to decompose $q$ in terms parallel and orthogonal to $\nor$, i.e.
\begin{equation}
 q_{\alpha\beta} = U \nor_\alpha \nor_\beta + 2 \nor_{(\alpha} V_{\beta)} + t_{\alpha \beta},\quad\quad\mbox{with}\quad \nor^\mu V_\mu = 0,~\nor^\mu t_{\mu\nu} = 0.
\end{equation}
Similarly, the following decomposition of the metric holds
\begin{equation}\label{eqperplong} 
 g_{\alpha\beta} = \epsilon \nor_\alpha \nor_\beta + h_{\alpha \beta}
\end{equation}
which defines $h_{\alpha\beta}$ as the projector orthogonal to $\nor$. In terms of these quantities
\begin{equation}\label{eqBdec}
{\B^\mu}_{\nu \alpha \beta} = - m(m-1) \Om^{m-2} F^2 \big( 
\nor^\mu \nor_{[\alpha}t_{\beta] \nu} + {t^\mu}_{[\alpha} \nor_{\beta]}\nor_\nu
\big) + O(\Om^{m-1})
\end{equation} 
and 
\begin{align}
 B_{\beta \nu} & =  {B^\mu}_{\beta \mu \nu} = - \frac{1}{2} m(m-1) \Om^{m-2}F^2(\epsilon t_{\beta \nu} + t u_\beta u_\nu) + O(\Om^{m-1})\\
 B & =  {B^\mu}_{\mu} = - \frac{1}{2} m(m-1) \Om^{m-2} F^2 (2 \epsilon t) + O(\Om^{m-1})
\end{align}
where $t = g^{\alpha\beta}t_{\alpha\beta} = h^{\alpha\beta} t_{\alpha\beta}$. 
In consequence, 
\begin{align}
 {G^\mu}_{\nu \alpha \beta} & = -  m(m-1) \Om^{m-2} F^2 \times \Big( 
  \frac{-1}{n-1} \big(\epsilon {\delta^\mu}_{[\alpha}t_{\beta]\nu} + t {\delta^\mu}_{[\alpha}\nor_{\beta]} \nor_\nu - \epsilon \hat g_{\nu [\alpha}{t^\mu}_{\beta ]}-  t \nor^\mu \hat g_{\nu [\alpha}\nor_{\beta]} \big)\\ 
  & + \frac{2 \epsilon t}{n(n-1)} {\delta^\mu}_{[\alpha}\hat g_{\beta] \nu}
 \Big) + O(\Om^{m-1}) \label{eqDdec}
\end{align}
From \eqref{eqperplong} one has 
\begin{equation}
\hat g_{\alpha \beta}= \epsilon \nor_\alpha \nor_\beta + h_{\alpha \beta} + O(\Om^m),\quad\quad {\delta^\alpha}_\beta = \epsilon u^\alpha u_\beta + {h^\alpha}_\beta,
\end{equation}
so that \eqref{eqDdec} reads
\begin{align}
 {G^\mu}_{\nu \alpha \beta} & = - m(m-1) \Om^{m-2} F^2 \times \Big(
  \frac{-1}{n-1} \big(u^\mu u_{[\alpha} t_{\beta]\nu} + \epsilon {h^\mu}_{[\alpha}t_{\beta] \nu} +t {h^\mu}_{[\alpha} u_{\beta]} u_\nu + {t^\mu}_{[\alpha}u_{\beta]} u_\nu  \\ 
  & + \epsilon {t^\mu}_{[\alpha} h_{\beta] \nu} + t u^\mu u_{[\alpha}h_{\beta]\nu}\big) 
   + \frac{2  t}{n(n-1)} \lr{u^\mu u_{[\alpha} h_{\beta] \nu} + {h^\mu}_{[\alpha} u_{\beta]} u_\nu + \epsilon {h^\mu}_{[\alpha} h_{\beta]\nu} }
 \Big) + O(\Om^{m-1})    \\
 & = -  m(m-1) \Om^{m-2} F^2 \times \Big( 
  \frac{-1}{n-1} \big(u^\mu u_{[\alpha} t_{\beta]\nu} + {t^\mu}_{[\alpha}u_{\beta]} u_\nu  \big) - t \frac{n-2 }{n(n-1)} \lr{u^\mu u_{[\alpha} h_{\beta] \nu} + {h^\mu}_{[\alpha} u_{\beta]} u_\nu }\\ 
  & - \frac{\epsilon}{n-1}\big( {h^\mu}_{[\alpha}t_{\beta] \nu} -  \frac{t}{n}{h^\mu}_{[\alpha} h_{\beta]\nu} +  {t^\mu}_{[\alpha} h_{\beta] \nu} -  \frac{t}{n}{h^\mu}_{[\alpha} h_{\beta]\nu} \big)    
 \Big) + O(\Om^{m-1}).   \label{eqDdec2} 
\end{align}
 Denote the traceless part of $t_{\alpha \beta}$  by
\begin{equation}
 \tlt_{\alpha \beta} = t_{\alpha \beta} - \frac{t}{n}h_{\alpha\beta}.
\end{equation}
Also, notice that the lower order terms of all expression are $O(\Om^{m-1}) = o(\Om^{m-2})$ for $m\geq 2$. Hence, combining \eqref{eqintweyl}, \eqref{eqBdec} and \eqref{eqDdec2} gives the following result 
\begin{equation}
 {C^\mu}_{\nu\alpha\beta } = {\hat C^\mu}_{~~\nu\alpha\beta } -  K_m(\Om) \frac{n-2}{n-1}(\nor^\mu \nor_{[\alpha} \tlt_{\beta]\nu} + {\tlt^\mu}_{~~ [\alpha}\nor_{\beta]}\nor_\nu) + \frac{\epsilon K_m(\Om)}{n-1}({h^\mu}_{[\alpha} \tlt_{\beta]\nu} + \tlt^\mu_{~~ [\alpha} h_{\beta]\nu}) + o(\Om^{m-2}) .
\end{equation}

%
%

\section*{Acknowledgements}

The authors acknowledge financial support under the projects
PGC2018-096038-B-I00
(Spanish Ministerio de Ciencia e Innovaci\'on and FEDER ``A way of making Europe'')
and SA096P20 (JCyL). C. Pe\'on-Nieto also acknowledges the Ph.D. grant BES-2016-078094 (Spanish Ministerio de Ciencia e Innovaci\'on).


\end{document}